\newtheorem{theorem}{Theorem}[section]
\newtheorem{definition}{Definition}[section]
\newtheorem{lemma}[theorem]{Lemma}
\newtheorem{corollary}[theorem]{Corollary}
\DeclareMathOperator{\dom}{dom}
\DeclareMathOperator{\VB}{\operatorname{VB}}
\DeclareMathOperator{\LB}{\operatorname{LB}}
\newcommand{\eu}{\mathrm{e}}
\DeclareMathOperator*{\argmin}{\operatorname{argmin}}
\let\vec\relax
\newcommand{\vec}{\mathrm{vec}}
\newcommand{\regret}{\mathrm{Regret}}
\newcommand{\bias}{\mathrm{Bias}}
\newcommand{\volBias}{\mathrm{VolBias}}
\newcommand{\logBias}{\mathrm{LogBias}}
\newcommand{\miss}{\mathrm{Miss}}
\newcommand{\gain}{\mathrm{Gain}}
\newcommand{\transp}{\mathsf{\scriptscriptstyle{T}}}
\newcommand{\hermit}{\ast}
\newcommand{\otimesS}{\otimes_{\text{S}}}
\newcommand{\braketHS}[1]{\braket{ #1 }_{\text{HS}}}
\title{Online Learning Quantum States with the Logarithmic Loss via VB-FTRL}
\author[1]{Wei-Fu~Tseng}
\author[2,3]{Kai-Chun~Chen}
\author[4]{Zi-Hong~Xiao}
\author[1,5,6]{Yen-Huan~Li}
\date{\empty}
\affil[1]{Department of Mathematics, National~Taiwan~University}
\affil[2]{Department of Electrical Engineering, National~Taiwan~University}
\affil[3]{Graduate Institute of Communication Engineering, National Taiwan University}
\affil[4]{School of Medicine, National~Taiwan~University}
\affil[5]{Department of Computer Science and Information Engineering, National~Taiwan~University}
\affil[6]{Center for Quantum Science and Engineering, National~Taiwan~University}
\date{}
\begin{document}

\maketitle

\def\arxiv{1}

\begin{abstract}%
Online learning of quantum states with the logarithmic loss (LL-OLQS) is a quantum generalization of online portfolio selection (OPS), a classic open problem in online learning for over three decades. 
This problem also emerges in designing stochastic optimization algorithms for maximum-likelihood quantum state tomography. 
Recently, \citet[arXiv:2209.13932]{Jezequel2022} proposed the VB-FTRL algorithm, the first regret-optimal algorithm for OPS with moderate computational complexity. 
In this paper, we generalize VB-FTRL for LL-OLQS. 
Let $d$ denote the dimension and $T$ the number of rounds. 
The generalized algorithm achieves a regret rate of $O ( d^2 \log ( d + T ) )$ for LL-OLQS. 
Each iteration of the algorithm consists of solving a semidefinite program that can be implemented in polynomial time by, for example, cutting-plane methods. 
For comparison, the best-known regret rate for LL-OLQS is currently $O ( d^2 \log T )$, achieved by an exponential weight method. 
However, no explicit implementation is available for the exponential weight method for LL-OLQS.
To facilitate the generalization, we introduce the notion of VB-convexity. 
VB-convexity is a sufficient condition for the volumetric barrier associated with any function to be convex and is of independent interest.
\end{abstract}

\ifx\arxiv\undefined
\begin{keywords}%
  Online learning of quantum states, VB-FTRL, logarithmic regret, volumetric barrier. 
\end{keywords}
\fi

\section{Introduction} \label{sec_intro}
A quantum state is characterized by a \emph{density matrix}, which is a Hermitian positive semi-definite matrix with unit trace. 
Denote by $\mathcal{D}_d$ the set of density matrices in $\mathbb{C}^{d \times d}$. 
Consider the following sequential game between two strategic players, say Physicist and Reality: 
\begin{itemize}
\item There are in total $T$ rounds. 
\item In the $t$-th round, 
	\begin{enumerate}
	\item First, Physicist announces a density matrix $\rho_t \in \mathcal{D}_d$; 
	\item then, Reality announces a Hermitian positive semi-definite matrix $A_t \in \mathbb{C}^{d \times d}$; 
	\item finally, Physicist suffers a loss of value $f_t ( \rho_t )$, where the loss function $f_t$ is given by $f_t ( \rho ) \coloneqq -\log \tr ( A_t \rho )$. 
	\end{enumerate}
\item The goal of Physicist is to attain a regret that is as small as possible against all possible strategies of Reality. The regret is given by
\[
\regret_T \coloneqq \sum_{t = 1}^T f_t ( \rho_t ) - \min_{\rho \in \mathcal{D}_d} \sum_{t = 1}^T f_t ( \rho ) . 
\]
\end{itemize}
The game arises in the design of efficient stochastic optimization algorithms for maximum-likelihood quantum state tomography and approximating the permanents of positive semi-definite matrices \citep{Lin2021b,Tsai2024}. 
It is also an instance of the problem of online learning of quantum states proposed by \citet{Aaronson2018,Aaronson2019}. 
For convenience, we will refer to the game as ``online learning of quantum states with the logarithmic loss,'' abbreviated as LL-OLQS. 

In their seminal work, \citet{Aaronson2018,Aaronson2019} considered the absolute loss and suggested studying online learning of quantum states with other loss functions as a direction for future research.
Assuming that the loss functions are convex and Lipschitz continuous, existing results in the literature have demonstrated that standard online convex optimization algorithms and their regret guarantees readily apply to the problem of online learning of quantum states \citep{Aaronson2018,Aaronson2019,Bansal2024,Chen2022a,Yang2020}\footnote{Notice that quantum density matrices are complex; this introduces subtle issues in analyzing online convex optimization algorithms. Most of these works base their novelty primarily on handling complex variables. We show in Appendix \ref{app_calculus} that such additional efforts are not necessary.}. 
Unfortunately, it is easily verified that the logarithmic losses in LL-OLQS violate the Lipschitz continuity assumption, so standard results in online convex optimization \citep{Hazan2023,Orabona2023} do not directly apply.

To illustrate the challenges of addressing LL-OLQS, consider its \emph{classical} counterpart, where all matrices involved share a common eigenbasis. 
In this case, rather than working directly with the matrices, it suffices to consider the vectors of their eigenvalues. 
The set $\mathcal{D}_d$ can be replaced with the probability simplex $\Delta_d$ in $\mathbb{R}^d$.
The outputs of Physicist and Reality can then be replaced with $x_t \in \Delta_d$ and $a_t \in \interval[open right]{0}{+\infty}^d$, respectively. 
The loss functions become $f_t ( x ) \coloneqq - \log \braket{ a_t, x_t }$.
As noted by \citet{Lin2021b} and \citet{Zimmert2022}, this classical formulation corresponds to the problem of online portfolio selection.

Designing an online portfolio selection algorithm---optimal in both regret and computational complexity---has remained a classic open problem for over 30 years. 
The interested reader is referred to, for example, the discussions by \citet{vanErven2020} and \citet{Jezequel2022}, along with the references therein.
While the optimal regret for online portfolio selection is known to be $O ( d \log T )$, the sole algorithm known to achieve this optimal regret, Universal Portfolio \citep{Cover1991,Cover1996}, suffers from a high per-round computational complexity of $O ( d^4 T^{13} )$ \citep{Kalai2002}. 
On the other hand, the lowest per-round computational complexity achievable, which is $O (d)$, is met by several existing algorithms \citep{Helmbold1998,Nesterov2011,Orseau2017,Tsai2023,Tsai2023a}; 
however, these algorithms do not achieve logarithmic (in $T$) regret rates.

Due to the non-commutativity nature of the quantum setup, LL-OLQS is even more challenging than online portfolio selection. 
For LL-OLQS, the optimal regret has remained unclear. 
The best-known regret rate for LL-OLQS is $O ( d^2 \log T )$, achieved using a direct extension of Universal Portfolio \citep{Zimmert2022}. 
However, this algorithm requires evaluating a high-dimensional integral in each round, and no explicit implementation is currently available. 
The Schrodinger's BISONS algorithm achieves a regret rate of $O ( d^3 \log^2 (T) )$ with a per-round computational complexity of $\tilde{O} ( \operatorname{poly} (d) )$ \citep{Zimmert2022}. 
The Q-Soft-Bayes and Q-LB-OMD algorithms achieve a regret rate of $\tilde{O} ( \sqrt{ d T } )$ with a significantly lower per-round computational complexity of $\tilde{O} ( d^3 )$ \citep{Lin2021b,Tsai2023}. 
These three trade-offs between regret and computational efficiency represent the current ``regret-efficiency frontier'' of existing algorithms.
In other words, achieving a lower regret rate necessitates a higher computational complexity, and vice versa.

Recently, \citet{Jezequel2022} proposed an efficient and regret-optimal algorithm for OPS, named VB-FTRL. 
This algorithm achieves a regret rate of $O ( d \log ( d + T ) )$ with a per-round computational complexity of $\tilde{O} ( d^2 ( T + d ) )$. 
Compared to the regret-optimal Universal portfolio algorithm, VB-FTRL achieves the same regret rate when $T > d$ but with significantly lower per-round computational complexity. 
This prompts the question: Can VB-FTRL be generalized for LL-OLQS?

In this \replaced[id=Wei-Fu]{paper}{note}, we prove\deleted{d} the following theorem. 

\begin{theorem} \label{thm_main}
There is an algorithm that achieves a regret rate of $O ( d^2 \log ( d + T ) )$ for LL-OLQS. 
Each iteration of the algorithm consists of solving a semidefinite program. 
\end{theorem}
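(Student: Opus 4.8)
The plan is to lift VB-FTRL from the probability simplex to the spectrahedron $\mathcal{D}_d$. I would run Follow-The-Regularized-Leader over the real vector space of $d \times d$ Hermitian matrices equipped with the inner product $\braketHS{\cdot, \cdot}$; this space has real dimension $d^2$, which is the source of the $d^2$ appearing in the target rate. At round $t$ the algorithm plays $\rho_t = \argmin_{\rho \in \mathcal{D}_d} \sum_{s=1}^{t-1} f_s(\rho) + R_t(\rho)$, where the regularizer $R_t$ combines a logarithmic barrier $\LB$ for the constraint set $\mathcal{D}_d$ with a volumetric-barrier term $\VB$. Since each loss $f_s(\rho) = -\log\tr(A_s\rho)$ is the negative logarithm of a linear functional of $\rho$, its Hessian in this geometry is the rank-one operator $\vec(A_s)\vec(A_s)^{\hermit}/\tr(A_s\rho)^2$, so the natural quantum volumetric barrier $\VB(\rho) = \tfrac12 \log\det\bigl( \sum_{s} \vec(A_s)\vec(A_s)^{\hermit}/\tr(A_s\rho)^2 \bigr)$ mirrors the classical construction, except that the barrier for $\mathcal{D}_d$ contributes the genuinely matrix-valued Hessian $\rho^{-1} \otimesS \rho^{-1}$.

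The key step, and the one I expect to be the main obstacle, is showing that $R_t$ is convex, so that each FTRL subproblem is a convex program. The volumetric barrier is not convex for an arbitrary family of functions, so I would isolate the precise sufficient condition---VB-convexity---by differentiating $\tfrac12\log\det M(\rho)$ twice, where $M(\rho)$ is the weighted sum of Hessians above. The second derivative produces the familiar competition between a $\tr(M^{-1}M'')$ term and a $\tr(M^{-1}M'M^{-1}M')$ term, and VB-convexity is exactly the semidefinite inequality that makes their combination positive. I would then verify that the quantum log-losses together with the $\rho^{-1}\otimesS\rho^{-1}$ barrier satisfy VB-convexity. This is where non-commutativity bites: unlike the separable simplex barrier, the matrix barrier couples the eigenbasis of $\rho$ with the directions $\vec(A_s)$, and I expect to need self-concordance of $-\log\tr(A_s\cdot)$ together with an operator Cauchy--Schwarz / Jensen estimate to close the inequality.

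With convexity established, the regret analysis would follow the self-concordant-FTRL template of \citet{Jezequel2022}, carried out in dimension $d^2$. I would decompose $\regret_T$ into a $\logBias$ term absorbed by $\LB$, a $\volBias$ term absorbed by $\VB$, and residual $\miss$ and $\gain$ terms controlled by per-round stability in the local norm induced by $\nabla^2 R_t$. The increments of the volumetric barrier telescope against $\log\det$ of the accumulated Hessian, whose growth over the $d^2$-dimensional domain contributes the $\log(d+T)$ factor; summing and balancing the barrier weights against $T$ should yield the $O(d^2\log(d+T))$ bound.

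Finally, for implementability I would note that, by the convexity above, each round minimizes a convex objective over the density-matrix spectrahedron, and that both the log-losses and the log-det volumetric barrier are semidefinite-representable through standard epigraph reformulations; hence the subproblem is a semidefinite program solvable in polynomial time, e.g.\ by cutting-plane methods. The crux of the whole argument remains the convexity step of the second paragraph, since it simultaneously enables the SDP formulation and underlies the regret proof.
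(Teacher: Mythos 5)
Your proposal follows essentially the same route as the paper: FTRL over the $d^2$-dimensional real space of Hermitian matrices with a $-\log\det$ regularizer plus the volumetric barrier $\tfrac12\log\det\nabla^2\overline{L}_t$, whose convexity is established by exactly the paper's notion of VB-convexity (the second-derivative competition between $\tr(M^{-1}M'')$ and $\tr(M^{-1}M'M^{-1}M')$, closed by an operator Cauchy--Schwarz/trace inequality), together with the same $\logBias_T/\volBias_T/\gain_t/\miss_t$ regret decomposition and the same cutting-plane/SDP implementation. The one point you gloss over is that the template of \citet{Jezequel2022} cannot be applied verbatim---its affine reparameterization step does not extend to the quantum setting, so the $\miss_t$ bound must be reworked directly via optimality conditions in the local norm, as the paper does following \citet{Tsai2023a}---but your setup, which works in the full matrix space from the start, is consistent with that bypass.
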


Notice that semidefinite programs can be solved in polynomial time by, for example, cutting-plane methods \citep[Chapter 3]{Lee2025}. 
\added{We provide an implementation of the algorithm in Appendix \ref{app_algRunningTime}, which uses a cutting-plane method and has a per-round computational complexity of $O(Td^{8} + d^{10})$.}   
\citet{Jezequel2022} designed a quasi-Newton method to implement VB-FTRL with a per-round computational complexity of $\tilde{O} ( d^2 ( T + d ) )$, significantly faster than cutting-plane methods. 
We leave generalization of the quasi-Newton method for the quantum setup as a future research direction. 

Two challenges emerge in proving Theorem \ref{thm_main}. 
The first challenge is addressing the affine reparameterization step in the regret analyses for both LL-OLQS \citep{Jezequel2022} and its classical counterpart, online portfolio selection \citep{vanErven2020}. 
The affine reparameterization step enables one to directly apply existing results for self-concordant functions \citep{Nesterov2018a}. 
Unfortunately, it does not extend to the quantum setup. 
We adapt the proof strategy of \citet{Tsai2023a} to our analysis to bypass the affine reparameterization step. 

The second challenge lies in verifying convexity of the volumetric barrier (VB) associated with the regularized cumulative loss in our algorithm. 
\citet{Jezequel2022} accomplished a similar verification for the classical case. 
Their explicit calculations, nevertheless, appear to be challenging in the quantum case. 
Generalizations of the VB for semidefinite programming, or the quantum setup from our perspective, have been studied by \citet{Nesterov1994} and \citet{Anstreicher2000}, but their assumptions do not hold in our case. 
To this end, we introduce the notion of \emph{VB-convexity} and prove that the VB associated with any VB-convex function is necessarily convex. 
Then, we verify that the regularized cumulative loss in our algorithm is actually VB-convex. 
The notion of VB-convexity is of independent interest. 


We conclude the introduction with a discussion on the optimality of Theorem \ref{thm_main}. 
Unlike in online portfolio selection, the minimax optimal regret rate for LL-OLQS remains unclear. 
While the Universal Portfolio algorithm is known to be regret-optimal for online portfolio selection, to the best of our knowledge, no optimality guarantee has been proven for the $O(d^2 \log T)$ regret rate achieved by its quantum generalization.
Note that there is a gap of $d$ between the regret rate of the Universal Portfolio, which is $O(d \log T)$, and that of its quantum generalization. 
Therefore, the optimality of the former does not imply the optimality of the latter. 
We suspect that a regret rate of $O(d^2 \log T)$ is minimax optimal for LL-OLQS. 
If so, the regret rate of the proposed algorithm, which will be detailed in Section \ref{app_algRunningTime}, would also be minimax optimal when $T > d$.
Characterizing the minimax regret rate would be an essential next step following this work. 

\subsection{Notation}

As density matrices are complex, defining notions like $\nabla f$, $\nabla^2 f$ can be tricky. 
The reader is referred to Section \ref{app_calculus} for formal definitions and discussions. 

For any complex matrix $M$, we denote its transpose by $M^\transp$ and its conjugate transpose by \replaced{$M^\ast$}{$M^{\mathsf{\scriptscriptstyle{H}}}$}. 
We will use $I$ to represent the identity matrix. 
Let $\mathcal{H}^d$ denote the set of Hermitian matrices in $\mathbb{C}^{d \times d}$.
For any $A, B \in \mathcal{H}^d$, we write $A \geq B$ if the matrix $A - B$ is positive semi-definite. 
We use $A \otimes B$ to denote the Kronecker product of $A$ and $B$. \deleted[id=Wei-Fu]{and $A \otimesS B$ for the ``symmetrized'' Kronecker product of $A$ and $B$} 
\deleted[id=Wei-Fu]{, which is defined as $A \otimesS B \coloneqq ( 1 / 2 ) ( A \otimes B  + B \otimes A )$.} 
\deleted[id=Wei-Fu]{For any complex matrix $M$, we denote by $M^\transp$ its transpose and $M^\hermit$ its conjugate transpose.}

For any matrix $M \in \mathcal{H}^{d}$, we write $\vec ( M )$ for the vectorization of $M$, i.e., the vector formed by stacking the columns of the matrix $M$ on top of one another. 
For any function $\varphi: \mathcal{H}^{d} \subseteq \mathbb{C}^{d\times d} \to \mathbb{R}$, we write $\overline{\varphi}: \mathbb{C}^{d^2} \to \mathbb{R}$ for the function $\overline{\varphi}$ such that $\varphi ( M ) = \overline{\varphi} ( \vec ( M ) )$ for any $M \in \mathcal{H}^{d}$. 
Therefore, for example, $\nabla \overline{\varphi} ( \vec ( M ) )$ is a $d^2$-dimensional vector and $\nabla^2 \overline{\varphi} ( \vec ( M ) )$ is a $d^2 \times d^2$ matrix. 
For convenience, we write $\nabla^{-2} f$ for the inverse of the Hessian of the function $f$. 

For any vector $v \in \mathbb{C}^d$ and Hermitian positive definite matrix $A \in \mathbb{C}^{d \times d}$, we write $\norm{ v }_A$ for the norm defined by $A$, i.e., $\norm{ v }_A \coloneqq \sqrt{ \braket{ v, A v } }$, where $\braket{ v, A v } = v^* A v$ and $v^*$ denotes the conjugate transpose of $v$. 

\section{VB-Convex Functions} \label{sec_vbc}

The notion of a VB was originally defined with respect to a polytope \citep{Vaidya1996}. 
For any polytope, let $\LB ( x )$ denote the associated logarithmic barrier; 
the VB is given by the function 
$( 1 / 2 ) \log \det \nabla^2 \LB ( x )$. 
For convenience of presentation, we will refer to ``the VB associated with a function $f ( x )$'' as the function 
$( 1 / 2 ) \log \det \nabla^2 f ( x )$ 
and denote it by $\VB_f ( x )$ throughout this paper.

To motivate the notion of VB-convexity, let us start by reviewing the relevant part in the analysis of VB-FTRL by \citet{Jezequel2022} for online portfolio selection. 
Online portfolio selection is an online convex optimization problem in which the constraint set is the probability simplex in $\mathbb{R}^d$, and the loss function for the $t$-th round is given by $- \log \braket{ a_t, x }$, where $a_t$ are adversarially chosen entry-wise non-negative vectors.
Consider the function 
\[
\ell_t ( x ) \coloneqq - \sum_{\tau = 1}^t \log \braket{ a_\tau, x } - \sum_{i = 1}^d \log x [i] , 
\]
where $x [i]$ denotes the $i$-th entry of the vector $x$; 
the function is the cumulative loss regularized by the Burg entropy. 
As the probability simplex is a $(d - 1)$-dimensional object, \citet{Jezequel2022} introduced an affine transformation $A: \mathbb{R}^{d - 1} \to \mathbb{R}^d$ and considered the function $\tilde{\ell}_t ( u ) \coloneqq \ell_t ( A u )$. 
The regret analysis of VB-FTRL requires the function $\VB_{\tilde{\ell}_t} ( u )$ to be convex. 
Notice that the function $\tilde{\ell}_t$ is indeed the logarithmic barrier of a $(d - 1)$-dimensional polytope. The convexity of $\VB_{\ell_t} ( x )$ immediately follows from the following lemma \citep{Vaidya1996} (see also the discussion by \citet[Lemma C.4]{Jezequel2022} \footnote{ \added[id=Wei-Fu]{To be precise, \citet{Jezequel2022} considered a function that slightly differs from the $\ell_t(x)$ defined here. 
They considered a version where weights are applied to $\sum_{i = 1}^d \log x [i]$. 
As a result, additional work is required to adjust the proof of Vaidya’s lemma. 
However, the modification is relatively straightforward and it is unnecessary to develop a new theory such as our VB-convexity.} }. 


\begin{lemma}[{\citet[Lemma 3]{Vaidya1996}}] \label{lem_vaidya}
Let $\LB ( x )$ be the logarithmic barrier associated with a full-dimensional polytope. 
Then, it holds that $Q (x) \leq \nabla^2 \VB_{\LB} ( x ) \leq 5 Q (x)$ for some positive definite matrix-valued function $Q (x)$. 
\end{lemma}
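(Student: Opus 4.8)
The plan is to work with the standard slack-variable representation of the polytope and to reduce the statement to two matrix inequalities about an associated projection matrix. Write the polytope as $\{ x : a_i^\transp x \le b_i,\ i = 1, \dots, m \}$, let $s_i ( x ) \coloneqq b_i - a_i^\transp x$ be the slacks, collect the $a_i^\transp$ as the rows of a matrix $A$, and set $S \coloneqq \operatorname{diag} ( s_1, \dots, s_m )$. Then $\LB ( x ) = - \sum_i \log s_i ( x )$, and a direct computation gives $\nabla^2 \LB ( x ) = A^\transp S^{-2} A =: H ( x )$, which is positive definite precisely because the polytope is full-dimensional. The central objects will be the leverage scores $\sigma_i ( x ) \coloneqq a_i^\transp H ( x )^{-1} a_i / s_i^2$ and the projection matrix $P ( x ) \coloneqq S^{-1} A H ( x )^{-1} A^\transp S^{-1}$, which satisfies $P^2 = P$, $P = P^\transp$, and $P_{ii} = \sigma_i$. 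My candidate for the sandwiching matrix is $Q ( x ) \coloneqq A^\transp S^{-1} \Sigma S^{-1} A$, where $\Sigma \coloneqq \operatorname{diag} ( \sigma_1, \dots, \sigma_m )$; it is positive definite since $H$ is.

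First I would compute the gradient of $\VB_{\LB} = \tfrac12 \log\det H$ using the identity $\partial_k \log\det H = \tr ( H^{-1} \partial_k H )$, which after substituting $\partial_k H = 2 \sum_i a_{ik}\, a_i a_i^\transp / s_i^3$ collapses to $\nabla \VB_{\LB} = A^\transp S^{-1} \sigma$, with $\sigma$ the vector of leverage scores. Differentiating once more is the technical heart of the argument. Each entry $\sigma_i$ depends on $x$ both explicitly through the factor $s_i^{-2}$ and implicitly through $H^{-1}$, so I would apply $\partial_l H^{-1} = - H^{-1} ( \partial_l H ) H^{-1}$ and carefully track indices. The implicit part is what produces the off-diagonal coupling: the term $a_i^\transp ( \partial_l H^{-1} ) a_i$ rewrites cleanly in terms of $P_{ij}^2$, and after assembling all contributions the Hessian takes the closed form
\[
\nabla^2 \VB_{\LB} ( x ) = 3 Q ( x ) - 2\, A^\transp S^{-1} ( P \circ P ) S^{-1} A ,
\]
where $P \circ P$ denotes the entrywise (Hadamard) square of $P$.

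With this decomposition the lemma reduces to showing $0 \le P \circ P \le \Sigma$ in the positive-semidefinite order. The lower bound is immediate from the Schur product theorem: $P$ is positive semidefinite, hence so is its Hadamard square. For the upper bound I would observe that $\Sigma - P \circ P$ is symmetric with nonpositive off-diagonal entries $- P_{ij}^2$ and vanishing row sums, since $\sum_j P_{ij}^2 = ( P^2 )_{ii} = P_{ii} = \sigma_i$ uses exactly the idempotence of $P$; such a matrix is a weighted graph Laplacian and therefore positive semidefinite. Conjugating both inequalities by $S^{-1} A$ yields $0 \le A^\transp S^{-1} ( P \circ P ) S^{-1} A \le Q$, and substituting into the displayed formula gives $Q \le \nabla^2 \VB_{\LB} \le 3 Q \le 5 Q$, the last step using $Q \ge 0$; in fact this route establishes the sharper constant $3$, from which the stated bound with $5$ follows a fortiori.

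I expect the main obstacle to be the second differentiation, specifically disentangling the explicit and implicit dependence of $\sigma_i$ on $x$ and verifying that the implicit piece condenses into precisely the Hadamard-square term rather than some less structured object. Everything downstream is then driven by the idempotence $P^2 = P$, which simultaneously supplies the Laplacian structure (via the row-sum identity) and the normalization $P_{ii} = \sigma_i$; keeping that identity in view throughout the bookkeeping is what makes the final inequalities fall out cleanly.
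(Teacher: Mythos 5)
Your proof is correct; the main thing to flag is that the paper contains no proof of this lemma to compare against---it is stated purely as a citation of \citet[Lemma 3]{Vaidya1996} and used as a black box to conclude convexity of the volumetric barrier in the classical (online portfolio selection) setting. Measured against the classical argument, your route is the standard one: the identity
\[
\nabla^2 \VB_{\LB} ( x ) = 3 Q ( x ) - 2 \, A^\transp S^{-1} ( P \circ P ) S^{-1} A , \qquad Q ( x ) = A^\transp S^{-1} \Sigma S^{-1} A ,
\]
is exactly the known closed form of the Hessian, and your two matrix bounds $0 \leq P \circ P \leq \Sigma$---the Schur product theorem for the lower one, and for the upper one the observation that $\Sigma - P \circ P$ has nonpositive off-diagonal entries and zero row sums because $\sum_j P_{ij}^2 = ( P^2 )_{ii} = P_{ii} = \sigma_i$, hence is a weighted graph Laplacian---are precisely how the sharp version of Vaidya's lemma is obtained in later treatments of the volumetric barrier (e.g., Anstreicher's). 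Since $Q ( x ) \geq 0$, your conclusion $Q \leq \nabla^2 \VB_{\LB} \leq 3 Q$ implies the stated bound with constant $5$ a fortiori, so obtaining the sharper constant is a feature, not a discrepancy. Two minor points of rigor, neither fatal: positive definiteness of $H = A^\transp S^{-2} A$ requires the rows of $A$ to span the ambient space, which follows from boundedness of the polytope rather than from full-dimensionality alone (full-dimensionality is what makes the interior, i.e., the domain of the barrier, nonempty); and positive definiteness of $Q$ is cleanest to justify via $Q \geq ( \min_i \sigma_i ) H$ together with the positive definiteness of $H$ and the fact that $\sigma_i > 0$ whenever $a_i \neq 0$, since ``it is positive definite since $H$ is'' is not by itself an argument.
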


A natural generalization of the Burg entropy for density matrices is the negative log-determinant function 
\begin{equation}
R ( \rho ) \coloneqq - \log \det \rho , \quad \forall \rho \in \mathcal{D}_d . \label{eq_R}
\end{equation}
Define 
\[
L_t ( \rho ) \coloneqq \sum_{\tau = 1}^t f_\tau ( \rho ) + \lambda R ( \rho ) , \quad \forall t \in \mathbb{N} ,  
\]
where $f_\tau ( \rho ) = - \log \tr ( A_\tau \rho )$ is the loss function in LL-OLQS. 
Our analysis will require the convexity of the VB associated with the function $\overline{L}_t$; 
note that we have introduced vectorization here. 


Whereas there are generalizations of VB for semi-definite programming, existing results do not directly apply. 
\citet{Nesterov1994} proved that the VB associated with the function $- \log \det ( \mathcal{A} \cdot )$ is self-concordant for any linear mapping $\mathcal{A}: \mathbb{R}^n \to \mathbb{R}^{d \times d}$ under the assumption that $d > n$; 
\citet{Anstreicher2000} proved that the VB associated with the function $- \log \det S ( \cdot )$ is self-concordant for any affine mapping $S ( x ) \coloneqq \sum_{i = 1}^{d^2} x ( i ) A_i - C: \mathbb{R}^{d^2} \to \mathbb{R}^{d \times d}$ under the assumptions that $A_i$ and $C$ are symmetric matrices. 
It is easily verified that both assumptions fail for $\VB_{\overline{R}}$. 

To this end, we introduce the notion of \emph{VB-convexity}. 
We prove that $\VB_\varphi$ is necessarily convex for any VB-convex function $\varphi$; 
moreover, VB-convexity is closed under addition. 
Then, we verify that the negative log-determinant function \eqref{eq_R} and the loss functions in LL-OLQS (Section \ref{sec_intro}) are indeed VB-convex.  
These results imply the VB-convexity of the function $\overline{L}_t$\added{ (recall that $L_t ( \cdot ) = \overline{L}_t (\vec ( \cdot ))$), which has two implications.}
\added{First, this implies that the algorithm in Theorem \ref{thm_main} only needs to solve a convex optimization problem in each iteration, which can be implemented by cutting-plane methods as described in Appendix \ref{app_algRunningTime}. }
\added{Second, the convexity of the volumetric barrier is utilized in \added{Section} \ref{sec_miss} to analyze the regret, similarly to what is done in the proof of \citet{Jezequel2022} }.

\subsection{General Theory}

Below we present the definition of VB-convexity. 
It is easily to check that the negative logarithmic function $- \log x$ satisfies the following definition with equality. 

\begin{definition}[VB-Convexity] \label{def_vbc}
Let $\mathbb{H}$ be a real Hilbert space.
Consider a function $\varphi \colon \mathbb{H} \to \mathbb{R}$ such that $\varphi \in C^4 ( \dom \varphi )$.
We say that $\varphi$ is VB-convex if it is strictly convex and  
\[
D^4 \varphi ( x ) [ u, u, v, v ] D^2 \varphi ( x ) [v, v] \geq \frac{3}{2} \left( D^3 \varphi ( x ) [ u, v, v ] \right) ^ 2 , \quad \forall x \in \dom \varphi, u, v \in \mathbb{H} . 
\]
\end{definition}

Below are two useful properties of VB-convex functions. 
The first is immediate from the definition. 
The proof for the second can be found in Section \ref{app_missing_proofs}. 

\begin{lemma} \label{lem_affine}
VB-convexity is affine invariant. 
That is, if a function $\varphi (x)$ is VB-convex, then the function $\varphi ( A x )$ is also VB-convex for any affine operator $A$. 
\end{lemma}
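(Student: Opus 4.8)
The plan is to prove affine invariance directly from the defining inequality in Definition~\ref{def_vbc}, by tracking how the directional derivatives $D^k\varphi$ transform under an affine substitution. Write $A x = B x + c$ for a linear operator $B$ and a fixed vector $c$, and set $\psi(x) \coloneqq \varphi(Ax)$. The key observation is that the constant shift $c$ plays no role in any derivative of order $\geq 1$, so it suffices to understand the action of the linear part $B$ on each directional derivative.

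\emph{First} I would record the chain-rule identities for directional derivatives of a composition with a linear map. For each order $k \in \{2,3,4\}$ and arbitrary directions, the chain rule gives
\[
D^k \psi(x)[w_1,\dots,w_k] = D^k \varphi(Ax)[Bw_1,\dots,Bw_k],
\]
since $B$ is linear and hence its own derivative, with all higher derivatives of the affine map vanishing. This is the crux: there are no correction terms involving lower-order derivatives of $\varphi$, precisely because the second and higher derivatives of an affine map are zero. I would state this as the main computational lemma and verify it by induction on $k$ (or simply cite it as the standard multilinear chain rule for affine arguments).

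\emph{Next} I would substitute these identities into the VB-convexity inequality for $\psi$. Evaluating at a point $x$ in the domain and along directions $u,v \in \mathbb{H}$, the four quantities appearing in Definition~\ref{def_vbc} become
\[
D^4\psi(x)[u,u,v,v] = D^4\varphi(Ax)[Bu,Bu,Bv,Bv], \quad D^2\psi(x)[v,v] = D^2\varphi(Ax)[Bv,Bv],
\]
and similarly $D^3\psi(x)[u,v,v] = D^3\varphi(Ax)[Bu,Bv,Bv]$. Writing $y \coloneqq Ax$, $u' \coloneqq Bu$, $v' \coloneqq Bv$, the desired inequality for $\psi$ at $(x,u,v)$ is \emph{exactly} the VB-convexity inequality for $\varphi$ at $(y,u',v')$, which holds by hypothesis. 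I would also note that strict convexity is preserved: $D^2\psi(x)[v,v] = D^2\varphi(Ax)[Bv,Bv] > 0$ whenever $Bv \neq 0$; to get strict convexity of $\psi$ we need $B$ injective, which is the natural nondegeneracy condition for an affine reparameterization, and I would make this assumption on $A$ explicit (or restrict to the relevant subspace).

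\emph{The main obstacle} I anticipate is not the algebra but the bookkeeping around degeneracy. If $B$ is not injective, then $v' = Bv$ can be zero for $v \neq 0$, and while the inequality still holds trivially (both sides vanish), strict convexity of $\psi$ fails. Since the intended application reparameterizes between spaces of possibly different dimension (as in the affine transformation $A\colon\mathbb{R}^{d-1}\to\mathbb{R}^d$ mentioned in the excerpt), I would be careful to formulate the lemma for injective affine operators, or equivalently to interpret strict convexity relative to the image subspace. Handling the range-restriction cleanly—so that the statement remains both correct and directly usable for the reparameterization step that motivated VB-convexity—is the one point requiring genuine care; the derivative identities themselves are routine.
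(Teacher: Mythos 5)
Your proof is correct and is exactly the argument the paper has in mind: the paper offers no written proof at all, stating only that the lemma is ``immediate from the definition,'' and your chain-rule identity $D^k\psi(x)[w_1,\dots,w_k] = D^k\varphi(Ax)[Bw_1,\dots,Bw_k]$ (with no correction terms, since higher derivatives of an affine map vanish) together with the substitution $y = Ax$, $u' = Bu$, $v' = Bv$ is precisely that immediate argument. Your caveat about injectivity is well taken and in fact sharper than the paper itself: the paper later applies this lemma (in Lemma \ref{lem_loss}) to the non-injective linear map $\rho \mapsto \tr(A_t \rho)$, under which $\overline{f}_t$ is convex but \emph{not} strictly convex, so the strict-convexity clause of Definition \ref{def_vbc} is technically violated there; the downstream results survive only because the sum $\overline{L}_t$ contains the strictly convex term $\lambda \overline{R}$, which is the range-restriction bookkeeping you correctly identified as the one point requiring care.
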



\begin{lemma} \label{lem_sum_vbc}
Let $\varphi_1$ and $\varphi_2$ be VB-convex functions and $\alpha, \beta > 0$. 
Then, the function $\varphi \coloneqq \alpha \varphi_1 + \beta \varphi_2$ is also a VB-convex function. 
\end{lemma}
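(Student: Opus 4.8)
The plan is to reduce the claim to a single scalar inequality at each fixed point and then dispatch it with the arithmetic--geometric mean inequality. Since $\dom \varphi = \dom \varphi_1 \cap \dom \varphi_2$ with $\varphi \in C^4$ there, and since a positive combination of strictly convex functions is strictly convex, the only substantive task is to verify the fourth-order inequality of Definition \ref{def_vbc}. Fix $x \in \dom \varphi$ and directions $u, v \in \mathbb{H}$; we may assume $v \neq 0$, as otherwise every term vanishes. Exploiting linearity of the directional derivatives, I would abbreviate for $i \in \{1,2\}$
\[
a_i \coloneqq D^4 \varphi_i(x)[u,u,v,v], \quad b_i \coloneqq D^2 \varphi_i(x)[v,v], \quad c_i \coloneqq D^3 \varphi_i(x)[u,v,v],
\]
so that the corresponding quantities for $\varphi = \alpha \varphi_1 + \beta \varphi_2$ are $\alpha a_1 + \beta a_2$, $\alpha b_1 + \beta b_2$, and $\alpha c_1 + \beta c_2$. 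The goal is then the inequality $(\alpha a_1 + \beta a_2)(\alpha b_1 + \beta b_2) \geq \tfrac{3}{2}(\alpha c_1 + \beta c_2)^2$.

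Before expanding, I would record two sign facts that make the argument run. Strict convexity of each $\varphi_i$ gives $b_i > 0$ (recall $v \neq 0$), and the VB-convexity hypothesis $a_i b_i \geq \tfrac{3}{2} c_i^2 \geq 0$ together with $b_i > 0$ forces $a_i \geq 0$. Expanding both sides as quadratic forms in $(\alpha, \beta)$, the $\alpha^2$ and $\beta^2$ terms are controlled termwise by the individual VB-convexity inequalities $a_1 b_1 \geq \tfrac{3}{2} c_1^2$ and $a_2 b_2 \geq \tfrac{3}{2} c_2^2$. Since $\alpha \beta > 0$, what remains is the cross term, i.e. it suffices to show
\[
a_1 b_2 + a_2 b_1 \geq 3 c_1 c_2 .
\]

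This cross-term bound is the crux of the proof, and it is exactly where the arithmetic--geometric mean inequality enters. Using $a_i \geq 0$ and $b_i > 0$, I would apply AM--GM to the nonnegative quantities $a_1 b_2$ and $a_2 b_1$:
\[
a_1 b_2 + a_2 b_1 \geq 2\sqrt{(a_1 b_1)(a_2 b_2)} \geq 2\sqrt{\tfrac{3}{2} c_1^2 \cdot \tfrac{3}{2} c_2^2} = 3 \abs{c_1 c_2} \geq 3 c_1 c_2 ,
\]
where the middle inequality invokes the two individual VB-convexity hypotheses. Combining the diagonal and cross terms yields the desired inequality for $\varphi$, completing the verification. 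The hard part is not the computation but keeping track of the nonnegativity of $a_i$: it is precisely this that licenses the AM--GM step and the final passage $3\abs{c_1 c_2} \geq 3 c_1 c_2$, and everything else is bookkeeping of the quadratic form in $(\alpha, \beta)$.
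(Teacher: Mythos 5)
Your proof is correct and is essentially the paper's own argument in unpacked form: the paper applies the two-term Cauchy--Schwarz inequality $(\alpha a_1+\beta a_2)(\alpha b_1+\beta b_2) \geq \bigl(\alpha\sqrt{a_1 b_1}+\beta\sqrt{a_2 b_2}\bigr)^2$ and then invokes VB-convexity of each summand, which, once expanded in $(\alpha,\beta)$, is exactly your diagonal-terms-plus-AM--GM-on-the-cross-term argument. Your explicit verification that $a_i \geq 0$ is a useful bit of care the paper leaves implicit (it is needed for the square roots in the Cauchy--Schwarz step to be real), but the mathematical route is the same.
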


\replaced[id=Wei-Fu]{
These two properties together with Definition \ref{def_vbc} are reminiscent of self-concordant functions (see Definition \ref{def_sc} and Lemma \ref{lem_sum_sc} ). 
Definition \ref{def_vbc} is indeed inspired by the definition of self-concordant functions. 
However, we were unable to \replaced{identify}{find} any clear relation between self-concordance and VB-convexity.}

We now proceed to prove the convexity of the VB associated with any VB-convex function (Corollary \ref{cor_main}). The following technical lemma is necessary for its proof, whose proof can be found in Appendix \ref{app_missing_proofs}.

\begin{lemma} \label{lem_traceIneq}
Let $A$, $B$, and $C$ be $d \times d$ Hermitian matrices. 
Suppose that $A$ is positive semi-definite, $B$ is positive definite, and 
\[
\braket{ v, A v } \braket{ v, B v } \geq \braket{ v, C v }^2 , \quad \forall v \in \mathbb{C}^d . 
\]
Then, it holds that
\[
\tr ( A B^{-1} ) \geq \tr ( B^{-1} C B^{-1} C ). 
\]
\end{lemma}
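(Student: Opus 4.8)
The plan is to reduce the statement to the case $B = I$ by a congruence transformation, and then to extract the inequality by testing the hypothesis against a cleverly chosen orthonormal basis. First I would introduce the positive definite square root $B^{1/2}$ and substitute $v = B^{-1/2} w$. Setting $\tilde{A} \coloneqq B^{-1/2} A B^{-1/2}$ and $\tilde{C} \coloneqq B^{-1/2} C B^{-1/2}$ (both Hermitian, with $\tilde{A} \geq 0$), the hypothesis
\[
\braket{ v, A v } \braket{ v, B v } \geq \braket{ v, C v }^2, \quad \forall v \in \mathbb{C}^d,
\]
becomes, upon observing that $\braket{ v, B v } = \braket{ w, w } = \norm{w}^2$,
\[
\braket{ w, \tilde{A} w } \norm{w}^2 \geq \braket{ w, \tilde{C} w }^2, \quad \forall w \in \mathbb{C}^d.
\]
On the conclusion side, cyclicity of the trace gives $\tr ( A B^{-1} ) = \tr ( \tilde{A} )$ and $\tr ( B^{-1} C B^{-1} C ) = \tr ( \tilde{C}^2 )$, so it suffices to prove $\tr ( \tilde{A} ) \geq \tr ( \tilde{C}^2 )$.

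The key step is to test the transformed hypothesis against the \emph{eigenbasis of $\tilde{C}$} rather than that of $\tilde{A}$. Let $\{ e_i \}_{i=1}^d$ be an orthonormal eigenbasis of the Hermitian matrix $\tilde{C}$, with $\tilde{C} e_i = \mu_i e_i$. Choosing $w = e_i$ in the displayed inequality, and using $\norm{e_i} = 1$ together with $\braket{ e_i, \tilde{C} e_i } = \mu_i$, yields
\[
\braket{ e_i, \tilde{A} e_i } \geq \mu_i^2, \quad \forall i .
\]
Summing this over $i$ and recognizing $\sum_i \braket{ e_i, \tilde{A} e_i } = \tr ( \tilde{A} )$ (since $\{ e_i \}$ is an orthonormal basis) and $\sum_i \mu_i^2 = \tr ( \tilde{C}^2 )$ gives the claim.

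I do not anticipate a serious obstacle here; the proof is short once the right reduction and the right test vectors are identified. The only points requiring care are verifying that the congruence by $B^{-1/2}$ transforms both the quadratic-form hypothesis and the two traces correctly, and making sure that the eigenbasis chosen is that of $\tilde{C}$ so that $\sum_i \mu_i^2$ collapses to $\tr ( \tilde{C}^2 )$. Testing instead against the eigenbasis of $\tilde{A}$ would fail to reproduce $\tr ( \tilde{C}^2 )$, so the asymmetry in the roles of $\tilde{A}$ and $\tilde{C}$ is the one genuinely essential insight.
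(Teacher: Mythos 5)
Your proof is correct and is essentially identical to the paper's own argument: the paper likewise substitutes $v = B^{-1/2}u$, takes the orthonormal eigenbasis of $B^{-1/2} C B^{-1/2}$, applies the transformed hypothesis to each eigenvector, and sums to recover $\tr(AB^{-1}) \geq \tr(B^{-1}CB^{-1}C)$ via cyclicity of the trace. No gaps; the key insight you flag (testing against the eigenbasis of $\tilde{C}$, not $\tilde{A}$) is exactly the one the paper uses.
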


\begin{theorem} 
Let $\varphi$ be an VB-convex function. 
Define the associated volumetric barrier as 
\[ 
\VB_\varphi ( x ) \coloneqq \frac{1}{2} \log \det \nabla^2 \varphi ( x ) . 
\]
Let $Q (x)$ be a Hermitian positive semi-definite matrix defined via the equality
\[
\braket{ u, Q ( x ) u } = \frac{1}{2} \tr \left( \nabla^{-2} \varphi ( x ) D^4 \varphi ( x ) [ u, u ] \right) , \quad \forall u \in \mathbb{C}^d . 
\]
Then, it holds that
\begin{equation}
\frac{1}{3} Q ( x ) \leq \nabla^2 \VB_\varphi ( x ) \leq Q ( x ) , \quad \forall x \in \dom \varphi . \label{eq_hessian}
\end{equation}
\end{theorem}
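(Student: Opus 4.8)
The plan is to reduce the matrix inequality \eqref{eq_hessian} to a scalar comparison between two traces, using the standard chain rule for the second derivative of $\log\det$, and then to feed VB-convexity into that comparison through Lemma \ref{lem_traceIneq}. First I would compute the Hessian of $\VB_\varphi$ directly. Writing $H(x) = \nabla^2\varphi(x)$ and abbreviating the directional derivatives $H' \coloneqq DH(x)[u]$ and $H'' \coloneqq D^2 H(x)[u,u]$, the chain rule for $g \coloneqq \log\det H$ gives
\[
D^2 g(x)[u,u] = \tr(H^{-1}H'') - \tr(H^{-1}H'H^{-1}H') .
\]
Here $H'$ is the matrix of the bilinear form $v\mapsto D^3\varphi(x)[u,v,v]$ and $H''$ the matrix of $v\mapsto D^4\varphi(x)[u,u,v,v]$, so that $H'' = D^4\varphi(x)[u,u]$ in the notation of the statement. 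Since $\VB_\varphi = \tfrac12 g$ and, by definition, $\braket{u, Q(x)u} = \tfrac12\tr(\nabla^{-2}\varphi(x)\,D^4\varphi(x)[u,u]) = \tfrac12\tr(H^{-1}H'')$, this produces the key identity
\[
\braket{u, \nabla^2\VB_\varphi(x)\,u} = \braket{u, Q(x)u} - \tfrac12\tr(H^{-1}H'H^{-1}H') .
\]

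The upper bound then follows at once: conjugating by $H^{1/2}$ shows $\tr(H^{-1}H'H^{-1}H') = \lVert H^{-1/2}H'H^{-1/2}\rVert_F^2 \geq 0$ because $H'$ is Hermitian, so the subtracted term is nonnegative and $\nabla^2\VB_\varphi(x)\leq Q(x)$.

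For the lower bound I would show $\tr(H^{-1}H'')\geq \tfrac32\tr(H^{-1}H'H^{-1}H')$, which is exactly where VB-convexity enters. Set $A = H'' = D^4\varphi(x)[u,u]$, $B = H = \nabla^2\varphi(x)$, and $C = \sqrt{3/2}\,H'$, all regarded as the Hermitian matrices of the corresponding bilinear forms. Strict convexity makes $B$ positive definite, and the inequality of Definition \ref{def_vbc} reads precisely $\braket{v, Av}\braket{v, Bv} \geq \braket{v, Cv}^2$ for all $v$; taking this with $\braket{v, Bv}>0$ also shows that $A$ is positive semidefinite. Hence Lemma \ref{lem_traceIneq} applies and yields $\tr(AB^{-1}) \geq \tr(B^{-1}CB^{-1}C)$, i.e. $\tr(H^{-1}H'')\geq \tfrac32\tr(H^{-1}H'H^{-1}H')$. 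Equivalently $\tr(H^{-1}H'H^{-1}H')\leq \tfrac23\tr(H^{-1}H'') = \tfrac43\braket{u, Q(x)u}$, and substituting into the key identity gives
\[
\braket{u, \nabla^2\VB_\varphi(x)\,u} \geq \braket{u, Q(x)u} - \tfrac23\braket{u, Q(x)u} = \tfrac13\braket{u, Q(x)u} ,
\]
which is the remaining half of \eqref{eq_hessian}.

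I expect the genuine content to lie entirely in the lower bound and in the bookkeeping that turns the directional-derivative tensors $D^k\varphi(x)[u,\dots]$ into honest Hermitian matrices $A$, $B$, $C$ to which Lemma \ref{lem_traceIneq} can be applied. Once the $\sqrt{3/2}$ rescaling of $C$ is spotted, the trace inequality does all the work, and the precise constant $3/2$ appearing in Definition \ref{def_vbc} is exactly what is needed to generate the factor $1/3$. A secondary point to verify carefully is the positivity hypothesis of the lemma—that $A$ is positive semidefinite and $B$ positive definite—which I would derive from VB-convexity together with strict convexity as above.
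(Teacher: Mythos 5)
Your proposal is correct and follows essentially the same route as the paper's own proof: the same chain-rule computation expressing $\nabla^2 \VB_\varphi$ as $Q(x)$ minus a positive semi-definite correction term, the same nonnegativity argument for the upper bound, and the same invocation of Lemma \ref{lem_traceIneq} for the lower bound (the paper absorbs the constant by scaling $A$ by $2/3$ rather than scaling $C$ by $\sqrt{3/2}$, which is an immaterial difference). Your verification of the positivity hypotheses of the lemma also matches the paper's reasoning.
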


\begin{proof}
Let $S ( x )$ be a Hermitian matrix defined via the equality
\[
\braket{ u, S ( x ) u } = \frac{1}{2} \tr \left( \nabla^{-2} \varphi ( x ) D^3 \varphi ( x ) [ u ] \nabla^{-2} \varphi ( x ) D^3 \varphi ( x ) [ u ] \right) , \quad \forall u \in \mathcal{H}^{d} . 
\]
Then, a direct calculation \added[id=Wei-Fu]{using the chain rule (Lemma \ref{lem_chainRule}) } gives 
\begin{align*}
& D \VB_\varphi ( x ) [ u ] = \frac{1}{2} \tr ( \nabla^{-2} \varphi ( x ) D^3 \varphi ( x ) [ u ] ) ,  \\
& D^2 \VB_\varphi ( x ) [ u, u ] = \braket{ u, Q ( x ) u } - \braket{ u, S ( x ) u } . 
\end{align*}

We now check the positive definiteness of the matrices $Q(x)$ and $S(x)$ by using the fact that $\tr ( A B ) \geq 0$ for Hermitian positive semi-definite matrices $A$ and $B$. 
By Definition \ref{def_vbc}, the matrix $\nabla^{-2} \varphi ( x )$ is positive definite and $D^4 \varphi (x) [ u, u ]$ is positive semi-definite. 
This implies that the matrix $Q$ is positive semi-definite. 
Given the positive definiteness of $\nabla^{-2} \varphi (x)$, the matrix $D^3 \varphi ( x ) [ u ] \nabla^{-2} \varphi ( x ) D^3 \varphi ( x ) [ u ]$ is also positive definite, ensuring that the matrix $S (x)$ is positive definite. 

The right-hand side of the inequality \eqref{eq_hessian} follows from $S ( x )$ is positive definite. 
For the left-hand side, it suffices to prove that $Q ( x ) \geq (3/2) S ( x )$, or equivalently, 
\[
\tr \left( \nabla^{-2} \varphi ( x ) D^4 \varphi ( x ) [ u, u ] \right) \geq \frac{3}{2} \tr \left( \nabla^{-2} \varphi ( x ) D^3 \varphi ( x ) [ u ] \nabla^{-2} \varphi ( x ) D^3 \varphi ( x ) [ u ] \right) . 
\]
Let $A = (2/3) D^4 \varphi ( x ) [ u, u ]$, $B = \nabla^2 \varphi ( x )$, and $C = D^3 \varphi ( x ) [ u ]$. 
Then, by the definition of VB-convexity (Definition \ref{def_vbc}), we have 
\[
\braket{ v, A v } \braket{ v, B v } \geq \braket{ v, C v } ^ 2 , \quad \forall v \in \mathbb{C}^d . 
\]
The left-hand side of the inequality \eqref{eq_hessian} follows from Lemma \ref{lem_traceIneq}. 
\end{proof}

The following corollary immediately follows since the matrix $Q ( x )$ in Theorem \ref{thm_main} is positive semi-definite. 

\begin{corollary} \label{cor_main}
The volumetric barrier associated with any VB-convex function is convex. 
\end{corollary}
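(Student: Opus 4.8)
The plan is to read off convexity directly from the two-sided Hessian bound \eqref{eq_hessian} established in the preceding theorem, so that the corollary becomes a one-line consequence. The whole strategy rests on the standard second-order characterization of convexity: a twice continuously differentiable function on a convex domain is convex precisely when its Hessian is positive semi-definite throughout that domain. Since $\varphi \in C^4 ( \dom \varphi )$ and is strictly convex, $\nabla^2 \varphi ( x )$ is positive definite, so $\VB_\varphi ( x ) = ( 1 / 2 ) \log \det \nabla^2 \varphi ( x )$ is well-defined and at least twice continuously differentiable; hence the characterization applies to $\VB_\varphi$.

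First I would invoke the lower bound in \eqref{eq_hessian}, namely $\nabla^2 \VB_\varphi ( x ) \geq ( 1 / 3 ) Q ( x )$ for every $x \in \dom \varphi$. Next I would recall that $Q ( x )$ is positive semi-definite; this was shown while proving \eqref{eq_hessian}, where $Q ( x )$ was identified through $\braket{ u, Q ( x ) u } = ( 1 / 2 ) \tr ( \nabla^{-2} \varphi ( x ) D^4 \varphi ( x ) [ u, u ] )$, with $\nabla^{-2} \varphi ( x )$ positive definite and $D^4 \varphi ( x ) [ u, u ]$ positive semi-definite. Combining the two facts yields $\nabla^2 \VB_\varphi ( x ) \geq ( 1 / 3 ) Q ( x ) \geq 0$ for all $x \in \dom \varphi$, that is, the Hessian of $\VB_\varphi$ is everywhere positive semi-definite, and convexity follows at once.

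Given the theorem, there is essentially no hard step: all of the analytic work has already been absorbed into establishing the bound \eqref{eq_hessian} and the positive semi-definiteness of $Q$, the latter itself resting on Lemma \ref{lem_traceIneq} and Definition \ref{def_vbc}. The only point that warrants care is that the set on which we apply the Hessian test is convex, so that the second-order characterization is legitimate; since $\dom \varphi$ is the (open, convex) domain of the strictly convex function $\varphi$, this is automatic. In short, the corollary is an immediate consequence of the left inequality in \eqref{eq_hessian} together with $Q ( x ) \geq 0$.
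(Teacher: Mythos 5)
Your proof is correct and matches the paper's own argument: the paper likewise deduces the corollary immediately from the left inequality in \eqref{eq_hessian} together with the positive semi-definiteness of $Q(x)$ established in the preceding theorem. Your write-up simply makes explicit the second-order convexity criterion that the paper leaves implicit.
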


\subsection{VB-Convexity of \replaced{$\overline{L}_t$}{Losses and Log-Determinant Function}}

We show that the loss functions in LL-OLQS and the negative log-determinant function are both VB-convex. 

\begin{lemma} \label{lem_loss}
Let $f_t$ be the loss functions in the LL-OLQS game (Section \ref{sec_intro}).
Then, the functions $\overline{f}_t$ are VB-convex. 
\end{lemma}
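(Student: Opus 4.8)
The plan is to reduce the statement to the one-dimensional case via the affine-invariance property already established in Lemma \ref{lem_affine}. The key structural observation is that each loss depends on $\rho$ only through the scalar $\tr ( A_t \rho )$, which is a real linear functional on the space of Hermitian matrices. After vectorization, this becomes a linear functional $x \mapsto \braket{ a_t, x }$ for an appropriate real vector $a_t$, so that
\[
\overline{f}_t ( x ) = - \log \braket{ a_t, x }
\]
exhibits $\overline{f}_t = ( - \log ) \circ \ell_t$ as the composition of the scalar negative logarithm with the affine map $\ell_t ( x ) = \braket{ a_t, x }$. Since the remark following Definition \ref{def_vbc} records that $- \log$ satisfies the defining fourth-order inequality with equality, Lemma \ref{lem_affine} transports this inequality verbatim to $\overline{f}_t$, which is the content we want.

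To make the verification concrete, I would also carry out the direct differentiation, which is short because $g ( \rho ) \coloneqq \tr ( A_t \rho )$ is linear, so all of its second, third, and fourth derivatives vanish. Writing $\dot{g}_U = \tr ( A_t U )$ and repeatedly applying the chain rule (Lemma \ref{lem_chainRule}) to $f_t = - \log g$ gives
\[
D^2 f_t [ V, V ] = \frac{ \dot{g}_V^2 }{ g^2 } , \quad D^3 f_t [ U, V, V ] = - \frac{ 2 \dot{g}_U \dot{g}_V^2 }{ g^3 } , \quad D^4 f_t [ U, U, V, V ] = \frac{ 6 \dot{g}_U^2 \dot{g}_V^2 }{ g^4 } .
\]
Substituting these into Definition \ref{def_vbc} shows that both sides equal $6 \dot{g}_U^2 \dot{g}_V^4 / g^6$, so the VB-convexity inequality holds with equality for every $U, V$; this mirrors the scalar case exactly, as expected from the composition structure, and provides an argument independent of how one chooses to read Lemma \ref{lem_affine}.

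The one genuinely delicate point—and where I would spend the most care—is the strict-convexity clause of Definition \ref{def_vbc}. The computation above shows that $D^2 f_t [ V, V ] = \dot{g}_V^2 / g^2$ vanishes whenever $\tr ( A_t V ) = 0$, so the Hessian of $\overline{f}_t$ is only rank one, and $\overline{f}_t$ is convex but not strictly convex on the full $d^2$-dimensional space. The resolution is that the object actually needed in the analysis is $\overline{L}_t = \sum_\tau \overline{f}_\tau + \lambda \overline{R}$, and the log-determinant regularizer $\overline{R}$ (which is also VB-convex, and in particular strictly convex) supplies a positive-definite Hessian. Combining the per-loss inequalities established above with the strict convexity of $\overline{R}$ through the additivity of VB-convexity (Lemma \ref{lem_sum_vbc}) yields the VB-convexity of $\overline{L}_t$, which is precisely what the regret analysis requires. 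Thus the substantive work for the present lemma is the fourth-order inequality, settled immediately by the composition/differentiation argument, while the strict-convexity requirement is the subtlety to flag, and it is discharged at the level of $\overline{L}_t$ rather than the individual losses.
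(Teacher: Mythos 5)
Your proposal is correct and takes essentially the same route as the paper: the paper's proof is exactly the two-line observation that $\varphi(x) = -\log x$ satisfies Definition \ref{def_vbc} (with equality) and that VB-convexity is affine invariant (Lemma \ref{lem_affine}), which your composition argument and explicit derivative computation merely verify in full detail. Your strict-convexity flag is well taken---the rank-one Hessian of $\overline{f}_t$ means the strict-convexity clause of Definition \ref{def_vbc} fails for the individual losses, a point the paper glosses over both here and in Lemma \ref{lem_affine}, and your resolution (that strict convexity is only needed, and only holds, at the level of $\overline{L}_t$ thanks to the regularizer $\overline{R}$) is precisely how the paper's downstream use in Corollary \ref{cor_vbc_Vt} must be read.
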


\begin{proof}
It is easily verified that $\varphi ( x ) \coloneqq - \log x$ is VB-convex. 
Then, the lemma follows from the affine invariance of VB-convexity (Lemma \ref{lem_affine}). 
\end{proof}


\added{We will use the following lemma to establish the VB-convexity of the log-determinant function in Lemma \ref{lem_logdet}. }

\begin{lemma} \label{lem_anstreicher}
\added{For any $n \times n$ Hermitian matrices $A$ and $B$, it holds that 
\[
\tr( A^2 B^2 ) \geq \tr(ABAB) . 
\]}
\end{lemma}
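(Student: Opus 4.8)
The plan is to express the gap $\tr(A^2 B^2) - \tr(ABAB)$ in terms of the commutator $[A,B] \coloneqq AB - BA$ and recognize it as a nonnegative squared Hilbert--Schmidt norm. First I would observe that, since $A$ and $B$ are Hermitian, the commutator is skew-Hermitian: indeed $(AB - BA)^\ast = B^\ast A^\ast - A^\ast B^\ast = BA - AB = -[A,B]$. Consequently $[A,B]^\ast = -[A,B]$, so that $\tr([A,B][A,B]^\ast) = -\tr([A,B]^2)$, while at the same time $\tr([A,B][A,B]^\ast) = \sum_{i,j} \lvert [A,B]_{ij} \rvert^2 \geq 0$ is manifestly nonnegative.

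Next I would expand the square of the commutator, $[A,B]^2 = ABAB - AB^2A - BA^2B + BABA$, and simplify its trace using the cyclic invariance of the trace. Each of the two middle terms collapses to $\tr(A^2 B^2)$ (for instance $\tr(AB^2A) = \tr(A^2 B^2)$ and $\tr(BA^2B) = \tr(A^2 B^2)$), while the remaining two terms each equal $\tr(ABAB)$. This yields the identity $\tr([A,B]^2) = 2\tr(ABAB) - 2\tr(A^2 B^2)$.

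Combining the two observations gives $2\tr(A^2 B^2) - 2\tr(ABAB) = -\tr([A,B]^2) = \tr([A,B][A,B]^\ast) \geq 0$, and dividing by two delivers the claimed inequality $\tr(A^2 B^2) \geq \tr(ABAB)$.

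I do not expect a genuine obstacle here: the statement is a clean algebraic identity once the commutator is introduced, and the only nonroutine step is the decision to rewrite the gap as $\lVert [A,B] \rVert^2$ rather than attempting a direct eigenvalue or variational argument. The Hermiticity hypothesis is used in exactly one place---to guarantee that $[A,B]$ is skew-Hermitian, which is what makes $-\tr([A,B]^2)$ a squared norm; without it the inequality can fail.
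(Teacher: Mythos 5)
Your proof is correct and follows essentially the same route as the paper: both arguments take $M = AB - BA$, observe it is skew-Hermitian so that $-\tr(M^2) = \tr(MM^\ast) \geq 0$, and expand $\tr\left( (AB-BA)^2 \right) = 2\tr(ABAB) - 2\tr(A^2B^2)$ via cyclicity of the trace. The only cosmetic difference is that the paper phrases the nonnegativity through the Hilbert--Schmidt inner product rather than an explicit entrywise sum of squares.
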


%

\begin{proof}
\added{
Since $A$ and $B$ are Hermitian, the matrices $ABA$, $B$, $A^2$, and $B^2$ are Hermitian. Therefore, both $\tr(ABAB)$ and $ \tr(A^2B^2)$ are real numbers. 
Notice that for any skew-Hermitian matrix $M$, we have $\tr(M^2) = - \braketHS{M,M} \leq 0$. 
Taking $M = AB-BA$, which is obviously skew-Hermitian, we write 
\[
0 \geq \frac{1}{2} \tr\left( (AB-BA )^2\right) = \tr(ABAB) - \tr(A^2B^2).
\]
}
\end{proof}

\begin{lemma} \label{lem_logdet}
Denote by $R$ the negative log-determinant function, i.e., $R ( \rho ) \coloneqq - \log \det \rho$. 
Then, the function $\overline{R}$ is \replaced[id=Wei-Fu]{VB-convex.}{LB-convex.} 
\end{lemma}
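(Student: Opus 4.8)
The plan is to verify Definition \ref{def_vbc} directly for $\overline{R}$, where $R(\rho) = -\log\det\rho$. First I would record the well-known derivative formulas for the log-determinant at a Hermitian positive definite point $\rho$ in the directions $U, V \in \mathcal{H}^d$ (the directional derivatives of $\overline{R}$ correspond to Hermitian perturbation directions). Writing $X \coloneqq \rho^{-1/2} U \rho^{-1/2}$ and $Y \coloneqq \rho^{-1/2} V \rho^{-1/2}$, the derivatives should reduce to traces of products of $X$ and $Y$: concretely $D^2 R[V,V] = \tr(Y^2)$, $D^3 R[U,V,V] = -2\tr(XY^2)$ up to sign conventions, and $D^4 R[U,U,V,V]$ expanding into a sum of terms like $\tr(X^2Y^2)$ and $\tr(XYXY)$. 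The strictly convex requirement is immediate since $D^2 R[V,V] = \tr(Y^2) > 0$ for $V \neq 0$.

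With these formulas in hand, the VB-convexity inequality
\[
D^4 R(\rho)[U,U,V,V]\, D^2 R(\rho)[V,V] \geq \frac{3}{2}\bigl(D^3 R(\rho)[U,V,V]\bigr)^2
\]
becomes a trace inequality purely in $X$ and $Y$. The plan is to reduce the left-hand side, after substituting the fourth-derivative expansion, to an expression dominated by $\tr(X^2Y^2)\tr(Y^2)$, and to control the right-hand side, which will look like $\tr(XY^2)^2$ times a constant. Here I expect Lemma \ref{lem_anstreicher} to be the decisive tool: it lets me replace the ``crossed'' term $\tr(XYXY)$ by the ``aligned'' term $\tr(X^2Y^2)$ at the cost of an inequality in the favorable direction. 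A Cauchy--Schwarz step in the Hilbert--Schmidt inner product, $\tr(XY^2)^2 = \braketHS{X, Y^2}^2 \leq \tr(X^2)\,\tr(Y^4)$ or the sharper $\tr(XY^2)^2 \leq \tr(X^2 Y^2)\tr(Y^2)$ via viewing $XY$ and $Y$ appropriately, should then close the gap against the $\frac{3}{2}$ constant.

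The main obstacle I anticipate is bookkeeping in the fourth-derivative expansion: differentiating $\rho \mapsto \rho^{-1}$ repeatedly produces several non-commuting terms, and one must carefully collect them into $\tr(X^2Y^2)$ versus $\tr(XYXY)$ contributions before any inequality is applied. Getting the exact combinatorial coefficients right is what makes or breaks the constant $3/2$, so the delicate part is ensuring the symmetrized directional derivatives are assembled correctly rather than the inequalities themselves. A secondary subtlety is that $\overline{R}$ is defined on $\mathbb{C}^{d^2}$ via vectorization, so I must confirm that the relevant directions for checking Definition \ref{def_vbc} can be taken Hermitian without loss of generality; since $\dom R$ consists of Hermitian positive definite matrices, perturbations staying within $\mathcal{H}^d$ suffice, and the real-Hilbert-space framing of Definition \ref{def_vbc} matches this. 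Once the trace inequality is verified, VB-convexity of $\overline{R}$ follows immediately, and combined with Lemma \ref{lem_loss} and Lemma \ref{lem_sum_vbc} this yields the VB-convexity of $\overline{L}_t$.
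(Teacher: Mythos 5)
Your setup matches the paper: the derivative formulas (with $X \coloneqq \rho^{-1/2} U \rho^{-1/2}$, $Y \coloneqq \rho^{-1/2} V \rho^{-1/2}$, one gets $D^2 = \tr ( Y^2 )$, $D^3 = -2 \tr ( X Y^2 )$, $D^4 = 4 \tr ( X^2 Y^2 ) + 2 \tr ( X Y X Y )$), the restriction to Hermitian directions, and strict convexity are all fine, so the target is $[ 4 \tr ( X^2 Y^2 ) + 2 \tr ( X Y X Y ) ] \tr ( Y^2 ) \geq 6 \tr ( X Y^2 )^2$. However, the way you propose to combine your two inequalities has the directions backwards and cannot close. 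Lemma \ref{lem_anstreicher} states $\tr ( X^2 Y^2 ) \geq \tr ( X Y X Y )$, so replacing the crossed term by the aligned term, as you propose, yields an \emph{upper} bound $4 \tr ( X^2 Y^2 ) + 2 \tr ( X Y X Y ) \leq 6 \tr ( X^2 Y^2 )$ on the very quantity you must bound from \emph{below}. Conversely, if you apply your sharper Cauchy--Schwarz $\tr ( X Y^2 )^2 \leq \tr ( X^2 Y^2 ) \tr ( Y^2 )$ to the right-hand side, closing the argument would require $4 \tr ( X^2 Y^2 ) + 2 \tr ( X Y X Y ) \geq 6 \tr ( X^2 Y^2 )$, i.e.\ $\tr ( X Y X Y ) \geq \tr ( X^2 Y^2 )$, which is exactly the reverse of Lemma \ref{lem_anstreicher} and fails strictly whenever $X Y \neq Y X$, since $\tr ( X^2 Y^2 ) - \tr ( X Y X Y ) = \frac{1}{2} \braketHS{ X Y - Y X , X Y - Y X }$. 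Trading both aligned terms for crossed ones instead, so that the left side is lower-bounded by $6 \tr ( X Y X Y ) \tr ( Y^2 )$, also fails: $\tr ( X Y X Y ) = \tr ( ( X Y )^2 )$ can be negative (take $X = \mathrm{diag} ( 1, -1 )$ and $Y$ the off-diagonal flip), while $\tr ( X Y^2 )^2 \geq 0$.

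The missing idea is symmetrization. The paper introduces the Hermitian matrix $N \coloneqq \rho^{-1} U \rho^{-1} V \rho^{-1} + \rho^{-1} V \rho^{-1} U \rho^{-1}$, for which $\rho^{1/2} N \rho^{1/2} = X Y + Y X$, and applies Cauchy--Schwarz to the \emph{symmetrized} product: since $\braketHS{ X Y + Y X , Y } = 2 \tr ( X Y^2 )$, one gets
\[
4 \tr ( X Y^2 )^2 \leq \tr \left( ( X Y + Y X )^2 \right) \tr ( Y^2 ) = \left[ 2 \tr ( X^2 Y^2 ) + 2 \tr ( X Y X Y ) \right] \tr ( Y^2 ) ,
\]
which bounds $\tr ( X Y^2 )^2$ by the \emph{balanced} combination of aligned and crossed terms, not by the aligned term alone. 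Lemma \ref{lem_anstreicher} is then used in the correct, lower-bounding direction---sacrificing exactly one aligned term for a crossed term---to get $4 \tr ( X^2 Y^2 ) + 2 \tr ( X Y X Y ) \geq 3 \tr ( X^2 Y^2 ) + 3 \tr ( X Y X Y ) = \frac{3}{2} \tr ( ( X Y + Y X )^2 )$, and the two displays combine to give precisely the constant $\frac{3}{2}$. Your two tools are individually true, but the unsymmetrized Cauchy--Schwarz discards exactly the crossed term you need; without forming $X Y + Y X$ the argument cannot be completed.
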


\begin{proof}
We aim to verify the inequality
\[
D^4 \overline{R} ( \vec ( \rho ) ) [ u, u, v, v ] D^2 \overline{R} ( \vec ( \rho ) ) [ v, v ] \geq \frac{3}{2} \left( D^3 \overline{R} ( \vec ( \rho ) ) [ u, v, v ] \right)^2 .
\]
Define $U \coloneqq \vec^{-1} ( u )$ and $V \coloneqq \vec^{-1} ( v )$.
By Lemma \ref{lem_D-logdet}, we have 
\begin{align*}
& D^2 \overline{R} ( \vec ( \rho ) ) [v, v ] = \tr \left( \rho^ {-1} V \rho^{-1} V \right) , \\
& D^3 \overline{R} ( \vec ( \rho ) ) [u, v, v ] = - 2 \tr \left( \rho^{-1} V \rho^{-1} V \rho^{-1} U \right) , \\
& D^4 \overline{R} ( \vec ( \rho ) ) [ u, u, v, v ] = 4 \tr \left( \rho^{-1} U \rho^{-1} U \rho^{-1} V \rho^{-1} V \right) + 2 \tr \left( \rho^{-1} U \rho^{-1} V \rho^{-1} U \rho^{-1} V \right) . 
\end{align*}
Define $N \coloneqq \rho^{-1} U \rho^{-1} V \rho^{-1}+ \rho^{-1} V \rho^{-1} U \rho^{-1}$. 
Then, we have 
\[
\tr(N V) = 2 \tr \left( \rho^{-1} V \rho^{-1} V \rho^{-1} U \right) = - D^3 \overline{R} ( \vec ( \rho ) ) [u, v, v ] . 
\]
Applying Lemma \ref{lem_anstreicher} with $A = \rho^{-1/2} U \rho^{ -1/2}$ and $B = \rho^{-1/2} V \rho^{ -1/2}$, we write 
\begin{align*}
& D^4 \overline{R} ( \vec ( \rho ) ) [ u, u, v, v ] \\
& \quad = \left[ 4 \tr \left( \rho^{-1} U \rho^{-1} U \rho^{-1} V \rho^{-1} V \right) + 2 \tr \left( \rho^{-1} U \rho^{-1} V \rho^{-1} U \rho^{-1} V \right) \right] \\
& \quad \geq \left[ 3 \tr \left( \rho^{-1} U \rho^{-1} U \rho^{-1} V \rho^{-1} V \right) + 3 \tr \left( \rho^{-1} U \rho^{-1} V \rho^{-1} U \rho^{-1} V \right) \right] \\
& \quad = \frac{3}{2} \tr \left(N \rho N \rho \right) . 
\end{align*}
By the Cauchy-Schwarz inequality, we write 
\begin{align*}
& D^4 \overline{R} ( \vec ( \rho ) ) [ u, u, v, v ] D^2 \overline{R} ( \vec ( \rho ) ) [ v, v ] \\
&\quad = \frac{3}{2} \tr \left(N \rho N \rho \right) \tr \left( \rho^ {-1} V \rho^{-1} V \right) \\
&\quad = \frac{3}{2} \braketHS{  \rho^{1/2} N \rho^{1/2} ,  \rho^{1/2} N \rho^{1/2} } \braketHS{  \rho^{-1/2} V \rho^{-1/2} ,  \rho^{-1/2} V \rho^{-1/2} }\\
&\quad \geq \frac{3}{2} \left( \braketHS{  \rho^{1/2} N \rho^{1/2} ,  \rho^{-1/2} V \rho^{-1/2} } \right)^2\\
&\quad = \frac{3}{2}\tr(NV)^2\\
&\quad = \frac{3}{2} \left( D^3 \overline{R} ( \vec ( \rho ) ) [u, v, v ] \right)^2 . 
\end{align*}
\end{proof}

\added{We conclude this section with the following corollary, which states that \replaced{the function $\overline{L}_t$ is VB-convex}{the volumetric barrier associated with $\overline{L}_t$ is convex}, and hence the function $- \log \det \nabla^2 \overline{L}_t ( \cdot )$ is convex.}

\begin{corollary} \label{cor_vbc_Vt}
The function $\overline{L}_t (\rho)$ is VB-convex. 
\deleted{The volumetric barrier associated with $\overline{L}_t (\rho)$ is convex.} 
\end{corollary}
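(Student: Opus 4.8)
The plan is to prove Corollary~\ref{cor_vbc_Vt} by decomposing $\overline{L}_t$ into its constituent pieces and invoking the closure property of VB-convexity under positive linear combinations. Recall the definition
\[
L_t ( \rho ) = \sum_{\tau = 1}^t f_\tau ( \rho ) + \lambda R ( \rho ) ,
\]
and that after vectorization we write $L_t ( \cdot ) = \overline{L}_t ( \vec ( \cdot ) )$, so that $\overline{L}_t = \sum_{\tau = 1}^t \overline{f}_\tau + \lambda \overline{R}$. The work establishing VB-convexity of each summand has already been carried out: Lemma~\ref{lem_loss} shows that each $\overline{f}_\tau$ is VB-convex, and Lemma~\ref{lem_logdet} shows that $\overline{R}$ is VB-convex. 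Since $\lambda > 0$ and the coefficients of the loss terms are all equal to $1 > 0$, every term in the sum is a strictly positive multiple of a VB-convex function.

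Given these ingredients, the proof is essentially an induction on the number of summands using Lemma~\ref{lem_sum_vbc}, which asserts that $\alpha \varphi_1 + \beta \varphi_2$ is VB-convex whenever $\varphi_1, \varphi_2$ are VB-convex and $\alpha, \beta > 0$. First I would apply Lemma~\ref{lem_sum_vbc} to combine $\overline{f}_1$ and $\overline{f}_2$, then fold in $\overline{f}_3, \dots, \overline{f}_t$ one at a time, and finally add the regularizer $\lambda \overline{R}$ with coefficient $\lambda > 0$. Each step preserves VB-convexity, so $\overline{L}_t$ is VB-convex. The final sentence of the corollary's context, that $-\log \det \nabla^2 \overline{L}_t(\cdot) = \VB_{\overline{L}_t}(\cdot)$ (up to the factor $1/2$) is convex, then follows immediately from Corollary~\ref{cor_main}, which guarantees that the volumetric barrier associated with any VB-convex function is convex.

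I do not anticipate any genuine obstacle here, since all the substantive analysis lives in the earlier lemmas; the corollary is a clean corollary in the literal sense. The one point requiring mild care is ensuring that the domains match up so that the sum is well defined and strictly convex on a common open domain—namely the set of positive definite $\rho$, where $\rho^{-1}$ exists and all the derivative formulas from Lemma~\ref{lem_D-logdet} are valid. Strict convexity of the sum is automatic from strict convexity of $\overline{R}$ alone (the loss terms contribute positive semidefinite Hessians), so the strictness requirement in Definition~\ref{def_vbc} is met. Thus the corollary reduces to a routine invocation of Lemmas~\ref{lem_loss}, \ref{lem_logdet}, and \ref{lem_sum_vbc}, followed by Corollary~\ref{cor_main}.
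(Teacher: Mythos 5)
Your proposal is correct and follows essentially the same route as the paper: decompose $\overline{L}_t = \sum_{\tau=1}^t \overline{f}_\tau + \lambda \overline{R}$, invoke Lemma \ref{lem_loss} and Lemma \ref{lem_logdet} for the summands, and combine via the closure property of Lemma \ref{lem_sum_vbc}, with Corollary \ref{cor_main} giving convexity of the volumetric barrier. Your added remarks on iterating the two-function lemma and on strict convexity of the sum coming from $\overline{R}$ are sensible refinements of details the paper leaves implicit, but they do not constitute a different argument.
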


\begin{proof}
\added{
Lemma \ref{lem_loss} and Lemma \ref{lem_logdet} establish the VB-convexity of the functions $\overline{f}_t$ and $\overline{R}$, respectively. 
Since VB-convexity is closed under addition (Lemma \ref{lem_sum_vbc}), the function $\overline{L}_t = \sum_{\tau = 1}^t \overline{f}_\tau + \overline{R}$ is also VB-convex. 
}
\end{proof}


\section{Proof of Theorem \ref{thm_main}} \label{sec_proof}

\subsection{Algorithm} \label{sec_alg}

The algorithm we propose is a direct generalization of VB-FTRL \citep{Jezequel2022} for LL-OLQS, except that it gets rid of the affine reparameterization step in VB-FTRL (see Section \ref{sec_miss} for details). 
Notice that when all matrices involved share a common eigenbasis, our algorithm specializes to an OPS algorithm. 
Define $R ( \rho ) \coloneqq - \log \det \rho$. 
The algorithm proceeds as follows: 
\begin{itemize}
\item Let $P_0 ( \rho ) \coloneqq \lambda R ( \rho )$ for some $\lambda > 0$. 
\item At the $t$-th round, the algorithm outputs 
\[
\rho_t \in \argmin_{\rho \in \mathcal{D}_d} P_{t - 1} ( \rho ) , 
\]
where 
\begin{align*}
& P_t ( \rho ) \coloneqq L_t ( \rho ) + \mu V_t ( \rho ) , \\
& L_t ( \rho ) \coloneqq \sum_{\tau = 1}^t f_\tau ( \rho ) + \lambda R ( \rho ) , \\
& V_t ( \rho ) \coloneqq \frac{1}{2} \log \det \nabla^2 \overline{L}_t ( \vec ( \rho ) ) . 
\end{align*}
\end{itemize}

\subsection{Regret Analysis} \label{sec_analysis}

Our goal in this sub-section is to prove the following theorem, which implies Theorem \ref{thm_main}. 

\begin{theorem} \label{thm_regret} 
The algorithm in the previous sub-section achieves a regret rate of \ $O ( d^2 \log ( T + d ) )$ with $\lambda = 300$ and $\mu = 10$. 
\end{theorem}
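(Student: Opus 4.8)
The plan is to run a follow-the-regularized-leader (FTRL) analysis that mirrors the decomposition of \citet{Jezequel2022} for online portfolio selection, but that dispenses with their affine-reparameterization step in favour of a direct argument on $\mathcal{D}_d$ in the spirit of \citet{Tsai2023a}. Writing $\rho_t = \argmin_{\rho} P_{t-1}(\rho)$ and applying a be-the-leader inequality to the cumulative objectives $P_0, P_1, \dots, P_T$, I would first split the regret against a fixed comparator $\rho^\star$ as
\begin{equation*}
\regret_T \le \bias + \miss, \qquad \miss \coloneqq \sum_{t=1}^T \left( f_t(\rho_t) - f_t(\rho_{t+1}) \right),
\end{equation*}
where $\bias$ collects the regularizer terms $\lambda R$ and $\mu V_t$ evaluated at the comparator and at the iterates, and where I use that the telescoped increments $V_t(\rho_{t+1}) - V_{t-1}(\rho_{t+1}) \ge 0$ (adding a log-loss only enlarges the volumetric barrier) to discard a nonpositive contribution. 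I would further separate $\bias = \logBias + \volBias$, isolating the log-determinant barrier $\lambda R$ from the volumetric barrier $\mu V_t$.

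The next step is to bound $\bias$. Because both $R(\rho) = -\log\det\rho$ and $V_t$ diverge as $\rho$ approaches the boundary of $\mathcal{D}_d$, the comparator must be regularized: I would replace $\rho^\star$ by $\tilde\rho \coloneqq (1-\gamma)\rho^\star + (\gamma/d) I$ with $\gamma = \Theta(1/T)$, which inflates the regret by only $O(1)$ (since $\tr(A_t\tilde\rho) \ge (1-\gamma)\tr(A_t\rho^\star)$) while ensuring $\tilde\rho \ge (\gamma/d) I$. Then $\logBias = O(\lambda\, d \log(d/\gamma)) = O(d\log(d+T))$, which is of lower order. The leading term is $\volBias$: since $V_t = \tfrac12\log\det\nabla^2\overline{L}_t$ lives on the $d^2$-dimensional space of Hermitian matrices and each new loss contributes a rank-one positive semi-definite update $\nabla^2\overline{f}_t$, a matrix-determinant-lemma computation bounds $\mu V_T(\tilde\rho)$ by $O(\mu\, d^2 \log(1 + T/\lambda)) = O(d^2\log(d+T))$, matching the claimed rate.

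The crux, and the step I expect to be hardest, is controlling $\miss$. Here the convexity of the volumetric barrier established in Corollary \ref{cor_vbc_Vt} is essential: it guarantees that $\mu V_t$ genuinely contributes curvature, so that consecutive FTRL iterates $\rho_t$ and $\rho_{t+1}$ stay close in the local norm $\norm{ \cdot }_{\nabla^2 P_{t-1}(\rho_t)}$. Using self-concordance of $P_{t-1}$ (inherited from $R$, the $f_\tau$, and the barrier structure), I would bound each term $f_t(\rho_t) - f_t(\rho_{t+1})$ by the leverage score $\sigma_t \coloneqq \norm{ \nabla \overline{f}_t(\rho_t) }_{\nabla^{-2}\overline{L}_{t-1}(\rho_t)}^2$ up to the self-concordance constant, and then sum: $\sum_t \sigma_t$ telescopes through $\log\det\nabla^2\overline{L}_t$ and is again $O(d^2\log(d+T))$. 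The main obstacle is that classically this local-norm stability is obtained only after reparameterizing the simplex to a full-dimensional chart so that the self-concordance machinery of \citet{Nesterov2018a} applies verbatim; on $\mathcal{D}_d$ the trace-one constraint makes such a global affine chart unavailable. Following \citet{Tsai2023a}, I would instead work directly with the constrained minimizers, establishing the Dikin-ellipsoid containment of $\rho_{t+1}$ near $\rho_t$ together with a domination $\nabla^2\overline{f}_t \le c\,\nabla^2(\mu V_{t-1} + \lambda R)$ on the relevant region. The constants $\lambda = 300$ and $\mu = 10$ are precisely what make this domination and the self-concordance bookkeeping close, so that $\logBias$, $\volBias$, and $\miss$ are simultaneously $O(d^2\log(d+T))$.
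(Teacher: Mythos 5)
There is a genuine gap, and it sits exactly where you predicted the difficulty would be. Your decomposition discards the volumetric-barrier increments as ``a nonpositive contribution'' and then tries to bound $\miss = \sum_{t}\left( f_t(\rho_t) - f_t(\rho_{t+1}) \right)$ on its own, claiming that the sum of leverage scores $\sum_t \sigma_t$ ``telescopes through $\log\det\nabla^2\overline{L}_t$.'' It does not: each $\sigma_t$ is measured in the Hessian $\nabla^2\overline{L}_t(\vec(\rho_t))$ evaluated at the \emph{moving} iterate $\rho_t$, so the quantities $\log\det\nabla^2\overline{L}_t(\vec(\rho_t))$ and $\log\det\nabla^2\overline{L}_t(\vec(\rho_{t+1}))$ appearing in consecutive rounds do not cancel, and no elliptical-potential argument applies. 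If such a standalone bound on $\sum_t \sigma_t$ were available, plain FTRL with the log-determinant regularizer alone---with no volumetric barrier in the objective at all---would already achieve logarithmic regret with low computational cost, which is essentially the long-open problem that motivated VB-FTRL in the first place.

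The paper's proof keeps precisely the term you threw away. The be-the-leader decomposition yields, per round, $\gain_t = \mu\left( V_{t-1}(\rho_t) - V_t(\rho_t) \right) = \frac{\mu}{2}\log(1 - \pi_t) \le -\frac{\mu}{2}\pi_t$, a rank-one determinant computation carried out at the single point $\rho_t$, and the whole analysis rests on the pointwise cancellation $\gain_t + \miss_t \le 0$. Concretely, one shows $\miss_t = P_t(\rho_t) - P_t(\rho_{t+1}) \le \frac{\mu}{2}\pi_t$ by using the convexity of $V_t$ (Corollary \ref{cor_vbc_Vt}) to replace $P_t$ with its linearized lower bound $\underline{P}_t$, then invoking self-concordance of $\overline{L}_t$ (Lemma \ref{lem_tsai}), the bound $\pi_t \le 1/(\lambda+1)$ (Lemma \ref{lem_newtonDecrement}), and the estimate $\norm{w_t}_{H_t^{-1}} \le 2\pi_t$ on the difference of volumetric-barrier gradients; the constants $\lambda = 300$ and $\mu = 10$ exist only to make this cancellation close numerically. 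The cumulative cost of the volumetric barrier is then paid once, at a single comparator point, inside $\volBias_T$---which your bias analysis handles correctly. So your bias bound is fine and your per-round estimate $f_t(\rho_t) - f_t(\rho_{t+1}) = O(\sigma_t)$ is plausible, but the summation step $\sum_t \sigma_t = O(d^2\log(d+T))$ is unsupported, and it is the crux of the entire theorem.
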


The following regret analysis essentially follows the strategy of \citet{Jezequel2022}. 
As discussed in Section \ref{sec_intro}, two major differences are: 
\begin{enumerate}
\item We avoid the affine reparametrization step. 
\item We establish the convexity of $V_t$, the VB associated with $L_t$, via the approach in Section \ref{sec_vbc}. 
\end{enumerate}

Notice that  $P_0 ( \rho ) = -\log \det(\rho) \geq 0$ by definition. 
We 
\added{decompose the regret as}
\begin{align*}
\regret_T & = \sum_{t = 1}^T f_t ( \rho_t ) - \min_{\rho \in \mathcal{D}_d} \sum_{t = 1}^T f_t ( \rho ) \\
& \leq \sum_{t = 1}^T \left( f_t ( \rho_t ) + \min_{\rho \in \mathcal{D}_d} P_{t - 1} ( \rho ) - \min_{\rho \in \mathcal{D}_d} P_t ( \rho ) \right) - \min_{\rho \in \mathcal{D}_d} \sum_{t = 1}^T f_t ( \rho ) + \min_{\rho \in \mathcal{D}_d} P_T ( \rho ) \\
& = \sum_{t = 1}^T \left( P_t ( \rho_t ) + \mu V_{t - 1} ( \rho_t ) - \mu V_t ( \rho_t ) - P_t ( \rho_{t + 1} ) \right) - \min_{\rho \in \mathcal{D}_d} \sum_{t = 1}^T f_t ( \rho ) + \min_{\rho \in \mathcal{D}_d} P_T ( \rho ) . 
\end{align*}
Then, we have 
\[
\regret_T \leq \bias_T + \sum_{t = 1}^T \left( \gain_t + \miss_t \right) , 
\] 
where
\begin{align*}
& \bias_T \coloneqq \min_{\rho \in \mathcal{D}_d} P_T ( \rho ) - \min_{\rho \in \mathcal{D}_d} \sum_{t = 1}^T f_t ( \rho ) , \\
& \gain_t \coloneqq \mu \left( V_{t - 1} ( \rho_t ) - V_t ( \rho_t ) \right) , \text{~ and} \\
& \miss_t \coloneqq P_t ( \rho_t ) - P_t ( \rho_{t + 1} ) .  
\end{align*}

\added{The term $\bias_T$ represents the 
bias
of using the volumetric barrier and the negative log-determinant function as regularizers. 
The term $\gain_t$ measures the benefit of 
introducing
the 
volumetric barrier $V_t$ \added[id=Wei-Fu]{, which depends on the number of rounds $t$}. 
\added[id=Wei-Fu]{The term $\miss_t$ arises because 
the algorithm
\replaced{does not have access to the loss function $f_t$ at the $t$-th round; 
that is, the algorithm can only follow the regularized leader rather than be the regularized leader.}{can only follow the regularized leader rather than be the regularized leader. 
In other words, in the algorithm at the $t$-th round we minimizes $P_{t-1}$, which does not contain the loss function $f_t$.}}
%
} 
\comment[id=YHL]{Your last sentence about $\miss_t$ is not understandable. Please rephrase.}
Define
\begin{alignat*}{2}
& H_t ( \rho ) \coloneqq \nabla^2 \overline{L}_t ( \vec ( \rho ) ) , \quad && \pi_t ( \rho ) \coloneqq \norm{ \nabla \overline{f}_t ( \vec ( \rho ) ) }_{\left[ H_t ( \rho ) \right]^{-1}}^2 , \\
& H_t \coloneqq H_t ( \rho_t ) , \quad && \pi_t \coloneqq \pi_t ( \rho_t ) . 
\end{alignat*}
The rest of analysis\added{, to be presented in the following three subsections, }consists of three parts. 
\begin{enumerate}
\item We prove that $\bias_T = O ( d^2 \log ( T + d ) )$. 
\item We prove that $\gain_t \leq - ( \mu / 2 ) {\pi}_t$.
\item We prove that $\miss_t \leq ( \mu / 2 ) {\pi}_t$. 
\end{enumerate}
\added{The three results together imply the regret bound in Theorem \ref{thm_regret}.}

\subsubsection{Bounding $\bias_T$}

We start with the decomposition $\bias_T = \volBias_T + \logBias_T$, where
\[
\volBias_T \coloneqq \min_{\rho \in \mathcal{D}_d} P_t ( \rho ) - \min_{\rho \in \mathcal{D}_d} L_t ( \rho ) , \quad \logBias_T \coloneqq \min_{\rho \in \mathcal{D}_d} L_t ( \rho ) - \min_{\rho \in \mathcal{D}_d} \sum_{t = 1}^T f_t ( \rho ) . 
\]
\added[id=YHL]{Here, $\volBias_T$ and $\logBias_T$ are the biases due to the volumetric barrier and the negative log-determinant function, respectively.
}

We first bound $\volBias_T$. 
Let $\rho_T^\star \in \argmin_{\rho \in \mathcal{D}_d} L_T ( \rho )$. 
Then, we write 
\begin{align*}
\volBias_T & \leq P_t ( \rho_T^\star ) - L_t ( \rho_T^\star ) \\
& = \mu V_T ( \rho_T^\star ) \\
& = \frac{\mu}{2} \log \det \nabla^2 \overline{L}_t ( \vec ( \rho_T^\star ) ) . 
\end{align*}
\comment[id=YHL]{No need for the explanation here.}
The function $\overline{L}_t$ is strictly convex so $\lambda_{\max} ( \nabla^2 \overline{L}_t ( \vec ( \rho_T^\star ) ) )$, the largest eigenvalue of its Hessian, is strictly positive. 
We obtain
\begin{equation}
\volBias_T \leq \frac{\mu d^2}{2} \log \lambda_{\max} ( \nabla^2 \overline{L}_t ( \vec ( \rho_T^\star ) ) ) . 
\end{equation}
By \added[id=Wei-Fu]{ Lemma \ref{lem_Hessian_R}, } Lemma \ref{lem_relativeSmoothness} and Lemma \ref{lem_iterateLowerBound}, we write 
\begin{align*}
\nabla^2 \overline{L}_t ( \vec ( \rho_T^\star ) ) & = \sum_{t = 1}^T \nabla^2 \overline{f}_t ( \vec ( \rho_T^\star ) ) + \lambda \nabla^2 \overline{R} ( \vec ( \rho_T^\star ) ) \\
&\leq \left( T + \lambda \right) \nabla^2 \overline{R} ( \vec ( \rho_T^\star ) )\\
& = \left( T + \lambda \right) \left( (( \rho_T^\star )^{-1})^\transp \otimes ( \rho_T^\star )^{-1} \right) \\
& \leq \frac{( T + \lambda d )^3}{\lambda^2} ( I \otimes I ). 
\end{align*}
Therefore, 
\begin{equation}
\volBias_T \leq \frac{\mu d^2}{2} \log \frac{( T + \lambda d )^3}{\lambda^2} = O ( d^2 \log ( T + d ) ) . \label{eq_volBias}
\end{equation}

We then bound $\logBias_T$. 
Define
\[
\rho_T^\circ \coloneqq \argmin_{\rho \in \mathcal{D}_d} \sum_{t = 1}^T f_t ( \rho ), \quad [ \rho_T^\circ ]_\alpha \coloneqq ( 1 - \alpha ) \rho^\circ + \frac{\alpha}{d} I . 
\]

\comment[id=YHL]{No need for the explanations here.}
By Lemma \ref{lem_convexCombination}, we write 
\begin{align*}
\logBias_T & = \min_{\rho \in \mathcal{D}_d} L_t ( \rho ) - \min_{\rho \in \mathcal{D}_d} \sum_{t = 1}^T f_t ( \rho ) \\
& \leq L_t ( [ \rho_T^\circ ]_\alpha ) - \sum_{t = 1}^T f_t ( \rho_T^\circ ) \\
& = L_t ( [ \rho_T^\circ ]_\alpha ) -  \sum_{t = 1}^T f_t ( [ \rho_T^\circ ]_\alpha ) + \sum_{t = 1}^T f_t ( [ \rho_T^\circ ]_\alpha ) - \sum_{t = 1}^T f_t ( \rho_T^\circ ) \\
& \leq \lambda R ( [ \rho_T^\circ ]_\alpha ) + \frac{\alpha}{1 - \alpha} T \\
& \leq - \lambda d \log \left( \frac{\alpha}{d} \right) + \frac{\alpha}{1 - \alpha} T . 
\end{align*}
Let $\alpha = \lambda d / ( T + \lambda d )$. 
We obtain 
\begin{equation}
\logBias_T \leq - \lambda d \log \left( \frac{\lambda}{T + \lambda d} \right) + \lambda d = O ( d \log ( T + d ) ) .  \label{eq_logBias}
\end{equation}
Combining the two inequalities \eqref{eq_volBias} and \eqref{eq_logBias}, we conclude that $\bias_T = O ( d^2 \log ( T + d ) )$. 

\subsubsection{Bounding $\gain_t$} \label{sec_gain}

By the fact that $\det ( A B ) = \det (A) \det (B)$, we write 
\begin{align*}
\gain_t & = \mu \left( V_{t - 1} ( \rho_t ) - V_t ( \rho_t ) \right) \\
& = \frac{\mu}{2} \left[ \log \det \left( \nabla^2 \overline{L}_{t - 1} ( \vec ( \rho_t ) ) \right) - \log \det \left( \nabla^2 \overline{L}_t ( \vec ( \rho_t ) ) \right) \right] \\
& = \frac{\mu}{2} \left[ \log \det \left( {H}_t - \nabla^2 \overline{f}_t ( \vec ( \rho_t ) ) \right) - \log \det \left( {H}_t \right) \right] \\
& = \frac{\mu}{2} \log \det \left( I - {H}_t^{-1/2} \nabla^2 \overline{f}_t ( \vec ( \rho_t ) ) {H}_t^{-1/2} \right) , 
\end{align*}
where strict convexity of $\overline{L}_t$ ensures the existence of $H_t^{-1/2}$. \replaced[id=Wei-Fu]{We can further simplify this representation of $\gain_t$ by observing that}{Notice that} 
\begin{equation}
\nabla^2 \overline{f}_t ( \vec ( \rho_t ) ) = \frac{\vec( A_t ) \vec( A_t )^{\hermit}}{\left( \tr ( A_t \rho ) \right)^2} = \nabla \overline{f}_t ( \vec ( \rho_t ) ) \nabla \overline{f}_t ( \vec ( \rho_t ) )^\hermit . \label{eq_gradGrad}
\end{equation}
Therefore, the matrix ${H}_t^{-1/2} \nabla^2 \overline{f}_t ( \vec ( \rho_t ) ) {H}_t^{-1/2}$ is single-ranked and has two distinct eigenvalues: 
One eigenvalue is $0$ of multiplicity $( d^2 - 1 )$; 
the other eigenvalue is
\[
\tr ( {H}_t^{-1/2} \nabla^2 \overline{f}_t ( \vec ( \rho_t ) ) {H}_t^{-1/2} ) = \braket{ \nabla \overline{f}_t ( \vec ( \rho_t ) ), {H}_t^{-1} \nabla \overline{f}_t ( \vec ( \rho_t ) ) } = \pi_t . 
\]
of multiplicity $1$. 
Then, the matrix $( I - {H}_t^{-1/2} \nabla^2 \overline{f}_t ( \vec ( \rho_t ) ) {H}_t^{-1/2} )$ has eigenvalue $1$ of multiplicity $( d^2 - 1 )$ and $( 1 - \pi_t )$ of multiplicity $1$ and hence we obtain
\[
\gain_t = \frac{\mu}{2} \log ( 1 - \pi_t ) . 
\]
By Lemma \ref{lem_newtonDecrement} and the fact that $\log ( 1 - x ) \leq - x$ for all $0 \leq x < 1$, we obtain 
\[
\gain_t \leq - \frac{\mu}{2} \pi_t . 
\]

\subsubsection{Bounding $\miss_t$} \label{sec_miss}

Define 
\[
\underline{P}_t ( \rho ) \coloneqq L_t ( \rho ) + \mu \left( V_t ( \rho_t ) + \braket{ \nabla \overline{V}_t ( \vec ( \rho_t ), \rho - \rho_t ) } \right) . 
\]

\replaced[id=Wei-Fu]{
By Corollary \ref{cor_vbc_Vt}, the function $V_t$ is convex. 
Consequently, $\underline{P}_t ( \rho ) \leq P_t ( \rho )$, and we have}{ By Lemma \ref{lem_loss}, Lemma \ref{lem_logdet}, and Lemma \ref{lem_sum_vbc}, the function $\overline{L}_t$ is VB-convex. 
Then, the function $V_t$ is convex, and we have $\underline{P}_t ( \rho ) \leq P_t ( \rho )$. 
Hence, 
}
\[
\miss_t = P_t ( \rho_t ) - P_t ( \rho_{t + 1} ) \leq \underline{P}_t ( \rho_t ) - \underline{P}_t ( \rho_{t + 1} ) , 
\]
where we have used the fact that $P_t ( \rho_t ) = \underline{P}_t ( \rho_t )$. 

The following lemma allows us to bypass the affine reparameterization step in the regret analysis by \citet{Jezequel2022}.
Its proof is analogous to that in \citet[Lemma D.1]{Tsai2023a} and is deferred to Appendix \ref{app_lemmas} (Lemma \ref{lem_tsai}). 

\begin{lemma}
It holds that 
\[
\underline{P}_t ( \rho_t ) - \underline{P}_t ( \rho_{t + 1} ) \leq \norm{ \nabla \overline{f}_t ( \vec( \rho_t ) ) + \mu \left( \nabla \overline{V}_t ( \vec ( \rho_t ) ) - \nabla \overline{V}_{t - 1} ( \vec ( \rho_{t} ) ) \right) }_{H_t^{-1}}^2
\]
if 
\[
\norm{ \nabla \overline{f}_t ( \vec( \rho_t ) ) + \mu \left( \nabla \overline{V}_t ( \vec ( \rho_t ) ) - \nabla \overline{V}_{t - 1} ( \vec ( \rho_{t} ) ) \right) }  _{H_t^{-1}} \leq \frac{1}{2} . 
\]
\end{lemma}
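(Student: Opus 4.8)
The plan is to recognize $\underline{P}_t$ as a self-concordant function on the affine hyperplane $\{\rho : \tr \rho = 1\}$, to bound the claimed difference by the suboptimality gap of $\rho_t$ with respect to $\underline{P}_t$ on that hyperplane, and then to control this gap through a constrained Newton decrement. This is exactly the device that lets us dispense with the affine reparameterization used by \citet{Jezequel2022}: instead of parameterizing the hyperplane explicitly, I work directly with the tangent subspace $\mathcal{T} \coloneqq \{ v \in \mathcal{H}^d : \tr v = 0 \}$.

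First I would record the first-order optimality of $\rho_t$ and a gradient identity. Since $\rho_t \in \argmin_{\rho \in \mathcal{D}_d} P_{t-1}(\rho)$ and the barrier $\lambda R$ keeps $\rho_t$ strictly inside the positive-semidefinite cone, the only active constraint is $\tr \rho = 1$; hence $\braket{ \nabla \overline{P}_{t-1}(\vec(\rho_t)), v } = 0$ for every $v \in \mathcal{T}$. Writing $g_t \coloneqq \nabla \overline{f}_t(\vec(\rho_t)) + \mu( \nabla \overline{V}_t(\vec(\rho_t)) - \nabla \overline{V}_{t-1}(\vec(\rho_t)) )$ and using $L_t = L_{t-1} + f_t$, a direct computation gives $\nabla \overline{\underline{P}}_t(\vec(\rho_t)) = \nabla \overline{P}_{t-1}(\vec(\rho_t)) + g_t$ (the linearized volumetric term contributes only the constant $\mu \nabla \overline{V}_t(\vec(\rho_t))$). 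Consequently $\braket{ \nabla \overline{\underline{P}}_t(\vec(\rho_t)), v } = \braket{ g_t, v }$ for all $v \in \mathcal{T}$, while the Hessian $\nabla^2 \overline{\underline{P}}_t(\vec(\rho_t))$ equals $H_t$ because the added term is affine.

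Next I would reduce to a suboptimality gap and invoke self-concordance. Because $\rho_{t+1} \in \mathcal{D}_d$ lies in the hyperplane, $\underline{P}_t(\rho_{t+1}) \geq \min_{\tr \rho = 1} \underline{P}_t(\rho)$, so it suffices to bound $\underline{P}_t(\rho_t) - \min_{\tr \rho = 1} \underline{P}_t$. Since $\underline{P}_t = L_t + (\text{affine})$ and $L_t$ is standard self-concordant as a sum of negative logarithms of affine functions and $\lambda R$ with $\lambda \geq 1$ (Definition \ref{def_sc}, Lemma \ref{lem_sum_sc}), the restriction $\psi(v) \coloneqq \underline{P}_t(\rho_t + v)$, $v \in \mathcal{T}$, is self-concordant with $\nabla^2 \psi(0)$ the restriction of $H_t$ to $\mathcal{T}$; the log-determinant barrier guarantees its minimizer is attained in the interior. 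By Step 1 the Newton decrement of $\psi$ at $0$ satisfies
\[
\lambda_\psi = \sup_{v \in \mathcal{T} \setminus \{0\}} \frac{\braket{ g_t, v }}{\norm{ v }_{H_t}} \leq \sup_{v \neq 0} \frac{\braket{ g_t, v }}{\norm{ v }_{H_t}} = \norm{ g_t }_{H_t^{-1}} \leq \frac{1}{2} < 1 ,
\]
the hypothesis being used precisely here. The standard self-concordance estimate then gives $\psi(0) - \min \psi \leq \omega_*(\lambda_\psi)$ with $\omega_*(s) = -s - \log(1 - s)$, and since $\omega_*(s) \leq s^2$ on $[0, 1/2]$ and $\omega_*$ is increasing, $\underline{P}_t(\rho_t) - \underline{P}_t(\rho_{t+1}) \leq \omega_*(\lambda_\psi) \leq \norm{ g_t }_{H_t^{-1}}^2$, which is the claim.

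The main obstacle is the careful, reparameterization-free treatment of the trace constraint, which is the entire point of the lemma. The delicate steps are: (i) justifying that $\rho_t$ and the relevant minimizer lie in the interior of the positive-semidefinite cone, so that the optimality condition holds with equality on $\mathcal{T}$ and the self-concordance suboptimality estimate is applicable; and (ii) verifying that passing from the constrained Newton decrement $\lambda_\psi$ on $\mathcal{T}$ to the full dual norm $\norm{ g_t }_{H_t^{-1}}$ only enlarges the quantity, so that the hypothesis $\norm{ g_t }_{H_t^{-1}} \leq 1/2$ is exactly what is needed to stay in the region where $\omega_*(s) \leq s^2$.
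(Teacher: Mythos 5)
Your proof is correct, but it takes a genuinely different route from the paper's. The paper's proof (Lemma \ref{lem_tsai} in Appendix \ref{app_lemmas}) never restricts to the trace hyperplane: it applies the two-point lower bound of Lemma \ref{lem_scBounds} to $\overline{\underline{P}}_t$ at $\vec(\rho_t)$ and $\vec(\rho_{t+1})$, uses only the variational-inequality form of optimality, $\braket{ \nabla \overline{P}_{t-1} ( \rho_t ), \vec ( \rho_{t+1} ) - \vec ( \rho_t ) } \geq 0$ (valid for any constrained minimizer, interior or not), and then finishes with Cauchy--Schwarz in the dual norms, the definition of the Fenchel conjugate ($ s r - \omega ( r ) \leq \omega_* ( s )$), and Lemma \ref{lem_omega}. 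You instead pass to the suboptimality gap over the hyperplane, use the equality form of the optimality condition, and invoke the standard Newton-decrement bound $\psi(0) - \min \psi \leq \omega_*(\lambda_\psi)$ for self-concordant functions --- a theorem (Nesterov's Theorem 5.1.13) that is not among the facts collected in the paper's appendix, though its own proof is essentially the paper's two-point argument applied at the minimizer of $\psi$. What your route buys is a clean conceptual reading (the miss term is a constrained Newton decrement, which you correctly dominate by the unconstrained dual norm $\norm{ g_t }_{H_t^{-1}}$, the right direction of inequality). What the paper's route buys is self-containedness and the avoidance of any hyperplane restriction at all: your tangent-space construction, while coordinate-free, quietly reintroduces a mild version of the reparameterization the lemma was designed to bypass, and it makes interiority of $\rho_t$ load-bearing --- note that justifying it requires not only that $\lambda R$ blows up at the boundary but also that $\mu V_{t-1}$ (which is part of $P_{t-1}$) is bounded below on $\mathcal{D}_d$, an easy but unstated point, since $\nabla^2 \overline{L}_{t-1} \geq \lambda ( \rho^{-1} )^\transp \otimes \rho^{-1} \geq \lambda ( I \otimes I )$ there. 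Both proofs arrive at the identical final chain $\underline{P}_t ( \rho_t ) - \underline{P}_t ( \rho_{t+1} ) \leq \omega_* ( \norm{ g_t }_{H_t^{-1}} ) \leq \norm{ g_t }_{H_t^{-1}}^2$ under the hypothesis $\norm{ g_t }_{H_t^{-1}} \leq 1/2$.
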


It remains to show that 
\begin{equation}
\norm{ \nabla \overline{f}_t ( \vec( \rho_t ) ) + \mu \left( \nabla \overline{V}_t ( \vec ( \rho_t ) ) - \nabla \overline{V}_{t - 1} ( \vec ( \rho_{t} ) ) \right) }  _{H_t^{-1}} \leq \min \Set{ \frac{1}{2} , \frac{\mu}{2} \pi_t } . \label{eq_eqDesired}
\end{equation}

Similarly as in Section \ref{sec_gain}, we have 
\[
V_t ( \rho_t ) - V_{t - 1} ( \rho_t ) = - \frac{1}{2} \log ( 1 - \pi_t ( \rho_t ) ) . 
\]
\comment[id=YHL]{I don't see the reason of your modification...}
\added{A tedious calculation then gives}
\[
\nabla \overline{V}_t ( \vec ( \rho_t ) ) - \nabla \overline{V}_t ( \vec ( \rho_t ) ) = \frac{\pi_t \nabla \overline{f}_t ( \vec ( \rho_t ) ) + ( 1 / 2 ) w_t}{1 - \pi_t} , 
\]
where we define
\begin{align}
& w_t \coloneqq \frac{ \nabla \overline{\varphi}_t ( \vec ( \rho_t ) ) }{ \tr ( A_t \rho_t ) ^ 2 } , \notag \\
& \varphi_t ( \rho ) \coloneqq \norm{ \vec ( A_t ) }_{H_t ( \rho )^{-1}}^2 . \label{eq_phi}
\end{align}
By the triangle inequality, we obtain 
\begin{align}
& \norm{ \nabla \overline{f}_t ( \vec( \rho_t ) ) + \mu \left( \nabla \overline{V}_t ( \vec ( \rho_t ) ) - \nabla \overline{V}_{t - 1} ( \vec ( \rho_{t} ) ) \right) }  _{H_t^{-1}} \notag \\
& \quad \leq \frac{1}{( 1 - \pi_t )^2} \left[ \left( 1 + ( \mu - 1 ) \pi_t \right) \sqrt{ \pi_t } + \frac{\mu}{2} \norm{ w_t }_{H_t^{-1}} \right]^2 . \label{eq_miss_bound}
\end{align}

\added{It remains to bound $\pi_t$ and $\norm{ w_t }_{H_t^{-1}}$.}
By Lemma \ref{lem_newtonDecrement}, we have $\pi_t \leq 1 / ( \lambda + 1 )$. 
The following lemma bounds $\norm{ w_t }_{H_t^{-1}}$. 
Its proof is deferred to \added{Appendix \ref{app_lemmas} (Lemma \ref{lem_omega_dual_norm})}. 

\begin{lemma}
It holds that $\norm{ w_t }_{H_t^{-1}} \leq 2 \pi_t$. 
\end{lemma}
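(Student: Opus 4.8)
The plan is to reduce the claim to a dual-norm bound on $\nabla \overline{\varphi}_t$ and then to exploit the self-concordance of $\overline{L}_t$. Recall from \eqref{eq_gradGrad} that $\nabla \overline{f}_t ( \vec ( \rho_t ) ) \nabla \overline{f}_t ( \vec ( \rho_t ) )^\hermit = \vec ( A_t ) \vec ( A_t )^\hermit / \tr ( A_t \rho_t )^2$, so that
\[
\pi_t = \braket{ \nabla \overline{f}_t ( \vec ( \rho_t ) ), H_t^{-1} \nabla \overline{f}_t ( \vec ( \rho_t ) ) } = \frac{ \braket{ \vec ( A_t ), H_t^{-1} \vec ( A_t ) } }{ \tr ( A_t \rho_t )^2 } = \frac{ \varphi_t ( \rho_t ) }{ \tr ( A_t \rho_t )^2 } ,
\]
where the last equality uses the definition \eqref{eq_phi} of $\varphi_t$. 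Since $w_t = \nabla \overline{\varphi}_t ( \vec ( \rho_t ) ) / \tr ( A_t \rho_t )^2$ and $\tr ( A_t \rho_t )^2 > 0$, the desired inequality $\norm{ w_t }_{H_t^{-1}} \leq 2 \pi_t$ is equivalent to
\[
\norm{ \nabla \overline{\varphi}_t ( \vec ( \rho_t ) ) }_{H_t^{-1}} \leq 2 \varphi_t ( \rho_t ) .
\]

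Next I would compute $\nabla \overline{\varphi}_t$ by differentiating the matrix inverse. Writing $H_t ( \rho ) = \nabla^2 \overline{L}_t ( \vec ( \rho ) )$ and using $\frac{d}{ds} M(s)^{-1} = - M(s)^{-1} \dot{M}(s) M(s)^{-1}$ together with $D H_t ( \rho ) [ u ] = D^3 \overline{L}_t ( \vec ( \rho ) ) [ u ]$, a direct calculation gives, with $y \coloneqq H_t^{-1} \vec ( A_t )$,
\[
\braket{ \nabla \overline{\varphi}_t ( \vec ( \rho_t ) ), u } = D \overline{\varphi}_t ( \vec ( \rho_t ) ) [ u ] = - D^3 \overline{L}_t ( \vec ( \rho_t ) ) [ u, y, y ] , \quad \forall u .
\]
By the variational characterization of the dual norm, it follows that
\[
\norm{ \nabla \overline{\varphi}_t ( \vec ( \rho_t ) ) }_{H_t^{-1}} = \sup_{ \norm{ u }_{H_t} \leq 1 } \lvert D^3 \overline{L}_t ( \vec ( \rho_t ) ) [ u, y, y ] \rvert .
\]

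Finally I would invoke self-concordance. The functions $\overline{f}_\tau ( x ) = - \log \braket{ \vec ( A_\tau ), x }$ and $\overline{R} = - \log \det$ are self-concordant with the standard constant $2$; since $\lambda = 300 \geq 1$, scaling by $\lambda$ preserves this constant, and summation does as well (Lemma \ref{lem_sum_sc}, using superadditivity of $t \mapsto t^{3/2}$). Hence $\overline{L}_t = \sum_{\tau = 1}^t \overline{f}_\tau + \lambda \overline{R}$ is self-concordant with constant $2$, and the polarized third-derivative inequality for self-concordant functions \citep{Nesterov2018a} yields $\lvert D^3 \overline{L}_t ( \vec ( \rho_t ) ) [ u, y, y ] \rvert \leq 2 \norm{ u }_{H_t} \norm{ y }_{H_t}^2$. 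Since $\norm{ y }_{H_t}^2 = \braket{ \vec ( A_t ), H_t^{-1} \vec ( A_t ) } = \varphi_t ( \rho_t )$, taking the supremum over $\norm{ u }_{H_t} \leq 1$ gives exactly $\norm{ \nabla \overline{\varphi}_t ( \vec ( \rho_t ) ) }_{H_t^{-1}} \leq 2 \varphi_t ( \rho_t )$, completing the reduction.

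I expect the delicate part to be the self-concordance bookkeeping rather than the calculus: one must verify that each summand of $\overline{L}_t$ is self-concordant with the sharp constant $2$ in the complex, vectorized-Hermitian setting, confirm that the scaling by $\lambda$ does not degrade the constant (which is where $\lambda \geq 1$ enters), and justify the variational dual-norm identity together with the third-derivative contraction on the real Hilbert space of Hermitian matrices. The matrix-inverse differentiation itself is routine, but care is needed to track $D^3 \overline{L}_t$ correctly as a symmetric trilinear form, so that $y^\hermit ( D^3 \overline{L}_t [ u ] ) y = D^3 \overline{L}_t [ u, y, y ]$.
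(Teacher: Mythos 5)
Your proposal is correct and follows essentially the same route as the paper: both express the pairing $\braket{w_t, u}$ (equivalently $\braket{\nabla \overline{\varphi}_t, u}$) as $-D^3 \overline{L}_t [u, H_t^{-1} \vec(A_t), H_t^{-1} \vec(A_t)]$ by differentiating the inverse Hessian, invoke the $1$-self-concordance of $\overline{L}_t$ (Lemmas \ref{lem_scLoss}, \ref{lem_scRegularizer}, \ref{lem_sum_sc}) with the polarized third-derivative bound, and then pass to the dual norm. The only cosmetic differences are that you normalize by $\tr(A_t \rho_t)^2$ at the outset and conclude via the variational characterization of the dual norm, whereas the paper concludes via the rank-one comparison $w_t w_t^\hermit \leq 4 \pi_t^2 H_t$ and Lemma \ref{lem_boundDualLocalNorm}, which are equivalent steps.
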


Plugging the upper bounds of $\pi_t$ and $\norm{ w_t }_{H_t^{-1}}$ into the inequality \eqref{eq_miss_bound}, we write 
\begin{align*}
& \norm{ \nabla \overline{f}_t ( \vec( \rho_t ) ) + \mu \left( \nabla \overline{V}_t ( \vec ( \rho_t ) ) - \nabla \overline{V}_{t - 1} ( \vec ( \rho_{t} ) ) \right) }  _{H_t^{-1}} \\
& \quad \leq \frac{1}{( 1 - \pi_t )^2} \left[ \left( 1 + ( \mu - 1 ) \pi_t \right) \sqrt{ \pi_t } + \mu \pi_t \right]^2 \\
& \quad \leq \left( \frac{\lambda + 1}{\lambda} \right)^2 \left[ \left( 1 + ( \mu - 1 ) \pi_t \right) \sqrt{ \pi_t } + \mu \pi_t \right]^2 \\
& \quad \leq \left( \frac{\lambda + 1}{\lambda} \right)^2 \left( 1 + \frac{\mu - 1}{\lambda + 1} + \frac{\mu}{\sqrt{ \lambda + 1 }} \right)^2 \pi_t \\
& \quad \leq \frac{ \lambda + 1 }{\lambda ^ 2} \left( 1 + \frac{\mu - 1}{\lambda + 1} + \frac{\mu}{\sqrt{ \lambda + 1 }} \right)^2 . 
\end{align*}
It is then easily checked that the desired inequality \eqref{eq_eqDesired} holds with $\lambda = 300$ and $\mu = 10$.

\ifx\arxiv\undefined
\acks{We thank the anonymous reviewers for their valuable suggestions. 
This work is supported by the Young Scholar Fellowship (Einstein Program) of the National Science and Technology Council (NSTC) of Taiwan under grant number NSTC 112-2636-E-002-003; the 2030 Cross-Generation Young Scholars Program (Excellent Young Scholars) of the NSTC under grant number NSTC 112-2628-E-002-019-MY3; the research project “Geometry of Quantum Learning and Optimization” of National Taiwan University under grant number NTU-CC-114L895006; and the Academic Career Development Research Program (Laurel Research Project) of National Taiwan University under grant numbers NTU-CDP-113L7763 and NTU-CDP-114L7744.}
\fi

\bibliography{list-q-vb-ftrl}

\appendix


\section{Convex Analysis and Calculus with Density Matrices} \label{app_calculus}

Notice that the action sets for Physicist in this work and previous ones \citep{Aaronson2018,Aaronson2019,Bansal2024,Chen2022a,Yang2020} are all subsets of the set of Hermitian complex matrices. 
Therefore, the loss functions are defined on complex variables. 
If we adopt the complex analysis perspective, extend the domains of the loss functions, and view them as general complex-valued functions of complex matrices, then two subtle issues arise. 
First, the field of complex numbers $\mathbb{C}$ is not ordered, so existing inequalities for real-valued convex functions does not directly apply. 
Second, we may require the loss functions to be holomorphic (complex-differentiable) to adopt existing strategies for regret analyses. 


Let $f$ be the loss function considered in this or any of the previous works. 
A simple solution is to consider the function $\varphi_{x, v} ( t ) \coloneqq f ( x + t v )$ for any Hermitian matrices $x, v$. 
It is easily checked that the function $\varphi$ is always real-valued, and hence is a function from $\mathbb{R}$ to $\mathbb{R}$. 
The convexity of $f$ then follows from those of $\varphi_{x, v}$ for any Hermitian matrices $x$ and $v$; 
standard convex analysis results for $f$ follows from the corresponding ones for $\varphi_{x, v}$. 




We formalize the calculus notions associated with the above understanding below; 
%
the reader is referred to the book of \citet{Bauschke2017} for further details. 
Notice that the set of Hermitian matrices with the Hilbert-Schmidt inner product forms a finite-dimensional real Hilbert space. 
Let $\mathbb{H}$ be any finite-dimensional real Hilbert space. 
We say the function $f$ is (G\^{a}teaux) differentiable at a point $x \in \mathbb{H}$ if there exists a linear mapping $D f ( x ): \mathbb{H} \to \mathbb{R}$ such that 
\[
D f ( x ) [ v ] = \lim_{\alpha \downarrow 0} \frac{ f( x + \alpha v ) - f( x ) }{ \alpha } , \quad \forall v \in \mathbb{H} ; 
\]
then, the gradient of $f$ at $x$ is defined as the unique vector $\nabla f ( x ) \in \mathbb{H}$ satisfying
\[
D f ( x ) [ v ] = \braket{ \nabla f ( x ), v } , \quad \forall v \in \mathbb{H} . 
\]
Similarly, we say that function $f$ is twice differentiable at a point $x \in \mathbb{H}$ if there exists a linear mapping $D^2 f ( x ): \mathbb{H} \to \mathbb{H}$ such that 
\[
D^2 f ( x ) [ v ] = \lim_{\alpha \downarrow 0} \frac{ D f ( x + \alpha v ) - D f ( x ) }{ \alpha }, \quad \forall v \in \mathbb{H} ;  
\]
then, the Hessian of $f$ at $x$ is defined as the unique linear mapping $\nabla^2 f ( x ): \mathbb{H} \to \mathbb{H}$ such that 
\[
D^2 f ( x ) [v, v] = \braket{ v, \nabla^2 f ( x ) v } , \quad \forall v \in \mathbb{H} . 
\]
Then, standard results for convex functions on Euclidean spaces apply, such as the following. 

\begin{theorem}[{\citet[Proposition 17.7]{Bauschke2017}}]
Let $f: \mathbb{H} \to \interval[open left]{- \infty}{\infty}$ be proper for some real Hilbert space $\mathbb{H}$. 
Suppose that $\dom f$ is open and convex, and that $f$ is differentiable on $\dom f$. 
Then, the following are equivalent. 
\begin{itemize}
\item The function $f$ is convex. 
\item For any $x, y \in \dom f$, $f ( y ) \geq f ( x ) + \braket{ \nabla f ( x ), y - x }$. 
\item For any $x, y \in \dom f$, $\braket{ \nabla f ( y ) - \nabla f ( x ), y - x } \geq 0$. 
\end{itemize}
If in addition, $f$ is twice differentiable on $\dom f$, then each of the above is equivalent to the following. 
\begin{itemize}
\item For any $x \in \dom f$ and $v \in \mathbb{H}$, $\braket{ v, \nabla^2 f ( x ) v } \geq 0$. 
\end{itemize}
\end{theorem}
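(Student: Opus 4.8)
The plan is to reduce every one of the listed characterizations to a statement about a scalar function on $\mathbb{R}$, using exactly the line-restriction device $\varphi_{x,v}(t) \coloneqq f(x + tv)$ introduced at the start of this appendix. Fix $x, y \in \dom f$; since $\dom f$ is open and convex, the segment joining $x$ and $y$ lies in $\dom f$ and, by openness together with compactness of $[0,1]$, extends slightly beyond its endpoints, so $\phi(t) \coloneqq f(x + t(y-x))$ is well-defined and differentiable on an open interval containing $[0,1]$, with $\phi'(t) = \braket{\nabla f(x + t(y-x)), y-x}$ by the chain rule. I would then establish the cycle (convexity) $\Rightarrow$ (first-order inequality) $\Rightarrow$ (gradient monotonicity) $\Rightarrow$ (convexity), and treat the second-order characterization separately.

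For (convexity) $\Rightarrow$ (first-order), convexity gives $f(x + \lambda(y-x)) \leq (1-\lambda) f(x) + \lambda f(y)$, hence $\tfrac{f(x+\lambda(y-x)) - f(x)}{\lambda} \leq f(y) - f(x)$ for $\lambda \in (0,1]$; letting $\lambda \downarrow 0$ and invoking the definition of $Df(x)$ yields $\braket{\nabla f(x), y-x} \leq f(y) - f(x)$. For (first-order) $\Rightarrow$ (monotonicity), I would write the first-order inequality once at $x$ with increment $y-x$ and once at $y$ with increment $x-y$, then add; the function values cancel and the gradient terms combine into $\braket{\nabla f(y) - \nabla f(x), y-x} \geq 0$. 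For (monotonicity) $\Rightarrow$ (convexity), monotonicity shows $\phi'$ is nondecreasing, since for $s < t$ one has $\phi'(t) - \phi'(s) = \tfrac{1}{t-s}\braket{\nabla f(x+t(y-x)) - \nabla f(x+s(y-x)), (t-s)(y-x)} \geq 0$; a scalar differentiable function with nondecreasing derivative is convex, so $\phi$ is convex on $[0,1]$, and $\phi(\lambda) \leq (1-\lambda)\phi(0) + \lambda\phi(1)$ is precisely convexity of $f$ along the segment.

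For the second-order characterization under twice differentiability, I would again restrict to lines, using $\phi''(t) = \braket{y-x, \nabla^2 f(x + t(y-x))(y-x)}$. If $\braket{v, \nabla^2 f(x) v} \geq 0$ for all $x$ and $v$, then $\phi'' \geq 0$, so each $\phi$ is convex and therefore so is $f$. Conversely, if $f$ is convex then each $\phi$ is convex, whence $\phi''(0) = \braket{y-x, \nabla^2 f(x)(y-x)} \geq 0$; because $\dom f$ is open, the increment $y-x$ ranges over a neighborhood of the origin and, after rescaling, over every direction, giving $\braket{v, \nabla^2 f(x) v} \geq 0$ for all $v \in \mathbb{H}$.

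The only genuinely non-routine step is (monotonicity) $\Rightarrow$ (convexity), which rests on the one-dimensional fact that a differentiable function with nondecreasing derivative is convex, proved via the mean value theorem (or by integrating $\phi'$); everything else is bookkeeping. I expect no extra difficulty from the Hilbert-space setting: each characterization is tested against a fixed pair $x, y$ or a fixed direction $v$, so the argument collapses to scalar calculus on $\mathbb{R}$, and the only role of $\mathbb{H}$ is to let $\nabla f$ and $\nabla^2 f$ represent $Df$ and $D^2 f$ through the inner product.
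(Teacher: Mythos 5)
Your proposal is correct, and it is exactly in the spirit of what the paper does: the paper itself gives no proof of this statement, citing it as Proposition 17.7 of Bauschke--Combettes, while the surrounding appendix explicitly advocates the very device you use, namely reducing statements about $f$ on a real Hilbert space to scalar calculus for the line restrictions $\varphi_{x,v}(t) = f(x+tv)$. All four implications in your cycle, and the two-sided second-order argument (including the rescaling step that upgrades $\braket{y-x,\nabla^2 f(x)(y-x)}\geq 0$ to all directions $v$ using openness of $\dom f$ and homogeneity of the quadratic form), are sound.
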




It will be convenient to identify $\nabla^2 f$ with a complex matrix, as we do in \replaced[id=Wei-Fu, comment = {I don't need this lemma in section 2 because I have a new proof.}]{Lemma \ref{lem_Hessian_R}}{Section \ref{sec_vbc}}. 
Let $\mathcal{H}^d$ be the real Hilbert space of $d \times d$ Hermitian matrices with the Hilbert-Schmidt inner product $\braketHS{ \cdot, \cdot }$. 
For any $x, v, u \in \mathcal{H}^d$, denote by $\vec (x), \vec (v), \vec(u) \in \mathbb{C}^{d^2}$ the vectorizations of the matrices $x, v,u \in \mathbb{C}^{d \times d}$, we have 
\[
D^2f(x)[v,u] = \braketHS{v,\nabla^2 f(x) u} = \braket{\vec(v), \nabla^2 \overline{f}(\vec(x)) \cdot \vec(u) }
\]
where  $\braket{ \cdot, \cdot }$ is the standard inner product in $\mathbb{C}^{d^2}$ and $\nabla^2 \overline{f} (\vec(x))$ is the unique $d^2 \times d^2$ complex matrix satisfying this identity.

\added[id=YHL]{
The following lemma provides the chain rule for the G{\^a}teaux derivatives. 
We will adopt it to compute the derivatives of the volumetric barrier.}
\begin{lemma}[{\citet[Fact 2.51]{Bauschke2017}}] \label{lem_chainRule} 
\added[id=Wei-Fu]{
Let $\mathcal{H}_1, \mathcal{H}_2, \mathcal{H}_3$ be real Hilbert spaces.
Let $f \colon \mathcal{H}_1 \to \mathcal{H}_2$ and $g \colon \mathcal{H}_2 \to \mathcal{H}_3 $ be G{\^a}teaux differentiable functions. Then, the composite function $f \circ g$ is G{\^a}teaux differentiable and the directional derivative of it is given by
\begin{align*}
    D(f\circ g)(x) [v] = D f(g(x)) \left[ Dg (x) [v]  \right]
\end{align*}
for all $x,v \in \mathcal{H}_1$.
}
\end{lemma}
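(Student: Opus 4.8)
The plan is to prove the identity straight from the definition of the G\^ateaux derivative, isolating the contributions of the inner map and the outer map. I read the composition in the order forced by the stated formula, so that the inner map sends $\mathcal{H}_1$ into the domain of the outer map (the labelling of $f$ and $g$ in the hypothesis is immaterial once the composition order is fixed).

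First I would fix $x, v \in \mathcal{H}_1$ and write the difference quotient that defines the directional derivative of the composite,
\[
\frac{(f \circ g)(x + \alpha v) - (f \circ g)(x)}{\alpha} = \frac{f\bigl(g(x + \alpha v)\bigr) - f\bigl(g(x)\bigr)}{\alpha}.
\]
Using G\^ateaux differentiability of $g$ at $x$ along $v$, I would expand $g(x + \alpha v) = g(x) + \alpha\, Dg(x)[v] + r(\alpha)$, where the remainder $r(\alpha) \coloneqq g(x+\alpha v) - g(x) - \alpha\, Dg(x)[v]$ satisfies $r(\alpha)/\alpha \to 0$ as $\alpha \downarrow 0$ by the very definition of $Dg(x)[v]$. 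Abbreviating $y \coloneqq g(x)$ and $w \coloneqq Dg(x)[v]$ and substituting, the whole problem reduces to establishing the single limit
\[
\frac{f\bigl(y + \alpha w + r(\alpha)\bigr) - f(y)}{\alpha} \xrightarrow[\alpha \downarrow 0]{} Df(y)[w],
\]
which is precisely the asserted chain rule.

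The hard part will be this last limit, and it is exactly where bare G\^ateaux differentiability is too weak: the effective perturbation direction $w + r(\alpha)/\alpha$ is not fixed but only converges to $w$, whereas the G\^ateaux limit of $f$ at $y$ is defined along a fixed direction. To close the gap I would use that the maps arising when differentiating the volumetric barrier are continuously differentiable (indeed $C^4$ on their domains), so that $Df(y)$ is a genuine Fr\'echet derivative and one has the first-order estimate $f(y + h) - f(y) = Df(y)[h] + o(\norm{h})$. Taking $h = \alpha w + r(\alpha)$, which satisfies $\norm{h} = O(\alpha)$ and hence $o(\norm{h}) = o(\alpha)$, and using linearity of $Df(y)$ together with $Df(y)[r(\alpha)/\alpha] \to 0$, the quotient converges to $Df(y)[w]$, as required. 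This continuity of the derivative (equivalently, the extra regularity hypothesis carried by the cited fact) is the only ingredient beyond the definitions, and it holds automatically for every composition of smooth maps we encounter.
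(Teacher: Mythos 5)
Your proof is correct, but there is nothing in the paper to compare it against: the paper offers no proof of this lemma, quoting it as Fact 2.51 of \citet{Bauschke2017}. Your argument is essentially the textbook proof of that cited fact, and in reconstructing it you have put your finger on a genuine imprecision in the paper's restatement. As literally stated---both maps merely G\^ateaux differentiable---the lemma is false. For a counterexample, take the inner map $t \mapsto (t, t^2)$ and the outer map equal to $1$ on the punctured parabola $\set{ (s, s^2) : s \neq 0 }$ and $0$ elsewhere: the outer map is G\^ateaux differentiable at the origin with zero derivative (every line through the origin meets the parabola in at most one nonzero point, so the difference quotients vanish along each fixed direction), yet the composite is the indicator of $\set{t \neq 0}$, which is not even continuous at $0$. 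This is exactly the failure mode you identified: the effective perturbation direction $w + r(\alpha)/\alpha$ moves with $\alpha$, and G\^ateaux differentiability of the outer map controls only fixed directions. The fact cited from \citet{Bauschke2017} accordingly carries a Fr\'echet differentiability hypothesis on the outer map at the relevant point (the inner map may be merely G\^ateaux differentiable), and that is precisely what you reinstate via the expansion $f(y + h) - f(y) = Df(y)[h] + o(\norm{h})$. Your justification that this stronger hypothesis costs nothing here is also sound: every composition the paper differentiates (matrix inverse, $\log\det$, and the Hessian map $\rho \mapsto \nabla^2 \overline{L}_t ( \vec ( \rho ) )$) is continuously differentiable on an open set in a finite-dimensional space, where continuous G\^ateaux differentiability implies Fr\'echet differentiability. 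Finally, your reading of the composition order is the right one: as typed, $f \colon \mathcal{H}_1 \to \mathcal{H}_2$ and $g \colon \mathcal{H}_2 \to \mathcal{H}_3$ cannot be composed as $f \circ g$; the displayed formula forces $g$ to be the inner map, so the domains in the hypothesis should be swapped.
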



\section{Self-Concordant Functions}

The following facts about self-concordant functions, necessary for proving Theorem \ref{thm_main}, can be found in Nesterov's textbook \citep{Nesterov2018a}. 
Notice that \citet{Nesterov2018a} already has already discussed the theory of self-concordant functions within the framework of finite-dimensional real Hilbert spaces. 
Denote by $\mathbb{H}$ a real Hilbert space. 

\begin{definition} \label{def_sc}
A function $f: \dom f \to \mathbb{R}$ with an open domain is \(M_f\)-self-concordant for some $M_f > 0$ if \(f \in C^3 ( \dom f ) \) and for all \(x \in \mathrm{dom} f, u \in \mathbb{H} \), we have
\[
\left\vert D^3f(x)[u,u,u]\right\vert \leq 2M_f\langle u,\nabla^2 f(x)u \rangle^{3/2} . 
\]
\end{definition}

\begin{lemma} \label{lem_sc_equivalentDef}
A function $f \in C^3 ( \dom f )$ is $M_f$-self-concordant for some $M_f > 0$ if and only if 
\[
\left\vert D^3 f ( x ) [ u, v, w ] \right\vert \leq 2 M_f \norm{ u }_{\nabla^2 f ( x )} \norm{ v }_{\nabla^2 f ( x )} \norm{ w }_{\nabla^2 f ( x )} 
\]
for all $x \in \dom f$ and $u, v, w \in \mathbb{H}$. 
\end{lemma}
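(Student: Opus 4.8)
The plan is to prove the two implications separately; the forward one is a one-line specialization, while the converse carries all the content. Throughout I abbreviate $T[u,v,w] \coloneqq D^3 f(x)[u,v,w]$, which is a \emph{symmetric} trilinear form because $f \in C^3(\dom f)$, and I write $\norm{u}^2 \coloneqq \braket{u, \nabla^2 f(x) u}$. For the forward direction I would simply set $u = v = w$ in the trilinear inequality, which yields
\[
\left\lvert D^3 f(x)[u,u,u] \right\rvert \le 2 M_f \norm{u}^3 = 2 M_f \braket{u, \nabla^2 f(x) u}^{3/2},
\]
the defining inequality of Definition \ref{def_sc}. Hence only the converse requires work: the diagonal hypothesis $\lvert T[u,u,u]\rvert \le 2 M_f \norm{u}^3$ must be upgraded to $\lvert T[u,v,w]\rvert \le 2 M_f \norm{u}\,\norm{v}\,\norm{w}$ \emph{with the same constant}.

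First I would reduce to the case in which $\norm{\cdot}$ is a genuine Hilbert norm. If $\nabla^2 f(x)$ is degenerate and $u_0 \in \ker \nabla^2 f(x)$, then $\norm{u_0} = 0$, and expanding $T[u_0 + s v + r w, u_0 + s v + r w, u_0 + s v + r w]$ together with the identity $\norm{u_0 + s v + r w} = \norm{s v + r w}$ shows that the diagonal hypothesis bounds a polynomial in $(s,r)$ by an $O((\lvert s\rvert + \lvert r\rvert)^3)$ quantity; this forces every coefficient of total degree at most $2$ to vanish, and in particular $T[u_0, v, w] = 0$ for all $v, w$. Thus $T$ descends to the quotient of $\mathbb{H}$ by $\ker \nabla^2 f(x)$, on which $\nabla^2 f(x)$ is positive definite. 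Substituting $u \mapsto (\nabla^2 f(x))^{-1/2} u$, I may assume $\nabla^2 f(x) = I$, so that $\norm{\cdot}$ is the ambient Hilbert norm and the claim reduces to a purely multilinear statement: a symmetric trilinear form $S$ with $\lvert S[h,h,h]\rvert \le \alpha \norm{h}^3$ satisfies $\lvert S[h_1,h_2,h_3]\rvert \le \alpha \norm{h_1}\norm{h_2}\norm{h_3}$, where $\alpha = 2M_f$.

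The heart of the proof is this reduced statement, which is precisely Banach's theorem: for a symmetric continuous multilinear form on a real Hilbert space, the supremum of $\lvert S[h,\dots,h]\rvert$ over the unit sphere coincides with the full operator norm $\sup \lvert S[h_1,\dots,h_k]\rvert$. For the trilinear case I would either cite this directly or argue as follows: by homogeneity restrict to unit vectors; fixing one argument reduces a symmetric trilinear form to a symmetric bilinear one, whose norm equals its diagonal supremum (the elementary $k=2$ fact that a symmetric operator's norm equals $\sup_{\norm{h}=1}\lvert\braket{h, S h}\rvert$); and at a diagonal maximizer $h^\star$ the Lagrange condition gives $S[h^\star, h^\star, v] = \bigl(\max_{\norm{h}=1} S[h,h,h]\bigr)\braket{h^\star, v}$, which pins the constant.

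The main obstacle is located exactly here: retaining the \emph{sharp} constant $\alpha = 2M_f$. A direct polarization identity does recover $T[u,v,w]$ from diagonal values, but applying the triangle inequality across its terms inflates the constant; avoiding this loss is the whole point of Banach's theorem, and it is the symmetry of $T$ together with the positive semidefiniteness of $\nabla^2 f(x)$ that makes the diagonal and full norms coincide. The kernel reduction above is the only other subtlety, and it is needed solely to accommodate the possibility of a degenerate Hessian in the general $C^3$ setting.
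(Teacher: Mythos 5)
The paper never proves this lemma: it is listed among the facts that ``can be found in Nesterov's textbook'' \citep{Nesterov2018a} (it is Theorem 5.1.3 there), so there is no in-paper proof to compare against. Your proposal essentially reconstructs the textbook argument that the citation points to: the forward direction by specializing $u=v=w$, and the converse by normalizing the Hessian to the identity and invoking the classical fact (Banach's theorem) that a symmetric trilinear form on a Hilbert space attains its full norm on the diagonal. Your kernel reduction for a degenerate Hessian is correct --- positive semidefiniteness of $\nabla^2 f(x)$ kills the cross terms in the norm, and the degree-counting argument does force $T[u_0,v,w]=0$ --- and it is a sound alternative to the more common device of applying the nondegenerate statement to $\nabla^2 f(x)+\epsilon I$ and letting $\epsilon \downarrow 0$.

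The one weak point is the inline sketch you offer as a substitute for citing Banach's theorem: it does not stand on its own. After fixing one argument and using the bilinear fact, what must be shown is $\sup_{\norm{h}=\norm{g}=1} \lvert S[h,h,g]\rvert \le \alpha$; but the Lagrange condition at a diagonal maximizer $h^\star$ only controls the form $S[h^\star,h^\star,\cdot\,]$, not $S[h,h,\cdot\,]$ for arbitrary unit $h$, so nothing is ``pinned'' yet. The real content of Banach's theorem is the step that moves a maximizer of the full form onto the diagonal: at a maximizing triple $(u^\star,v^\star,w^\star)$, equality in Cauchy--Schwarz forces $u^\star \pm v^\star$ to be an eigenvector (with eigenvalue $\pm\beta$) of the self-adjoint operator representing $S[\cdot,\cdot,w^\star]$, so the maximum is also attained with two equal arguments, and one iterates (e.g.\ choosing among all maximizers one that maximizes the pairwise inner products) until all three coincide. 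With the citation your proof is complete; with only the sketch it has a gap exactly at the constant-preserving step that you yourself identify as the heart of the matter.
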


The weighted sum of two self-concordant functions is also a self-concordant function.

\begin{lemma}[{\cite[Theorem 5.1.1]{Nesterov2018a}}] \label{lem_sum_sc}
\label{sc_eq}
\label{sc_sum}
    Let $f_1$ be an $M_1$-self-concordant function and $f_2$ be an $M_2$-self-concordant function. For some $\alpha, \beta > 0$, the function $f(x) = \alpha f_1(x)+\beta f_2(x)$ is $M$-self-concordant with 
    \[
    M = \mathrm{max}\left\{\frac{1}{\sqrt{\alpha}}M_{1},\frac{1}{\sqrt{\beta}}M_{2}\right\} . 
    \]
\end{lemma}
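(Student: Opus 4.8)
The plan is to verify the self-concordance inequality of Definition~\ref{def_sc} for the weighted sum $f = \alpha f_1 + \beta f_2$ directly, using only the linearity of the third directional derivative and of the Hessian. First I would note that $\dom f = \dom f_1 \cap \dom f_2$ is open and that $f \in C^3(\dom f)$, both inherited from $f_1$ and $f_2$. Fixing $x \in \dom f$ and $u \in \mathbb{H}$, I would abbreviate $\omega_i \coloneqq \langle u, \nabla^2 f_i(x) u \rangle \geq 0$ for $i \in \{1,2\}$, so that linearity gives $\langle u, \nabla^2 f(x) u \rangle = \alpha \omega_1 + \beta \omega_2$ and $D^3 f(x)[u,u,u] = \alpha D^3 f_1(x)[u,u,u] + \beta D^3 f_2(x)[u,u,u]$.

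Next I would combine the triangle inequality with the self-concordance of each $f_i$ to get
\[
\left\lvert D^3 f(x)[u,u,u] \right\rvert \leq 2 \alpha M_1 \omega_1^{3/2} + 2 \beta M_2 \omega_2^{3/2} .
\]
The stated constant $M = \max\{ M_1/\sqrt{\alpha},\, M_2/\sqrt{\beta} \}$ is engineered so that $M_1 \leq \sqrt{\alpha}\,M$ and $M_2 \leq \sqrt{\beta}\,M$; substituting these and pulling out $\alpha^{3/2}$ and $\beta^{3/2}$ rewrites the right-hand side as $2M\,[ (\alpha\omega_1)^{3/2} + (\beta\omega_2)^{3/2} ]$, since $\alpha M_i \omega_i^{3/2} \le M (\alpha\omega_i)^{3/2}$.

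It then remains to absorb the two separate $3/2$-powers into one, i.e.\ to establish the superadditivity $a^{3/2} + b^{3/2} \leq (a+b)^{3/2}$ for $a,b \geq 0$ (applied with $a = \alpha\omega_1$, $b = \beta\omega_2$). I would prove this by normalizing: for $a+b>0$ set $s = a/(a+b) \in [0,1]$, reducing the claim to $s^{3/2} + (1-s)^{3/2} \leq 1$, which follows from $t^{3/2} \leq t$ on $[0,1]$ applied to $s$ and to $1-s$; the case $a+b=0$ is trivial. Chaining everything yields $\lvert D^3 f(x)[u,u,u]\rvert \leq 2M (\alpha\omega_1 + \beta\omega_2)^{3/2} = 2M \langle u, \nabla^2 f(x) u\rangle^{3/2}$, which is exactly Definition~\ref{def_sc} with parameter $M$.

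The argument is essentially bookkeeping, so I do not expect a genuine obstacle. The one place that demands care is the choice of how to split $M$: distributing it as $M_i \le \sqrt{\cdot}\,M$ is precisely what makes the weights $\alpha,\beta$ recombine into the clean form $(\alpha\omega_i)^{3/2}$ on which the $3/2$-power superadditivity can act; any coarser bound would either fail to factorize or inflate the constant beyond the claimed value.
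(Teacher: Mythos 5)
Your proposal is correct. The paper itself gives no proof of this lemma---it is quoted directly from Nesterov's textbook (Theorem 5.1.1)---and your argument is essentially the standard one from that reference: bound the third derivative by the triangle inequality, absorb the weights via $\alpha M_1 \omega_1^{3/2} \leq M(\alpha\omega_1)^{3/2}$ using the definition of $M$, and finish with the superadditivity $a^{3/2}+b^{3/2} \leq (a+b)^{3/2}$, which you justify correctly by normalization. The only implicit point worth noting is that $\omega_i \geq 0$ requires the Hessians of $f_1,f_2$ to be positive semi-definite, but this is already implicit in Definition~\ref{def_sc}, since otherwise the quantity $\langle u, \nabla^2 f_i(x)u\rangle^{3/2}$ would not be a well-defined real number.
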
 

Let $f$ be a self-concordant function. 
Define the local norm
\[
\left\Vert v \right\Vert_{\nabla^2 f ( x )} \coloneqq \sqrt{ \braket{ v, \nabla^2 f ( x ) v } } , \quad \forall x \in \dom f . 
\]
The following lemma demonstrates that self-concordant functions possess properties that resemble those of strongly convex functions.

\begin{lemma}[{\cite[Theorem 5.1.8]{Nesterov2018a}}]
\label{lem_scBounds}
Let $f$ be an $M_f$-self-concordant function. 
Then, for all $x,y \in \dom f$, 
$$\braket{ \nabla f(y)-\nabla f(x),y-x } \geq {\frac{\|y-x\|_{\nabla^2f(x)}^{2}}{1+\|y-x\|_{\nabla^2f(x)}}}$$
and
$$f(y)\geq f(x)+\langle\nabla f(x),y-x\rangle+{\frac{1}{M_{f}^{2}}}\omega({M_{f}}\parallel y-x\parallel_{\nabla^2f(x)}),$$
where $\omega(t) \coloneqq t - \log(1+t)$.
\end{lemma}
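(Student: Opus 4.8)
The plan is to reduce the statement to a one-dimensional problem along the segment joining $x$ and $y$, and then convert the self-concordance inequality into a differential inequality that pins down the second derivative from below. Concretely, I would fix $x, y \in \dom f$, set $r \coloneqq \norm{ y - x }_{\nabla^2 f ( x )}$, and define $\varphi ( t ) \coloneqq f ( x + t ( y - x ) )$ for $t \in [0, 1]$; since $\dom f$ is open and convex, the whole segment lies in the domain, so $\varphi$ is well defined and $C^{3}$. A direct computation gives $\varphi' ( t ) = \langle \nabla f ( x + t ( y - x ) ), y - x \rangle$, $\varphi'' ( t ) = \norm{ y - x }_{\nabla^2 f ( x + t ( y - x ) )}^{2}$, and $\varphi''' ( t ) = D^3 f ( x + t ( y - x ) ) [ y - x, y - x, y - x ]$, so that $M_f$-self-concordance (Definition \ref{def_sc}) yields $\lvert \varphi''' ( t ) \rvert \leq 2 M_f ( \varphi'' ( t ) )^{3/2}$.

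The key step is to introduce $\psi ( t ) \coloneqq ( \varphi'' ( t ) )^{-1/2}$, which is legitimate because nondegeneracy of the Hessian keeps $\varphi''$ strictly positive along the segment. Differentiating gives $\psi' ( t ) = - \tfrac{1}{2} ( \varphi'' ( t ) )^{-3/2} \varphi''' ( t )$, and the self-concordance bound collapses to the clean estimate $\lvert \psi' ( t ) \rvert \leq M_f$. Integrating this from $0$, together with $\psi ( 0 ) = 1 / r$, sandwiches $\psi$ and hence produces the lower bound
\[
\varphi'' ( t ) \geq \frac{r^2}{( 1 + M_f t r )^2} , \quad \forall t \in [ 0, 1 ] ,
\]
which holds on all of $[0,1]$ with no range restriction (the matching upper bound would require $t < 1 / ( M_f r )$, but it is not needed here).

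The two displayed inequalities then follow by integrating this lower bound. For the monotonicity estimate I would write $\langle \nabla f ( y ) - \nabla f ( x ), y - x \rangle = \varphi' ( 1 ) - \varphi' ( 0 ) = \int_0^1 \varphi'' ( t ) \, d t$ and evaluate $\int_0^1 r^2 ( 1 + M_f t r )^{-2} \, d t = r^2 / ( 1 + M_f r )$, giving the stated gradient bound (with $M_f$ entering the denominator, reducing to the displayed form in the normalized case $M_f = 1$). For the function-value bound I would use $f ( y ) - f ( x ) - \langle \nabla f ( x ), y - x \rangle = \int_0^1 \! \int_0^t \varphi'' ( s ) \, d s \, d t$ and compute the double integral of the lower bound, which evaluates exactly to $M_f^{-2} \, \omega ( M_f r )$ with $\omega ( t ) = t - \log ( 1 + t )$.

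The main obstacle I anticipate is purely in setting up the differential-inequality step rigorously: one must confirm that $\varphi''$ stays strictly positive and finite along the entire segment, so that $\psi$ is a well-defined $C^{1}$ function, and that the segment remains inside the open domain throughout. Once the substitution $\psi = ( \varphi'' )^{-1/2}$ is in place the bound $\lvert \psi' \rvert \leq M_f$ is immediate and the rest is elementary integration; alternatively, the dependence on $M_f$ can be absorbed by rescaling to the standard case $g \coloneqq M_f^2 f$, which is $1$-self-concordant by a direct check, and then translating the resulting bounds back to $f$.
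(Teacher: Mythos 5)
The paper proves this lemma only by citation to Nesterov's textbook (Theorem 5.1.8), and your argument is precisely the standard proof behind that citation: restrict to the segment, substitute $\psi = (\varphi'')^{-1/2}$ so that self-concordance collapses to $\lvert \psi' \rvert \le M_f$, and integrate once and twice — your evaluations $\int_0^1 r^2(1+M_f t r)^{-2}\,dt = r^2/(1+M_f r)$ and $\int_0^1\!\int_0^t \cdots = M_f^{-2}\,\omega(M_f r)$ are both correct, as is your observation that the first displayed inequality in the statement is the normalized $M_f = 1$ form. The one point you flag but leave informal — that $\varphi''$ stays strictly positive so $\psi$ is well defined — is closed by the usual bootstrap rather than by assuming a nondegenerate Hessian: if $r = 0$ both inequalities reduce to convexity, and if $r > 0$ then on the maximal subinterval where $\varphi'' > 0$ your bound $\psi \le 1/r + M_f t$ keeps $\varphi''$ bounded away from zero, so by continuity it can never vanish on $[0,1]$; with that remark added, the proof is complete and matches the canonical one.
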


We will need the following properties of the function $\omega$.

\begin{lemma}[{\cite[Lemma 5.1.5]{Nesterov2018a}}]
\label{lem_omega} \ 
\begin{enumerate}
\item The Fenchel conjugate of $\omega$ is given by\, $\omega_*(t) \coloneqq -t-\log(1-t)$.
\item For any $t > 0$, we have 
$$\frac{t^{2}}{2(1+t)} \le \omega(t)\le\frac{t^{2}}{2+t}.$$
\item For any $t \in \interval[open right]{0}{1}$, we have 
$$\frac{t^{2}}{2-t}\le\omega_{*}(t)\le\frac{t^{2}}{2(1-t)}.$$
\end{enumerate}
\end{lemma}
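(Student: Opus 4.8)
The plan is to treat the three parts separately, all by elementary one-dimensional calculus, since $\omega$ and $\omega_*$ are explicit scalar functions; no functional-analytic machinery is needed.

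For the first part, I would compute the Fenchel conjugate directly from its definition,
\[
\omega_*(s) = \sup_{t > -1} \bigl\{ st - \omega(t) \bigr\} = \sup_{t > -1} \bigl\{ st - t + \log(1+t) \bigr\} .
\]
Since $\omega$ is strictly convex on its domain $(-1, \infty)$, the objective is strictly concave in $t$, so the supremum is attained at the unique stationary point (when it exists). Differentiating in $t$ and setting the result to zero gives $s - 1 + (1+t)^{-1} = 0$, i.e.\ $1 + t = (1-s)^{-1}$, which is admissible precisely when $s < 1$; this also pins down the domain $\dom \omega_* = (-\infty, 1)$. Substituting $t = s/(1-s)$ back into the objective collapses the rational terms, as $\tfrac{s^2 - s}{1-s} = -s$, and yields $\omega_*(s) = -s - \log(1-s)$, matching the claimed formula.

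For the second and third parts, my approach to each of the four inequalities is the same: move everything to one side, observe that the resulting difference vanishes at $t = 0$, and show that its derivative keeps a constant sign on the relevant interval. The key point is that in every case the derivative simplifies to a single, manifestly signed rational expression with numerator $t^2$. For the upper bound $\omega(t) \le t^2/(2+t)$, setting $g(t) \coloneqq t^2/(2+t) - \omega(t)$ gives $g'(t) = t^2 / \bigl[ (2+t)^2 (1+t) \bigr] \ge 0$ for $t > 0$, and $g(0) = 0$ then forces $g \ge 0$. The lower bound $\omega(t) \ge t^2/(2(1+t))$ follows identically, the difference having derivative $t^2/(2(1+t)^2) \ge 0$. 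The two bounds of the third part are handled the same way on $[0,1)$, using $\omega_*'(t) = t/(1-t)$; the derivatives of the two difference functions reduce to $t^2 / \bigl[ (1-t)(2-t)^2 \bigr]$ and $t^2/(2(1-t)^2)$ respectively, both nonnegative on $[0,1)$.

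I do not expect a genuine obstacle, since this is a standard fact from Nesterov's textbook and could simply be cited; the only points requiring a little care are in the first part, where one must verify that the supremum defining the conjugate is actually attained (guaranteed by the strict concavity and coercivity of $st - \omega(t)$ for $s < 1$) and correctly identify the domain $s < 1$. Everything else is the routine algebra that makes each derivative collapse to a clean $t^2$-numerator, which I would carry out but not belabor.
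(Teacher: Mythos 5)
Your proposal is correct, but it is worth noting that the paper does not prove this statement at all: it simply cites it as Lemma 5.1.5 of Nesterov's textbook, in line with its declaration that the facts in that appendix "can be found in Nesterov's textbook." Your argument is a self-contained elementary substitute, and all of its computations check out: the stationary point $t = s/(1-s)$ of $st-\omega(t)$ is admissible exactly when $s<1$ (for $s\ge 1$ the objective $(s-1)t+\log(1+t)$ tends to $+\infty$, so the conjugate is $+\infty$ there, which correctly identifies $\dom\omega_* = (-\infty,1)$), and the four difference functions indeed have derivatives
\[
\frac{t^2}{(2+t)^2(1+t)},\qquad \frac{t^2}{2(1+t)^2},\qquad \frac{t^2}{(1-t)(2-t)^2},\qquad \frac{t^2}{2(1-t)^2},
\]
each nonnegative on the relevant interval, with value $0$ at $t=0$, which forces the claimed inequalities. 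What the citation buys the paper is brevity and an authoritative source; what your proof buys is independence from the reference and an explicit verification that the constants ($2$ versus $2(1\pm t)$, etc.) are exactly right, which is the part of such lemmas most prone to transcription errors. The only point to be careful about, and you already flag it, is attainment of the supremum in the conjugate computation; strict concavity plus the boundary behavior ($\log(1+t)\to-\infty$ as $t\downarrow -1$ and $(s-1)t\to-\infty$ as $t\to\infty$ when $s<1$) settles it.
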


We then show that the vectorized loss functions $\overline{f}_t$ in LL-OLQS and the vectorized log-determinant function are all $1$-self-concordant. 

\begin{lemma} \label{lem_scLoss}
The vectorized loss functions $\overline{f}_t$ in LL-OLQS are $1$-self-concordant. 
\end{lemma}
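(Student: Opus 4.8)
The plan is to show that the vectorized loss function $\overline{f}_t$, which is the vectorization of $\rho \mapsto -\log \tr(A_t \rho)$, is $1$-self-concordant by reducing everything to the scalar case. The key observation is that $\overline{f}_t(\vec(\rho)) = g(\braket{\vec(A_t), \vec(\rho)})$ where $g(s) = -\log s$ and the inner argument is an affine (indeed linear) function of $\rho$, since $\tr(A_t \rho) = \braketHS{A_t, \rho} = \braket{\vec(A_t), \vec(\rho)}$. The scalar negative logarithm $g(s) = -\log s$ is the canonical example of a $1$-self-concordant function: one checks directly that $g''(s) = 1/s^2$ and $|g'''(s)| = 2/s^3$, so $|g'''(s)| = 2 (g''(s))^{3/2}$, meeting Definition~\ref{def_sc} with equality and $M_f = 1$. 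Thus the whole lemma rests on the fact that self-concordance with the canonical constant is preserved under affine precomposition.

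First I would record the derivative formulas. Writing $s(\rho) \coloneqq \tr(A_t \rho)$, which is linear in $\rho$, I have $D s(\rho)[u] = \tr(A_t U) = \braket{\vec(A_t), u}$ for $u = \vec(U)$, and all higher derivatives of $s$ vanish because $s$ is linear. By the chain rule (Lemma~\ref{lem_chainRule}) applied repeatedly, the directional derivatives of $\overline{f}_t = g \circ s$ collapse to
\[
D^2 \overline{f}_t(\vec(\rho))[u,u] = g''(s)\,(\braket{\vec(A_t), u})^2, \qquad D^3 \overline{f}_t(\vec(\rho))[u,u,u] = g'''(s)\,(\braket{\vec(A_t), u})^3,
\]
with no cross terms surviving since $s$ has no curvature. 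Here $s = \tr(A_t \rho) > 0$ on the relevant domain, so $g''(s), g'''(s)$ are well defined.

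Next I would substitute the scalar bound $|g'''(s)| = 2 (g''(s))^{3/2}$. Combining with the formulas above gives
\[
\left\vert D^3 \overline{f}_t(\vec(\rho))[u,u,u] \right\vert = 2 (g''(s))^{3/2} \, \abs{\braket{\vec(A_t), u}}^3 = 2 \left( g''(s)\,(\braket{\vec(A_t), u})^2 \right)^{3/2} = 2 \left( D^2 \overline{f}_t(\vec(\rho))[u,u] \right)^{3/2},
\]
which is exactly the $1$-self-concordance inequality $\abs{D^3 \overline{f}_t(x)[u,u,u]} \leq 2 \braket{u, \nabla^2 \overline{f}_t(x) u}^{3/2}$ from Definition~\ref{def_sc}. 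This completes the argument.

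I do not expect a serious obstacle here, since the affine-composition invariance of self-concordance is standard and mirrors Lemma~\ref{lem_affine} for VB-convexity. The only point requiring mild care is the bookkeeping of the chain rule for Gâteaux derivatives on the real Hilbert space of Hermitian matrices: one must confirm that the cross-derivative terms genuinely vanish because $s(\rho)$ is linear, so that the composite derivatives reduce cleanly to a single chain factor times a power of $\braket{\vec(A_t), u}$. (Alternatively, one could cite the general fact that an affine image of a $1$-self-concordant function is $1$-self-concordant, which is Lemma~\ref{lem_affine}'s self-concordant analogue; but carrying out the explicit computation above is just as short and keeps the argument self-contained.)
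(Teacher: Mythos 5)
Your proof is correct and follows essentially the same route as the paper: the paper's proof is a one-line citation of Nesterov's results that $-\log$ is $1$-self-concordant and that self-concordance is preserved under affine precomposition, which are exactly the two facts you verify explicitly (the scalar derivative check and the chain-rule collapse for the linear map $\rho \mapsto \tr(A_t\rho)$). Carrying out the computation by hand rather than citing it is a harmless difference of presentation, not of substance.
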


\begin{proof}
This follows from the $1$-self-concordance of the logarithmic function and the affine invariance of self-concordance \citep[Example 5.1.1 and Theorem 5.12]{Nesterov2018a}. 
\end{proof}

\begin{lemma}\label{lem_D-logdet} 
    Let $R\colon \mathcal{H}^{d} \to \mathbb{R}$ be the log-determinant function, i.e., $R ( \rho ) = - \log \det \rho$. 
    Then, for any integer $n \geq 1$, 
    \begin{align*}
        D^nR(\rho)[V_1,\ldots,V_n] =  \frac{(-1)^n}{n}  \sum_{\sigma \in S_n} \tr \left(\rho^{-1} V_{\sigma(1)}\rho^{-1}V_{\sigma(2)} \ldots \rho^{-1} V_{\sigma(n)}  \right)
    \end{align*}
    where the sum is over all permutations $\sigma$ of $\{1,\ldots , n\}$.
    In particular, 
    \begin{align*}
        &DR(\rho)[U] = -\tr \left( \rho^{-1} U\right)\\
        &D^2R(\rho)[U,V] = \tr\left(\rho^{-1}U\rho^{-1}V\right)\\
    \end{align*}    
\end{lemma}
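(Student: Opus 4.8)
The plan is to prove the formula by induction on $n$, reducing the general case to the first-order derivative together with the elementary identity $D(\rho^{-1})(\rho)[W] = -\rho^{-1} W \rho^{-1}$ and the cyclic invariance of the trace. For the base case $n = 1$, I would compute $DR(\rho)[U]$ directly from the definition of the G\^{a}teaux derivative: using the standard fact that the derivative of $\log \det (\rho + \alpha U)$ at $\alpha = 0$ equals $\tr(\rho^{-1} U)$, one obtains $DR(\rho)[U] = -\tr(\rho^{-1} U)$, matching the claimed expression.

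For the inductive step, I assume the formula at order $n$ and differentiate $D^n R(\rho)[V_1,\ldots,V_n] = \frac{(-1)^n}{n}\sum_{\sigma \in S_n}\tr(\rho^{-1} V_{\sigma(1)}\cdots\rho^{-1}V_{\sigma(n)})$ once more in the direction $V_{n+1}$. Each summand depends on $\rho$ only through its $n$ factors of $\rho^{-1}$, so by the chain rule (Lemma \ref{lem_chainRule}) and the identity above, differentiating the $k$-th factor $\rho^{-1}$ replaces it by $-\rho^{-1} V_{n+1}\rho^{-1}$. Summing over $k = 1,\ldots,n$, each trace $\tr(\rho^{-1}V_{\sigma(1)}\cdots\rho^{-1}V_{\sigma(n)})$ contributes $-\sum_{k=1}^{n}$ of the trace obtained by inserting the block $\rho^{-1}V_{n+1}$ immediately before $\rho^{-1}V_{\sigma(k)}$. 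This gives $D^{n+1}R(\rho)[V_1,\ldots,V_{n+1}] = \frac{(-1)^{n+1}}{n}\sum_{\sigma\in S_n}\sum_{k=1}^{n}\tr(\,\cdots\,)$, where each summand is a product of $n+1$ blocks of the form $\rho^{-1}V_j$.

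The key step is a combinatorial reorganization showing this double sum equals $\frac{n}{n+1}\sum_{\tau\in S_{n+1}}\tr(\rho^{-1}V_{\tau(1)}\cdots\rho^{-1}V_{\tau(n+1)})$, after which the prefactor collapses as $\frac{(-1)^{n+1}}{n}\cdot\frac{n}{n+1} = \frac{(-1)^{n+1}}{n+1}$ and the induction closes. To establish this, I would observe that the map sending a pair $(\sigma, k)$ with $\sigma \in S_n$ and $k \in \{1,\ldots,n\}$ to the permutation $\tau \in S_{n+1}$ obtained by inserting the symbol $n+1$ at position $k$ of $(\sigma(1),\ldots,\sigma(n))$ is a bijection onto exactly those $\tau$ for which $n+1$ does not occupy the last slot, and that the associated traces coincide. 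Invoking the cyclic invariance of the trace, I would then partition $S_{n+1}$ into cyclic equivalence classes of size $n+1$, on each of which the trace is constant; in every class precisely $n$ of the $n+1$ representatives place $n+1$ outside the final slot. Hence the double sum recovers exactly the fraction $n/(n+1)$ of the full symmetric sum, as claimed. The displayed special cases $n = 1, 2$ follow immediately by evaluating the symmetric sum and collapsing the equal traces via cyclic invariance.

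The main obstacle is the bookkeeping in this last paragraph: one must confirm that differentiating precisely the $n$ copies of $\rho^{-1}$ accounts for exactly those cyclic representatives in which $n+1$ avoids the last position, while cyclic invariance of the trace supplies the single missing representative in each class and guarantees all representatives contribute equal trace values. Once this counting is made rigorous, the remaining manipulations are routine.
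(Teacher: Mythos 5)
Your proof is correct and follows essentially the same route as the paper's: the base case via the standard derivative of $\log\det$, the identity $D(\rho^{-1})[W] = -\rho^{-1}W\rho^{-1}$, and then the product rule with induction on $n$. The paper leaves the inductive bookkeeping implicit ("follows from the product rule for differentiation and induction on $n$"), whereas you have made the key combinatorial step explicit --- the bijection between pairs $(\sigma,k)$ and permutations $\tau \in S_{n+1}$ avoiding the last slot, combined with cyclic invariance of the trace to recover the factor $n/(n+1)$ --- which is exactly the argument the paper's sketch presupposes.
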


\begin{proof}
    The expression of $D R ( \rho ) [ U ]$ follows from \citet[Lemma 5.4.6]{Nesterov2018a}; 
    the proof therein considers real matrices but direct extends for the complex case. 
    Let $f \colon \mathcal{H}^{d} \to \mathcal{H}^{d}$ be the map $f(\rho) = \rho^{-1}$, we have \citep[Example X.4.2 (iii)]{Bhatia1997}
    \begin{align*}
        Df(\rho)[U] = - \rho^{-1} U \rho^{-1}.
    \end{align*} 
    Thereby, we obtain the expression of $D^2 R ( \rho )[ U, U ]$. 
    The general expression of $D^nR[V_1,\ldots,V_n]$ follows from the product rule for differentiation and induction on $n$.
\end{proof}

\begin{lemma} \label{lem_scRegularizer}
Let $R$ be the log-determinant function, i.e., $R ( \rho ) = - \log \det \rho$. 
Then, the vectorized regularization function $\overline{R}$ is $1$-self-concordant. 
\end{lemma}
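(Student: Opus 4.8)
The plan is to verify the self-concordance inequality of Definition \ref{def_sc} directly, with $M_f = 1$, using the derivative formulas already established in Lemma \ref{lem_D-logdet}. Self-concordance is a property of the directional derivatives $D^k R(\rho)[\cdots]$, so it holds for $\overline{R}$ on $\mathbb{C}^{d^2}$ if and only if it holds for $R$ on the real Hilbert space $\mathcal{H}^d$; I will therefore work with $R$ directly. Setting all arguments equal to a single Hermitian direction $U$, Lemma \ref{lem_D-logdet} gives
\[
D^2 R ( \rho ) [ U, U ] = \tr ( \rho^{-1} U \rho^{-1} U ) , \qquad D^3 R ( \rho ) [ U, U, U ] = - 2 \tr ( \rho^{-1} U \rho^{-1} U \rho^{-1} U ) ,
\]
the latter because all $3! = 6$ permutations appearing in the formula of Lemma \ref{lem_D-logdet} contribute the same trace. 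Hence the required bound $\lvert D^3 R ( \rho ) [ U, U, U ] \rvert \leq 2 ( D^2 R ( \rho ) [ U, U ] )^{3/2}$ reduces to $\lvert \tr ( ( \rho^{-1} U )^3 ) \rvert \leq ( \tr ( ( \rho^{-1} U )^2 ) )^{3/2}$.

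The key simplification is symmetrization. Since $U$ is Hermitian and $\rho$ is Hermitian positive definite, the matrix $M \coloneqq \rho^{-1/2} U \rho^{-1/2}$ is Hermitian, and by cyclicity of the trace one has $\tr ( ( \rho^{-1} U )^2 ) = \tr ( M^2 )$ and $\tr ( ( \rho^{-1} U )^3 ) = \tr ( M^3 )$. It therefore suffices to prove the scalar inequality $\lvert \tr ( M^3 ) \rvert \leq ( \tr ( M^2 ) )^{3/2}$ for every Hermitian $M$. Writing $\lambda_1, \dots, \lambda_d$ for the real eigenvalues of $M$, this reads $\lvert \sum_i \lambda_i^3 \rvert \leq ( \sum_i \lambda_i^2 )^{3/2}$, which follows from $\lvert \sum_i \lambda_i^3 \rvert \leq \sum_i \lvert \lambda_i \rvert^3 \leq ( \sum_i \lambda_i^2 )^{3/2}$; the final step is the monotonicity of the $\ell_p$-norm of the vector $( \lambda_1, \dots, \lambda_d )$ in $p$, namely that its $\ell_3$-norm is dominated by its $\ell_2$-norm. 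This establishes $1$-self-concordance of $R$, hence of $\overline{R}$.

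I do not expect a serious obstacle. The only point requiring care is that $\rho^{-1} U$ is not Hermitian in general, so one cannot diagonalize it with real eigenvalues directly; this is exactly what the conjugation $M = \rho^{-1/2} U \rho^{-1/2}$ resolves, after which the eigenvalue argument applies cleanly. As an alternative to the explicit computation, one could instead invoke the classical self-concordance of $- \log \det$ on the positive-definite cone together with the affine invariance of self-concordance used for Lemma \ref{lem_scLoss}, noting that vectorization is a linear isometry; however, I prefer the direct route above, as it is fully self-contained given Lemma \ref{lem_D-logdet}.
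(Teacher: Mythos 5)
Your proof is correct and follows essentially the same route as the paper's: both invoke Lemma \ref{lem_D-logdet} for the second and third directional derivatives, pass to the Hermitian matrix $\rho^{-1/2} U \rho^{-1/2}$, and reduce to the eigenvalue inequality $\lvert \sum_i \lambda_i^3 \rvert \leq ( \sum_i \lambda_i^2 )^{3/2}$, which the paper proves via Cauchy--Schwarz and you prove via $\ell_p$-norm monotonicity --- an equivalent step. If anything, your write-up is slightly tidier: the paper's reduction carries a spurious factor of $2$ and miscounts the eigenvalues as $d^2$ rather than $d$, neither of which affects its validity since the stronger inequality is what gets proved.
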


\begin{proof}
This lemma is the vectorized version of \citet[Lemma 5.4.3]{Nesterov2018a}.
Lemma \ref{lem_D-logdet} gives
\begin{align*}
& D^3 \overline{R} ( \vec ( \rho ) ) [ \vec ( U ), \vec ( U ), \vec ( U ) ] =  D^3 R(\rho) [U,U,U] = 2 \tr \left( U \rho^{-1} U \rho^{-1} U \rho^{-1} \right) , \\
& D^2 \overline{R} ( \vec ( \rho ) ) [ \vec ( U ), \vec ( U ) ] =  D^2 R(\rho) [U,U] =\tr ( U \rho^{-1} U \rho^{-1} ). 
\end{align*}

\added[id=Wei-Fu]{
Denote the eigenvalues of the matrix $\rho^{-1/2} U \rho^{-1/2}$ as $\lambda_1, \ldots, \lambda_{d^2}$, which may not be all distinct.  
Then, it suffices to show that 
\[
\left\vert \sum_{i = 1}^{d^2} \lambda_i^3 \right\vert \leq 2 \left( \sum_{i = 1}^{d^2} \lambda_i^2 \right)^{3/2} . 
\]
}
\added[id=Wei-Fu]{
By \added{the} Cauchy-Schwarz inequality, we write
}
\added[id=Wei-Fu]{
\begin{align*}
\left\vert \sum_{i = 1}^{d^2} \lambda_i^3 \right\vert^2 & \leq \left( \sum_{i = 1}^{d^2} \vert \lambda_i \vert^3 \right)^2 \\
& = \left( \sum_{i = 1}^{d^2} \vert \lambda_i \vert \vert \lambda_i \vert^2 \right)^2 \\
& \leq \left( \sum_{i = 1}^{d^2} \vert \lambda_i \vert^2 \right) \left( \sum_{i = 1}^{d^2} \vert \lambda_i \vert ^ 4 \right) \\
& \leq \left( \sum_{i = 1}^{d^2} \vert \lambda_i \vert^2 \right) \left( \sum_{i = 1}^{d^2} \vert \lambda_i \vert^2   \right)^2 \\
& = \left( \sum_{i = 1}^{d^2} \vert \lambda_i \vert^2 \right) ^ 3 , 
\end{align*}
which proves the lemma. 
}
\end{proof}

\section{Properties of Kronecker Product}

Let $A$ and $B$ be $m \times n$ and $k \times l$ matrices. 
The \textit{Kronecker product} of $A$ and $B$, denoted by $A \otimes B$, \replaced[id=Wei-Fu]{ is defined as the following $mk \times nl$ block matrix
\[
    \begin{pmatrix}
        A_{11}B & \cdots & A_{1n}B\\
        \vdots &  & \vdots \\
        A_{m1}B & \cdots & A_{mn}B
    \end{pmatrix}.
\]} {is the $mk \times nl$ block matrix whose $i,j$ block is $A_{ij} B$}. 

We will use the following properties of Kronecker products in \added[id=Wei-Fu]{Lemma\added{s} \ref{lem_Hessian_R} and \ref{lem_third_order_derivative_R} to compute the explicit forms of higher-order derivatives of $-\log\det(\rho)$. } 

\begin{lemma} \label{lem_kronecker}
    Let $A,B,C$ and $D$ be $d \times d$ matrices. Then,
    \begin{enumerate}
        \item $(A \otimes B) (C \otimes D ) = AC \otimes BD$
        \item $(A \otimes B)^\hermit = A^\hermit \otimes B^\hermit$
        \item If $A \geq 0$ and $B \geq 0$, then $A \otimes B \geq 0$.
        \item \cite[Theorem 18.5]{Magnus2019} $\vec (ABC) = \left(C^\transp \otimes A \right) \vec (B)$. 
    \end{enumerate}
\end{lemma}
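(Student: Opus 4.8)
The plan is to establish each of the four properties separately by elementary means. Properties 1 and 2 will follow from a direct inspection of the block structure of the Kronecker product; property 3 will be reduced to properties 1 and 2 via a spectral decomposition; and property 4 is a standard vectorization identity that I would simply cite.

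For property 1, I would compute the $(i,j)$ block of the product $(A \otimes B)(C \otimes D)$ directly. By the definition of the Kronecker product, the $(i,k)$ block of $A \otimes B$ is $A_{ik} B$ and the $(k,j)$ block of $C \otimes D$ is $C_{kj} D$, so block multiplication gives the $(i,j)$ block as $\sum_k (A_{ik} B)(C_{kj} D) = \bigl( \sum_k A_{ik} C_{kj} \bigr) BD = (AC)_{ij}\, BD$, which is exactly the $(i,j)$ block of $(AC) \otimes (BD)$. Property 2 is analogous: conjugate-transposing the block matrix swaps the $(i,j)$ and $(j,i)$ blocks and conjugate-transposes each, so the $(i,j)$ block of $(A \otimes B)^\hermit$ is $(A_{ji} B)^\hermit = \overline{A_{ji}}\, B^\hermit = (A^\hermit)_{ij}\, B^\hermit$, matching the $(i,j)$ block of $A^\hermit \otimes B^\hermit$.

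For property 3, I would invoke the spectral theorem to write $A = \sum_i \lambda_i u_i u_i^\hermit$ and $B = \sum_j \mu_j v_j v_j^\hermit$ with $\lambda_i, \mu_j \geq 0$. Expanding $A \otimes B$ by bilinearity and applying properties 1 and 2 yields $A \otimes B = \sum_{i,j} \lambda_i \mu_j\, (u_i \otimes v_j)(u_i \otimes v_j)^\hermit$. Each summand is a nonnegative scalar times a rank-one positive semi-definite matrix, so the entire sum is positive semi-definite. Property 4 is the standard vectorization formula $\vec(ABC) = (C^\transp \otimes A)\vec(B)$, which I would take directly from the cited reference.

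I do not anticipate a genuine obstacle, since all four statements are routine facts about the Kronecker product; the only mild care needed is the index bookkeeping in the block computations for properties 1 and 2, together with the observation that property 3 follows cleanly once properties 1 and 2 are in hand.
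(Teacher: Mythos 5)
Your proposal is correct, but note that the paper itself offers no proof of this lemma at all: properties 1--3 are stated as standard facts about the Kronecker product, and property 4 is delegated entirely to the cited reference. So there is no paper proof to compare against; your argument supplies what the paper leaves to the reader. On its own merits, your block computations for properties 1 and 2 are the standard ones and are carried out correctly, and the spectral-decomposition argument for property 3 is sound. One small bookkeeping point: in property 3 you apply property 1 to the product $(u_i \otimes v_j)(u_i^\hermit \otimes v_j^\hermit)$, whose factors are $d \times 1$ and $1 \times d$ matrices, whereas the lemma as stated (and your proof of property 1) covers only $d \times d$ matrices. The mixed-product rule does hold for rectangular matrices of compatible sizes, and your block argument extends verbatim, but this extension deserves an explicit remark. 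Alternatively, property 3 follows from the square-matrix versions of properties 1 and 2 alone via the factorization $A \otimes B = \left( A^{1/2} \otimes B^{1/2} \right)^\hermit \left( A^{1/2} \otimes B^{1/2} \right)$, which avoids the rectangular case entirely.
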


\begin{lemma} \label{lem_Hessian_R}
    Let $R\colon \mathcal{H}^{d} \to \mathbb{R}$ be the \added[id=Wei-Fu]{ negative} log-determinant function, i.e., $R ( \rho ) = - \log \det \rho$. 
    Then, 
    \begin{align*}
        \nabla^2 \overline{R}(\vec (\rho ) ) =  (\rho^{-1} )^\transp  \otimes \rho^{-1} 
    \end{align*} 
\end{lemma}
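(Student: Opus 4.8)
The plan is to compute the Hessian of $\overline{R}$ directly from the second-order Gâteaux derivative formula already established in Lemma \ref{lem_D-logdet}, and then recognize the resulting bilinear form as the quadratic form of a Kronecker product using the vectorization identity in Lemma \ref{lem_kronecker}. The Hessian $\nabla^2 \overline{R}(\vec(\rho))$ is by definition the unique $d^2 \times d^2$ matrix satisfying $\braket{\vec(V), \nabla^2 \overline{R}(\vec(\rho)) \, \vec(U)} = D^2 R(\rho)[V, U]$ for all Hermitian $U, V$, so it suffices to verify that the proposed matrix $(\rho^{-1})^\transp \otimes \rho^{-1}$ reproduces this bilinear form.

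First I would recall from Lemma \ref{lem_D-logdet} that $D^2 R(\rho)[V, U] = \tr(\rho^{-1} V \rho^{-1} U)$. The goal is then to rewrite this trace as an inner product $\braket{\vec(V), \left[(\rho^{-1})^\transp \otimes \rho^{-1}\right] \vec(U)}$ in $\mathbb{C}^{d^2}$. The key manipulation is to apply part (4) of Lemma \ref{lem_kronecker} to the matrix product $\rho^{-1} U \rho^{-1}$, writing $\vec(\rho^{-1} U \rho^{-1}) = \left((\rho^{-1})^\transp \otimes \rho^{-1}\right) \vec(U)$. Then I would use the standard trace-as-inner-product identity $\tr(V^\ast M) = \braket{\vec(V), \vec(M)}$ together with the Hermiticity of $V$ (so that $V^\ast = V$ inside the trace, matching $\tr(\rho^{-1} V \rho^{-1} U) = \tr(V \cdot \rho^{-1} U \rho^{-1})$ after cycling the trace). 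Combining these yields
\[
D^2 R(\rho)[V, U] = \tr\left(V \, \rho^{-1} U \rho^{-1}\right) = \braket{\vec(V), \left((\rho^{-1})^\transp \otimes \rho^{-1}\right) \vec(U)},
\]
which identifies the Hessian as claimed.

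The main obstacle I anticipate is bookkeeping with the conjugate-transpose versus transpose conventions: the inner product on $\mathbb{C}^{d^2}$ is conjugate-linear in its first argument, while the vectorization identity produces a plain transpose $(\rho^{-1})^\transp$ rather than a conjugate transpose. Since $\rho$ is Hermitian and positive definite, $\rho^{-1}$ is Hermitian, so $(\rho^{-1})^\transp = \overline{\rho^{-1}}$, and one must track the complex conjugation carefully to confirm the quadratic form matches the real-valued trace. It is worth checking that the resulting matrix is Hermitian positive definite, as it should be: by part (2) of Lemma \ref{lem_kronecker} the Kronecker product of Hermitian matrices is Hermitian, and by part (3) the product of two positive definite matrices under $\otimes$ is positive definite, consistent with the strict convexity of $\overline{R}$. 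Once the conjugation conventions are settled, the identity follows immediately from the two cited lemmas with no further estimation required.
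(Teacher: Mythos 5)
Your proposal is correct and follows essentially the same route as the paper's own proof: both start from the defining identity $\braket{\vec(V), \nabla^2 \overline{R}(\vec(\rho))\,\vec(U)} = D^2 R(\rho)[V,U]$, invoke Lemma \ref{lem_D-logdet} for $D^2 R(\rho)[V,U] = \tr(\rho^{-1}V\rho^{-1}U)$, rewrite the trace as the inner product $\braket{\vec(V), \vec(\rho^{-1}U\rho^{-1})}$ using Hermiticity of $V$, and finish with the vectorization identity of Lemma \ref{lem_kronecker}. Your extra remarks on the transpose-versus-conjugate-transpose bookkeeping and the Hermitian positive definiteness of the resulting Kronecker product are sound sanity checks but do not change the argument.
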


\begin{proof}
     For all $V,W \in \mathcal{H}^{d}$, by Lemma \ref{lem_D-logdet} and Lemma \ref{lem_kronecker}, we write
    \begin{align*}
        \braket{\vec (V), \nabla^2 \overline{R}(\vec (\rho ) ) \vec(W) } &= D^2 \overline{R}(\rho) [\vec{V}, \vec{W}] \\
        &= \tr (V\rho^{-1}W\rho^{-1} )\\
        &= \braket{ \vec(V), \vec \left(\rho^{-1} W \rho^{-1} \right) }\\
        &= \braket{ \vec(V), \left ( (\rho^{-1} )^\transp \otimes \rho^{-1} \right) \vec (W) }
    \end{align*} 
    for all $U \in \mathcal{H}^{d}$
\end{proof}

\begin{lemma} \label{lem_third_order_derivative_R}
\added[id=Wei-Fu]{
    Let $R\colon \mathcal{H}^{d} \to \mathbb{R}$ be the negative log-determinant function, i.e., $R ( \rho ) = - \log \det \rho$. Let $U \in \mathcal{H}^{d}$ and $u = \vec(U)$, 
    Then, 
    \begin{align*}
        D^3 \overline{R}(\vec(\rho)) [ u ] =  -(\rho^{-1} U \rho^{-1} )^\transp  \otimes \rho^{-1} - (\rho^{-1} )^\transp  \otimes (\rho^{-1} U \rho^{-1} ).
    \end{align*}
}
\end{lemma}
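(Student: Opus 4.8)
The plan is to differentiate the closed-form expression for the Hessian obtained in Lemma \ref{lem_Hessian_R} exactly one more time. The first thing to pin down is what the object $D^3 \overline{R}(\vec(\rho))[u]$ denotes when only a \emph{single} direction is inserted: by definition it is the $d^2 \times d^2$ matrix obtained as the directional derivative of the Hessian map $\rho \mapsto \nabla^2 \overline{R}(\vec(\rho))$ in the direction $U = \vec^{-1}(u)$, i.e.
\[
D^3 \overline{R}(\vec(\rho))[u] = \lim_{\alpha \downarrow 0} \frac{\nabla^2 \overline{R}(\vec(\rho + \alpha U)) - \nabla^2 \overline{R}(\vec(\rho))}{\alpha}.
\]
By Lemma \ref{lem_Hessian_R}, computing this reduces to differentiating the matrix-valued function $\rho \mapsto (\rho^{-1})^\transp \otimes \rho^{-1}$.

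First I would record the derivative of the matrix inverse. From the computation in the proof of Lemma \ref{lem_D-logdet} (via the formula of \citet[Example X.4.2 (iii)]{Bhatia1997}), the map $\rho \mapsto \rho^{-1}$ has directional derivative $D[\rho^{-1}][U] = -\rho^{-1} U \rho^{-1}$. Since transposition is a fixed linear map, it commutes with the G\^{a}teaux derivative, so $D[(\rho^{-1})^\transp][U] = (-\rho^{-1} U \rho^{-1})^\transp = -(\rho^{-1} U \rho^{-1})^\transp$.

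Next I would invoke the Leibniz rule for the Kronecker product. Because $(A, B) \mapsto A \otimes B$ is bilinear (Lemma \ref{lem_kronecker}), the chain rule (Lemma \ref{lem_chainRule}) applied to the composition $\rho \mapsto \bigl((\rho^{-1})^\transp, \rho^{-1}\bigr) \mapsto (\rho^{-1})^\transp \otimes \rho^{-1}$ yields
\[
D[(\rho^{-1})^\transp \otimes \rho^{-1}][U] = \left(D[(\rho^{-1})^\transp][U]\right) \otimes \rho^{-1} + (\rho^{-1})^\transp \otimes \left(D[\rho^{-1}][U]\right).
\]
Substituting the two derivatives from the previous step gives precisely
\[
-(\rho^{-1} U \rho^{-1})^\transp \otimes \rho^{-1} - (\rho^{-1})^\transp \otimes (\rho^{-1} U \rho^{-1}),
\]
which is the claimed identity.

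The only genuinely delicate point is the bookkeeping of the identification in the first paragraph: one must be certain that inserting a single direction into the third derivative produces the same $d^2 \times d^2$ matrix as differentiating the Hessian matrix of Lemma \ref{lem_Hessian_R}, rather than a trilinear form on $\mathbb{C}^{d^2}$. Once this is settled, the remainder is a routine application of the bilinearity of the Kronecker product together with the inverse-derivative formula; no new inequalities, self-concordance estimates, or convexity arguments are needed.
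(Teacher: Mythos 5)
Your proof is correct, but it takes a different route from the paper's. The paper never differentiates the Hessian formula of Lemma \ref{lem_Hessian_R}; instead it evaluates the scalar trilinear form $D^3 R(\rho)[U,V,W] = -\tr(\rho^{-1}U\rho^{-1}V\rho^{-1}W) - \tr(\rho^{-1}U\rho^{-1}W\rho^{-1}V)$ directly from the general trace formula of Lemma \ref{lem_D-logdet}, and then reads off the matrix representing the bilinear form $(V,W)\mapsto D^3 R(\rho)[U,V,W]$ by rewriting each trace as $\braket{\vec(V), \vec(ABC)}$ and applying the vectorization identity $\vec(ABC) = (C^\transp \otimes A)\vec(B)$ of Lemma \ref{lem_kronecker}. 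In that route the identification of $D^3\overline{R}(\vec(\rho))[u]$ as a $d^2\times d^2$ matrix is purely definitional, so no interchange of limits needs justification. Your route---Leibniz rule on $\rho \mapsto (\rho^{-1})^\transp \otimes \rho^{-1}$ together with $D[\rho^{-1}][U] = -\rho^{-1}U\rho^{-1}$---is more streamlined, since the trace-to-Kronecker conversion is inherited once and for all from Lemma \ref{lem_Hessian_R} rather than redone. The one point you flag but leave unsettled (that differentiating the Hessian-valued map yields the same matrix as inserting one slot into $D^3$) is genuinely the price of your approach; it is settled by noting that for fixed $v,w$ the pairing $M \mapsto \braket{v, Mw}$ is linear and continuous, hence commutes with the G\^{a}teaux limit, so $\braket{v, (D[\nabla^2\overline{R}](\vec(\rho))[u])\, w} = D\bigl(D^2\overline{R}(\cdot)[v,w]\bigr)(\vec(\rho))[u] = D^3\overline{R}(\vec(\rho))[v,w,u]$, which equals $D^3\overline{R}(\vec(\rho))[u,v,w]$ by symmetry of mixed derivatives of a $C^3$ function. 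With that sentence added, your argument is complete and fully consistent with the paper's conventions.
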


\begin{proof}
    For all $V,W \in \mathcal{H}^{d}$, let $v = \vec(V),w=\vec(W)$.
    By Lemma \ref{lem_D-logdet},
    \begin{align*}
        D^3 \overline{R}(\vec(\rho)) [ u,v,w] = -\tr \left( \rho^{-1} U \rho^{-1} V \rho^{-1} W \right) - \tr \left( \rho^{-1} U \rho^{-1} W \rho^{-1} V \right)
    \end{align*}.
    
    By Lemma \ref{lem_kronecker}
    we have
    \begin{align*}
        &\braket{\vec(V), D^3 \overline{R}(\vec(\rho)) [ u ] \vec(W) }\\ 
        = &D^3 R(\rho) [u,v,w] \\
        = &- \tr \left( \rho^{-1} U \rho^{-1} V \rho^{-1} W \right) - \tr \left( \rho^{-1} U \rho^{-1} W \rho^{-1} V \right)\\
        = &\braket{ \vec{V}, \vec \left( - \rho^{-1} W \rho^{-1} U \rho^{-1} - \rho^{-1} U \rho^{-1} W \rho^{-1}  \right) }\\
        = &\braket{ \vec(V), \left ( -(\rho^{-1} U \rho^{-1} )^\transp  \otimes \rho^{-1} - (\rho^{-1} )^\transp  \otimes (\rho^{-1} U \rho^{-1} ) \right) \vec (W) }
    \end{align*}
\end{proof}

\section{Missing Proofs in Section \ref{sec_vbc}} \label{app_missing_proofs}

\subsection{Proof of Lemma \ref{lem_sum_vbc}}

By the Cauchy-Schwarz inequality and VB-convexity of $\varphi_1$ and $\varphi_2$, 
\begin{align*}
& D^4 \varphi ( x ) [ u, u, v, v ] D^2 \varphi ( x ) [ v, v ] \\
& \quad = \left( \alpha D^4 \varphi_1 ( x ) [ u, u, v, v ] + \beta D^4 \varphi_2 ( x ) [ u, u, v, v ]  \right) \left( \alpha D^2 \varphi_1 (x) [ v, v ] + \beta D^2 \varphi_2 ( x ) [ v, v ] \right) \\
& \quad \geq \left( \alpha \sqrt{ D^4 \varphi_1 ( x ) [ u, u, v, v ] } D^2 \varphi_1 (x) [ v, v ] + \beta \sqrt{ D^4 \varphi_1 ( x ) [ u, u, v, v ] D^2 \varphi_2 ( x ) [ v, v ] } \right)^2 \\
& \quad \geq \left( \alpha \sqrt{ \frac{3}{2} } D^3 \varphi_1 ( x ) [ u, v, v ] + \beta \sqrt{ \frac{3}{2} } D^3 \varphi_2 ( x ) [ u, v, v ] \right)^2 \\
& \quad = \frac{3}{2} \left( D^3 \varphi ( x ) [ u, v, v ] \right)^2 . 
\end{align*}

\subsection{Proof of Lemma \ref{lem_traceIneq}}

Let $v = B^{-1/2} u$. 
Then, we have 
\[
\braket{ u, B^{-1/2} A B^{-1/2} u } \braket{ u, u } \geq \braket{ u, B^{-1/2} C B^{-1/2} u }^2 . 
\]
Denote the eigendecomposition of $B^{-1/2} C B^{-1/2}$ as $\sum_{i = 1}^d \lambda_i u_i u_i^*$, where $\lambda_i$ are eigenvalues and $u_i$ are the associated eigenvectors. 
Then the set $\set{ u_i }$ forms an orthonormal basis and we have $I = \sum_i u_i u_i^*$. 
We write 
\begin{align*}
\tr ( A B^{-1} ) & = \tr ( B^{-1/2} A B^{-1/2} ) \\
& = \tr \left( B^{-1/2} A B^{-1/2} \sum_{i = 1}^n u_i u_i^* \right) \\
& = \sum_{i = 1}^n \braket{ u_i, B^{-1/2} A B^{-1/2} u_i } \braket{u_i,u_i}\\
& \geq \sum_{i = 1}^n \braket{ u_i, B^{-1/2} C B^{-1/2} u_i }^2  \\ 
& = \sum_{i = 1}^n \lambda_i^2 \\
& = \tr ( B^{-1/2} C B^{-1/2} B^{-1/2} C B^{-1/2} ) . 
\end{align*}

The inequality follows from the fact that trace is invariant under cyclic shifts. 

\section{Lemmas in Regret Analysis (Section \ref{sec_analysis})} \label{app_lemmas}

All notations are defined as in Section \ref{sec_proof}. 

\citet[Proposition 7]{Tsai2023} proved that the functions $R - f_t$ in LL-OLQS are convex on $\mathcal{D}_d$. 
Below is a vectorized analogue of their result. 

\begin{lemma} \label{lem_relativeSmoothness}
It holds that, for all positive definite $\rho \in \mathcal{D}_d$, 
\[
\nabla^2 \overline{f}_t ( \vec ( \rho ) ) \leq \nabla^2 \overline{R} ( \vec ( \rho ) ) . 
\]
\end{lemma}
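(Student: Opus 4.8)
The plan is to prove the matrix inequality $\nabla^2 \overline{f}_t(\vec(\rho)) \leq \nabla^2 \overline{R}(\vec(\rho))$ by computing both Hessians explicitly and then comparing them as quadratic forms against arbitrary test vectors $\vec(V)$ with $V \in \mathcal{H}^d$. First I would record the two Hessians. From equation \eqref{eq_gradGrad} the loss Hessian is the rank-one matrix $\nabla^2 \overline{f}_t(\vec(\rho)) = \vec(A_t)\vec(A_t)^\ast / \tr(A_t\rho)^2$, and from Lemma \ref{lem_Hessian_R} the regularizer Hessian is the Kronecker product $\nabla^2 \overline{R}(\vec(\rho)) = (\rho^{-1})^\transp \otimes \rho^{-1}$.

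The core of the argument is then to show, for every $V \in \mathcal{H}^d$ with $v = \vec(V)$, that
\[
\braket{ v, \nabla^2 \overline{f}_t(\vec(\rho))\, v } \leq \braket{ v, \nabla^2 \overline{R}(\vec(\rho))\, v }.
\]
Using Lemma \ref{lem_kronecker} (item 4), the right-hand side evaluates to $\braket{v, ((\rho^{-1})^\transp \otimes \rho^{-1}) v} = \tr(V\rho^{-1}V\rho^{-1}) = \tr((\rho^{-1/2}V\rho^{-1/2})^2)$, matching $D^2\overline{R}$ from Lemma \ref{lem_D-logdet}. The left-hand side is $\lvert \braket{\vec(A_t), v} \rvert^2 / \tr(A_t\rho)^2 = \tr(A_t V)^2 / \tr(A_t\rho)^2$, since $A_t$ and $V$ are Hermitian. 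So the inequality to establish reduces to the scalar statement
\[
\tr(A_t V)^2 \leq \tr(A_t\rho)^2\, \tr(\rho^{-1/2}V\rho^{-1/2}\rho^{-1/2}V\rho^{-1/2}).
\]

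To close this, I would insert $\rho^{1/2}$ factors and apply Cauchy–Schwarz in the Hilbert–Schmidt inner product. Writing $\tr(A_tV) = \braketHS{\rho^{1/2}A_t\rho^{1/2},\, \rho^{-1/2}V\rho^{-1/2}}$, Cauchy–Schwarz gives $\tr(A_tV)^2 \leq \braketHS{\rho^{1/2}A_t\rho^{1/2}, \rho^{1/2}A_t\rho^{1/2}} \cdot \braketHS{\rho^{-1/2}V\rho^{-1/2}, \rho^{-1/2}V\rho^{-1/2}}$. The second factor is exactly $\tr(\rho^{-1/2}V\rho^{-1/2}\rho^{-1/2}V\rho^{-1/2})$, so it remains to check that the first factor $\tr(\rho^{1/2}A_t\rho^{1/2}\rho^{1/2}A_t\rho^{1/2}) = \tr(A_t\rho A_t\rho) = \tr((A_t\rho)^2)$ is at most $\tr(A_t\rho)^2$. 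The main obstacle is this last step: it is the statement that for the positive semidefinite matrix $M = \rho^{1/2}A_t\rho^{1/2}$ we have $\tr(M^2) \leq \tr(M)^2$, which holds because $\tr(M)^2 = (\sum_i \mu_i)^2 \geq \sum_i \mu_i^2 = \tr(M^2)$ for nonnegative eigenvalues $\mu_i$. Since $A_t \geq 0$ and $\rho$ is positive definite, $M$ is positive semidefinite and this eigenvalue inequality applies, completing the proof.
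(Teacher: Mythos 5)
Your proof is correct and follows essentially the same route as the paper's: both compute the two Hessians explicitly, reduce the Loewner inequality to a quadratic-form comparison against test vectors, and close it with the Hilbert--Schmidt Cauchy--Schwarz pairing of $\rho^{1/2} A_t \rho^{1/2}$ with $\rho^{-1/2} V \rho^{-1/2}$ together with the fact that $\tr ( M^2 ) \leq ( \tr M )^2$ for positive semi-definite $M$. The only differences are cosmetic: you run the chain of inequalities in the opposite direction (bounding $\tr ( A_t V )^2$ from above rather than $\tr ( V \rho^{-1} V \rho^{-1} )$ from below) and you spell out the eigenvalue argument for $\tr ( M^2 ) \leq ( \tr M )^2$, which the paper simply cites as a known fact.
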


\begin{proof}

\added[id=YHL]{A direct calculation, as well as Lemma \ref{lem_Hessian_R}, gives}
\[
\nabla^2 \overline{f}_t ( \vec ( \rho ) ) = \frac{\vec ( A_t ) \vec ( A_t )^*}{ \left( \tr ( A_t \rho ) \right) ^ 2} , \quad \nabla^2 \overline{R} ( \vec ( \rho ) ) = (\rho^{-1} )^\transp  \otimes \rho^{-1} . 
\]
We desire to prove that $\braket{ m, \nabla^2 \overline{f}_t ( \vec ( \rho ) ) m } \leq \braket{ m, \nabla^2 \overline{R} ( \vec ( \rho ) ) m }$ for all vectors $m \in \mathbb{C}^{d^2}$. 
Define $M = \vec^{-1} (m)$ for any vector $v$. 
Then, we desire to prove that the inequality 
\[
\frac{ \left( \tr ( A_t M ) \right) ^ 2 }{ \left( \tr ( A_t \rho ) \right) ^ 2} \leq \braket{ m, \vec ( \rho^{-1} M \rho^{-1} ) } = \tr ( M \rho^{-1} M \rho^{-1} ) 
\]
holds for all $m  \in \mathbb{R}^{d^2}$, where we have used the fact that $( A \otimes B ) \vec ( V ) = \vec ( B V A^\transp )$ for any matrices $A$, $B$, and $V$. 
Let $\braketHS{ \cdot, \cdot }$ denotes the Hilbert-Schmidt inner product. 
By the Cauchy-Schwarz inequality and the fact that $\tr ( A^2 ) \leq ( \tr ( A ) )^2$ for any positive semi-definite matrix $A$, we write 
\begin{align*}
\tr ( M \rho^{-1} M \rho^{-1} ) &= \braketHS{ \rho^{-1/2} M \rho^{-1/2}, \rho^{-1/2} M \rho^{-1/2} } \\
&\geq \frac{ \braketHS{ \rho^{1/2} A_t \rho^{1/2}, \rho^{-1/2} M \rho^{-1/2} }^2 }{ \braketHS{ \rho^{1/2} A_t \rho^{1/2}, \rho^{1/2} A_t \rho^{1/2} } } \\
& = \frac{\left( \tr ( A_t M ) \right)^2}{\left( \tr ( \rho^{1/2} A_t \rho^{1/2} \rho^{1/2} A_t \rho^{1/2} ) \right)} \\
& \geq \frac{\left( \tr ( A_t M ) \right)^2}{\left( \tr ( \rho^{1/2} A_t \rho^{1/2} ) \right) ^ 2} \\
& = \frac{\left( \tr ( A_t M ) \right)^2}{\left( \tr ( A_t \rho ) \right) ^ 2} . 
\end{align*}
This proves the lemma. 
\end{proof}

\begin{lemma} \label{lem_iterateLowerBound}
Let $\rho_T^\star \in \argmin_{\rho \in \mathcal{D}_d} L_T ( \rho )$. 
Then, it holds that 
\[
\rho_T^\star \geq \frac{\lambda}{T + \lambda d} I . 
\]
\end{lemma}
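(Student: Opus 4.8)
The plan is to establish a lower bound on the minimizer $\rho_T^\star$ of $L_T(\rho) = \sum_{t=1}^T f_t(\rho) + \lambda R(\rho)$ over the density matrix set $\mathcal{D}_d$ by exploiting the blow-up of the log-determinant barrier $R(\rho) = -\log\det\rho$ near the boundary of the positive semidefinite cone. Intuitively, because $R$ forces the eigenvalues of $\rho_T^\star$ away from zero, the minimizer cannot have any eigenvalue smaller than order $\lambda/(T+\lambda d)$, and the loss terms $f_t$ only push mildly in the opposite direction. First I would write the first-order optimality condition for the constrained minimization over $\mathcal{D}_d$: using the gradient formulas $\nabla \overline f_t(\vec\rho) = -\vec(A_t)/\tr(A_t\rho)$ and $\nabla \overline R(\vec\rho) = -\vec(\rho^{-1})$ (the latter from Lemma~\ref{lem_D-logdet}), I would encode the trace-one constraint via a Lagrange multiplier $\nu$ and the PSD constraint (which is inactive at an interior optimum since $R$ is a barrier) to obtain
\[
-\sum_{t=1}^T \frac{A_t}{\tr(A_t \rho_T^\star)} - \lambda (\rho_T^\star)^{-1} + \nu I = 0 .
\]

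The key step is then to pin down the multiplier $\nu$ and rearrange. Multiplying the stationarity equation on both sides by $\rho_T^\star$ and taking the trace, and using $\tr(\rho_T^\star)=1$, gives $-T - \lambda d + \nu = 0$, so $\nu = T + \lambda d$ (here I use that $\tr\bigl(A_t \rho_T^\star / \tr(A_t\rho_T^\star)\bigr)=1$ for each $t$ and $\tr((\rho_T^\star)^{-1}\rho_T^\star)=d$). Substituting back, the stationarity condition becomes
\[
\lambda (\rho_T^\star)^{-1} = (T+\lambda d) I - \sum_{t=1}^T \frac{A_t}{\tr(A_t\rho_T^\star)} .
\]
Since each $A_t$ is positive semidefinite and $\tr(A_t\rho_T^\star) > 0$, the subtracted sum is PSD, hence $\lambda(\rho_T^\star)^{-1} \leq (T+\lambda d) I$. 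Taking inverses (which reverses the Loewner order for positive definite matrices) yields $\rho_T^\star \geq \frac{\lambda}{T+\lambda d} I$, exactly the claim.

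The main obstacle I anticipate is handling the constraint structure rigorously: I must justify that the minimizer lies in the interior of the PSD cone (so that the inequality constraint $\rho \geq 0$ is inactive and only the equality constraint $\tr\rho = 1$ contributes a multiplier $\nu I$), which follows because $R(\rho) \to +\infty$ as $\rho$ approaches the boundary while the $f_t$ remain bounded below on any compact interior set—so the infimum is attained at an interior point where $\nabla R$ is well-defined. A secondary subtlety is the complex/Hermitian differentiation: I would appeal to the calculus framework of Appendix~\ref{app_calculus}, where derivatives are taken in the real Hilbert space of Hermitian matrices, so that the gradient of the trace-constraint is $I$ and the stationarity condition is an identity among Hermitian matrices. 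Once interiority and the multiplier computation are in place, the remaining manipulations are the routine PSD-cone and order-reversing-inverse facts noted above.
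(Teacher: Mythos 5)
Your proof is correct, and it reaches the bound by a somewhat different route than the paper. The paper never introduces a Lagrange multiplier: it states the first-order optimality condition for a minimizer over a convex set in variational-inequality form,
\[
\tr \left[ \left( \lambda ( \rho_T^\star )^{-1} + \sum_{t = 1}^T \frac{A_t}{\tr ( A_t \rho_T^\star )} \right) \left( \rho - \rho_T^\star \right) \right] \leq 0 , \quad \forall \rho \in \mathcal{D}_d ,
\]
expands it using $\tr ( ( \rho_T^\star )^{-1} \rho_T^\star ) = d$ and $\tr ( A_t \rho_T^\star ) / \tr ( A_t \rho_T^\star ) = 1$ to get $\lambda \tr ( \rho ( \rho_T^\star )^{-1} ) \leq \lambda d + T$ for every $\rho \in \mathcal{D}_d$, and then maximizes the left-hand side over $\rho \in \mathcal{D}_d$ to bound $\lambda_{\max} ( ( \rho_T^\star )^{-1} )$. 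You instead use the KKT stationarity identity at an interior minimizer, pin down the multiplier $\nu = T + \lambda d$ by tracing against $\rho_T^\star$ (the same two trace computations the paper uses), and finish by dropping the positive semi-definite sum and inverting the Loewner inequality. The ingredients are identical; your version yields the exact identity $\lambda ( \rho_T^\star )^{-1} = ( T + \lambda d ) I - \sum_{t} A_t / \tr ( A_t \rho_T^\star )$, which is slightly more informative than the paper's inequality, at the cost of explicitly justifying interiority and the existence of the multiplier (a cost that is largely illusory, since the paper's optimality condition also tacitly assumes $\rho_T^\star$ is positive definite by writing $( \rho_T^\star )^{-1}$). One small repair to your interiority argument: what you need is that each $f_t$ is bounded below on all of $\mathcal{D}_d$, not merely ``on any compact interior set''; this holds because $\tr ( A_t \rho ) \leq \lambda_{\max} ( A_t ) \tr ( \rho ) = \lambda_{\max} ( A_t )$, so $f_t ( \rho ) \geq - \log \lambda_{\max} ( A_t )$ everywhere on $\mathcal{D}_d$, while $\lambda R ( \rho ) \to + \infty$ as $\rho$ approaches the boundary of the positive semi-definite cone, so the sublevel sets of $L_T$ stay in the relative interior.
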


\begin{proof}
By the optimality condition, 
\[
\tr \left[ \left( \lambda ( \rho_T^\star )^{-1} + \sum_{t = 1}^T \frac{A_t}{\tr ( A_t \rho_T^\star )} \right) \left( \rho - \rho_T^\star \right) \right] \leq 0 , \quad \forall \rho \in \mathcal{D}_d . 
\]
Then, we write 
\[
\lambda \tr \left( \rho ( \rho_T^\star )^{-1} \right) \leq \lambda d  + \sum_{t = 1}^T \frac{\tr ( A_t ( \rho_T^\star - \rho ) )}{\tr ( A_t \rho_T^\star )} \leq \lambda d + T , \quad \forall \rho \in \mathcal{D}_d . 
\]
Taking maximum of the left-hand side over all $\rho \in \mathcal{D}_d$, we obtain that the largest eigenvalue of $( \rho_T^\star )^{-1}$ is bounded from above by $( \lambda d + T ) / \lambda$. 
This implies that the smallest eigenvalue of $\rho_T^\star$ is bounded from below by $\lambda  / ( \lambda d + T )$. 
\end{proof}

The following lemma is a direct generalization of \citep[Lemma A.2]{Jezequel2022} for the quantum setup. 
We omit the proof as it is almost the same as that for the classical case. 

\begin{lemma} \label{lem_convexCombination}
For any $\rho \in \mathcal{D}_d$ and $\alpha \in \interval[open]{0}{1}$, define $[ \rho ]_\alpha \coloneqq ( 1 - \alpha \rho ) + ( \alpha / d ) I$. 
Then, it holds that 
\[
f_t ( [ \rho ]_\alpha ) - f_t ( \rho ) \leq \frac{\alpha}{1 - \alpha} . 
\]
\end{lemma}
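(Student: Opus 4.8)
The plan is to unwind the definition of the loss and reduce the claim to an elementary scalar inequality. Since $f_t(\rho) = -\log\tr(A_t\rho)$, the quantity to be bounded can be rewritten as
\[
f_t([\rho]_\alpha) - f_t(\rho) = \log\frac{\tr(A_t\rho)}{\tr(A_t[\rho]_\alpha)},
\]
so the entire argument will come down to bounding the trace ratio $\tr(A_t\rho)/\tr(A_t[\rho]_\alpha)$ from above by $1/(1-\alpha)$.

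First I would expand the denominator by linearity of the trace, using $[\rho]_\alpha = (1-\alpha)\rho + (\alpha/d) I$, to get
\[
\tr(A_t[\rho]_\alpha) = (1-\alpha)\tr(A_t\rho) + \frac{\alpha}{d}\tr(A_t).
\]
The key structural observation is that $A_t$ is Hermitian positive semi-definite (it is Reality's move in the LL-OLQS game), so $\tr(A_t) \geq 0$, while $\rho \in \mathcal{D}_d$ gives $\tr(A_t\rho) \geq 0$. Hence the second summand is non-negative and may be discarded, yielding $\tr(A_t[\rho]_\alpha) \geq (1-\alpha)\tr(A_t\rho)$. Dividing (the denominator is strictly positive once $A_t \neq 0$, which is the only non-degenerate case) gives
\[
\frac{\tr(A_t\rho)}{\tr(A_t[\rho]_\alpha)} \leq \frac{1}{1-\alpha}.
\]

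Finally I would close the argument with the standard inequality $\log y \leq y - 1$ applied to $y = 1/(1-\alpha) \geq 1$, which upgrades the crude bound $-\log(1-\alpha)$ to the stated form:
\[
f_t([\rho]_\alpha) - f_t(\rho) \leq \log\frac{1}{1-\alpha} \leq \frac{1}{1-\alpha} - 1 = \frac{\alpha}{1-\alpha}.
\]
There is no substantive obstacle here; the result is essentially elementary. The only points requiring a little care are ensuring $\tr(A_t\rho) > 0$ so that the ratio is well defined, and noticing that the immediate bound $-\log(1-\alpha)$ must be sharpened to $\alpha/(1-\alpha)$ through the inequality $\log y \leq y-1$. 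This is exactly the classical computation of \citet{Jezequel2022}, carried over verbatim to the quantum setting since it never uses commutativity of the matrices, which is why the paper omits the details.
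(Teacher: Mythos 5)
Your proof is correct and is exactly the argument the paper has in mind: the paper omits the proof, citing \citet[Lemma A.2]{Jezequel2022}, and your computation---linearity of the trace, dropping the non-negative term $(\alpha/d)\tr(A_t)$, and sharpening $-\log(1-\alpha)$ to $\alpha/(1-\alpha)$ via $\log y \leq y-1$---is precisely that classical argument transplanted to the quantum setting, where it goes through unchanged since no commutativity is used.
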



The following lemma is necessary to bounding $\pi_t$. 

\begin{lemma} \label{lem_boundDualLocalNorm}
Let $H \in \mathbb{C}^{d \times d}$ be a positive definite matrix. 
Then, for any $v \in \mathbb{C}^d$ and $\gamma > 0$, it holds that $v v^* \leq \gamma H$ if and only if $\norm{ v }_{H^{-1}} \leq \sqrt{ \gamma }$. 
\end{lemma}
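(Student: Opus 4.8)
The plan is to reduce the matrix inequality $vv^* \leq \gamma H$ to a scalar variational statement and then to identify the sharp constant in that statement with $\norm{v}_{H^{-1}}^2$. Since $H$ is positive definite, it admits a positive definite square root $H^{1/2}$ with inverse $H^{-1/2}$, which I will use freely. By definition of the Loewner order, $vv^* \leq \gamma H$ is equivalent to the assertion that $w^* ( vv^* ) w \leq \gamma\, w^* H w$ for every $w \in \mathbb{C}^d$, i.e. that $\abs{ w^* v }^2 \leq \gamma \norm{ w }_H^2$ for all $w$.

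The key step is to diagonalize away $H$ by the substitution $w = H^{-1/2} y$. Under this substitution $\norm{ w }_H^2 = w^* H w = y^* y$ and $w^* v = y^* H^{-1/2} v$, so the family of inequalities above becomes $\abs{ y^* ( H^{-1/2} v ) }^2 \leq \gamma\, y^* y$ for all $y \in \mathbb{C}^d$. The smallest $\gamma$ for which this holds for all $y$ is the squared length of $H^{-1/2} v$: by the Cauchy--Schwarz inequality,
\[
\sup_{y \neq 0} \frac{ \abs{ y^* ( H^{-1/2} v ) }^2 }{ y^* y } = \norm{ H^{-1/2} v }^2 = v^* H^{-1} v = \norm{ v }_{H^{-1}}^2 ,
\]
with the supremum attained at $y = H^{-1/2} v$. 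Consequently, the scalar inequalities hold for all $w$ (equivalently all $y$) if and only if $\norm{ v }_{H^{-1}}^2 \leq \gamma$, which is precisely $\norm{ v }_{H^{-1}} \leq \sqrt{ \gamma }$. This chain of equivalences establishes both directions of the lemma simultaneously.

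As an alternative that I would mention for robustness, both conditions can be read off from a single Schur-complement argument: since $\gamma H$ and the scalar $1$ are positive, the block matrix $\begin{pmatrix} \gamma H & v \\ v^* & 1 \end{pmatrix}$ is positive semidefinite if and only if its Schur complement with respect to the $(1,1)$ block is nonnegative, namely $1 - v^* ( \gamma H )^{-1} v \geq 0$, and equally if and only if its Schur complement with respect to the $(2,2)$ block, $\gamma H - vv^*$, is positive semidefinite; equating the two conditions gives $vv^* \leq \gamma H \iff \norm{ v }_{H^{-1}}^2 \leq \gamma$.

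I do not anticipate any real obstacle here, as the statement is a standard fact of Hermitian linear algebra. The only points requiring care are purely bookkeeping: ensuring the square roots $H^{\pm 1/2}$ exist (guaranteed by positive definiteness of $H$) and handling complex vectors correctly, so that every transpose appearing is a conjugate transpose $(\cdot)^*$ and the Cauchy--Schwarz inequality is applied in the complex inner-product space $\mathbb{C}^d$.
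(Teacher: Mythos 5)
Your proof is correct and follows essentially the same route as the paper's: both arguments rest on the variational identity $\norm{v}_{H^{-1}}^2 = \sup_{w \neq 0} \abs{\braket{w, v}}^2 / \norm{w}_H^2$, which the paper invokes as the dual-norm relation between $\norm{\cdot}_H$ and $\norm{\cdot}_{H^{-1}}$ and which you derive explicitly via the substitution $w = H^{-1/2} y$ and Cauchy--Schwarz. Your Schur-complement aside is a genuinely different (and equally valid) mechanism, but since you present it only as a remark, the substance of your proof matches the paper's.
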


\begin{proof}
Notice that the dual norm of $\norm{ \cdot }_H$ is $\norm{ \cdot }_{H^{-1}}$ and vice versa. 
If $v v^* \leq \gamma H$, then we have 
\[
\norm{v}_{H^{-1}}^2 = \max_{u \neq 0} \frac{ \vert \braket{ u, v } \vert^2}{ \norm{ u }_H^2 } = \max_{u \neq 0} \frac{ \braket{ u, v v^* u } }{ \norm{ u }_H^2 } \leq \gamma . 
\]
On the other hand, if $\norm{ v }_{H^{-1}} \leq \sqrt{\gamma}$, then for all $u \in \mathbb{R}^d$, we have 
\[
\braket{ u, \gamma H u } = \gamma \norm{ u }_H^2 \geq \gamma \frac{ \vert \braket{ v, u } \vert ^2 }{ \norm{ v }_{H^{-1}}^2 } \geq \vert \braket{ v, u } \vert ^2 = \braket{ u, v v^* u } , \quad \forall v \in \mathbb{C}^d , v \neq 0 . 
\]
\end{proof}

\begin{lemma} \label{lem_newtonDecrement}
It holds that 
\[
\pi_t \leq 1 / ( \lambda + 1 ) < 1 . 
\]
\end{lemma}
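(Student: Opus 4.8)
The plan is to bound the Newton decrement $\pi_t = \norm{\nabla\overline{f}_t(\vec(\rho_t))}_{H_t^{-1}}^2$ by exploiting the structure of $H_t = \nabla^2\overline{L}_t(\vec(\rho_t))$, which is the Hessian of the \emph{regularized} cumulative loss evaluated at the minimizer of $P_{t-1}$. The key observation is that $H_t = \sum_{\tau=1}^t \nabla^2\overline{f}_\tau(\vec(\rho_t)) + \lambda\nabla^2\overline{R}(\vec(\rho_t))$, so $H_t$ dominates $\nabla^2\overline{f}_t(\vec(\rho_t)) + \lambda\nabla^2\overline{R}(\vec(\rho_t))$ as a positive semidefinite ordering, since dropping the remaining non-negative curvature terms $\sum_{\tau\neq t}\nabla^2\overline{f}_\tau$ only decreases the matrix.

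First I would recall from \eqref{eq_gradGrad} that $\nabla^2\overline{f}_t(\vec(\rho_t)) = \nabla\overline{f}_t(\vec(\rho_t))\,\nabla\overline{f}_t(\vec(\rho_t))^\hermit$, so the gradient outer product equals the loss Hessian. Writing $g \coloneqq \nabla\overline{f}_t(\vec(\rho_t))$, the goal reduces to showing $\braket{g, H_t^{-1} g} \le 1/(\lambda+1)$. The natural route is Lemma \ref{lem_boundDualLocalNorm}: it suffices to produce a scalar $\gamma = 1/(\lambda+1)$ with $g g^\hermit \le \gamma H_t$. Since $H_t \ge g g^\hermit + \lambda\nabla^2\overline{R}(\vec(\rho_t))$, and by Lemma \ref{lem_relativeSmoothness} we have $g g^\hermit = \nabla^2\overline{f}_t \le \nabla^2\overline{R}$, I would combine these to get $H_t \ge g g^\hermit + \lambda g g^\hermit = (\lambda+1)\,g g^\hermit$. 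This immediately yields $g g^\hermit \le \frac{1}{\lambda+1}H_t$, i.e. $\gamma = 1/(\lambda+1)$, and Lemma \ref{lem_boundDualLocalNorm} then delivers $\pi_t = \norm{g}_{H_t^{-1}}^2 \le 1/(\lambda+1)$.

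The two ingredients I would lean on are thus (i) the positive semidefinite lower bound $H_t \ge \nabla^2\overline{f}_t(\vec(\rho_t)) + \lambda\nabla^2\overline{R}(\vec(\rho_t))$, which follows because every discarded term $\nabla^2\overline{f}_\tau$ is a positive semidefinite rank-one matrix, and (ii) the relative-smoothness bound $\nabla^2\overline{f}_t \le \nabla^2\overline{R}$ of Lemma \ref{lem_relativeSmoothness}, which upgrades the $\lambda\nabla^2\overline{R}$ term into $\lambda$ extra copies of $\nabla^2\overline{f}_t$. The final conclusion $1/(\lambda+1) < 1$ is automatic since $\lambda > 0$.

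The main obstacle, and the one point requiring care, is the seemingly innocent step of converting $\lambda\nabla^2\overline{R} \ge \lambda\,\nabla^2\overline{f}_t = \lambda\, g g^\hermit$ into a clean coefficient on $g g^\hermit$: one must verify that the Loewner inequality $\nabla^2\overline{f}_t \le \nabla^2\overline{R}$ from Lemma \ref{lem_relativeSmoothness} can be scaled by $\lambda$ and added to the isolated loss term $g g^\hermit$ without any loss, so that the aggregate $(\lambda+1)\, g g^\hermit$ is genuinely dominated by $H_t$. This is legitimate because Loewner order is preserved under non-negative scaling and addition, but it is the inferential hinge of the argument. Everything else is a direct application of Lemma \ref{lem_boundDualLocalNorm} with the explicit value $\gamma = 1/(\lambda+1)$.
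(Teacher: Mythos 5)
Your proof is correct and is essentially identical to the paper's: both combine the rank-one identity \eqref{eq_gradGrad}, the relative-smoothness bound of Lemma \ref{lem_relativeSmoothness} to convert $\lambda\nabla^2\overline{R}$ into $\lambda$ copies of $gg^\hermit$, the positive semidefiniteness of the discarded terms $\nabla^2\overline{f}_\tau$ to get $(\lambda+1)gg^\hermit \leq H_t$, and Lemma \ref{lem_boundDualLocalNorm} to conclude. The only difference is presentational (you build up from $H_t$ downward while the paper chains inequalities upward from $(\lambda+1)\nabla^2\overline{f}_t$), which is immaterial.
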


\begin{proof}
By the equation \eqref{eq_gradGrad} and Lemma \ref{lem_relativeSmoothness}, we write 
\begin{align}
( \lambda + 1 ) \nabla \overline{f}_t ( \vec ( \rho_t ) ) \nabla \overline{f}_t ( \vec ( \rho_t ) )^* & = ( \lambda + 1 ) \nabla^2 \overline{f}_t ( \vec ( \rho_t ) ) \\
& \leq \lambda \nabla^2 \overline{R} ( \vec ( \rho_t ) ) + \nabla^2 \overline{f}_t ( \vec ( \rho_t ) ) \\
& \leq H_t . 
\end{align}
By Lemma \ref{lem_boundDualLocalNorm}, we obtain 
\[
\pi_t = \norm{ \nabla \overline{f}_t ( \vec ( \rho_t ) ) }_{H_t^{-1}}^2 \leq \frac{1}{\lambda + 1} . 
\]
\end{proof}

\begin{lemma} \label{lem_tsai}
It holds that 
\[
\underline{P}_t ( \rho_t ) - \underline{P}_t ( \rho_{t + 1} ) \leq \norm{ \nabla \overline{f}_t ( \vec( \rho_t ) ) + \mu \left( \nabla \overline{V}_t ( \vec ( \rho_t ) ) - \nabla \overline{V}_{t - 1} ( \vec ( \rho_{t} ) ) \right) }_{H_t^{-1}}^2
\]
if 
\[
\norm{ \nabla \overline{f}_t ( \vec( \rho_t ) ) + \mu \left( \nabla \overline{V}_t ( \vec ( \rho_t ) ) - \nabla \overline{V}_{t - 1} ( \vec ( \rho_{t} ) ) \right) }  _{H_t^{-1}} \leq \frac{1}{2} . 
\]
\end{lemma}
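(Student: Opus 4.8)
The plan is to run the self-concordance machinery directly on $\underline{P}_t$, mirroring \citet[Lemma~D.1]{Tsai2023a}, and to let the unit-trace constraint absorb the Lagrange multiplier so that exactly the gradient in the statement appears. Abbreviate $g_t \coloneqq \nabla \overline{f}_t ( \vec ( \rho_t ) ) + \mu ( \nabla \overline{V}_t ( \vec ( \rho_t ) ) - \nabla \overline{V}_{t - 1} ( \vec ( \rho_t ) ) )$, so the target bound is $\underline{P}_t(\rho_t) - \underline{P}_t(\rho_{t+1}) \le \norm{ g_t }_{H_t^{-1}}^2$ under the hypothesis $\norm{ g_t }_{H_t^{-1}} \le 1/2$. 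First I would record that $\overline{L}_t = \sum_{\tau=1}^t \overline{f}_\tau + \lambda \overline{R}$ is $1$-self-concordant: the summands $\overline{f}_\tau$ and $\overline{R}$ are $1$-self-concordant by Lemmas~\ref{lem_scLoss} and \ref{lem_scRegularizer}, and Lemma~\ref{lem_sum_sc} gives the constant $\max\{1, 1/\sqrt{\lambda}\} = 1$ since $\lambda \ge 1$. As $\underline{P}_t$ differs from $L_t$ only by an affine term, it has the same third derivative and the same Hessian $H_t$, hence is itself $1$-self-concordant.

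The crux is the gradient of $\underline{P}_t$ at $\rho_t$. Differentiating its definition, the vectorized gradient at $\rho_t$ is $G_t \coloneqq \nabla \overline{L}_t ( \vec ( \rho_t ) ) + \mu \nabla \overline{V}_t ( \vec ( \rho_t ) )$, since the affine term contributes the constant $\mu \nabla \overline{V}_t ( \vec ( \rho_t ) )$. Because $\rho_t$ minimizes $P_{t-1}$ over $\mathcal{D}_d$ and the log-determinant barrier keeps $\rho_t$ positive definite, the only active constraint is $\tr \rho = 1$; the first-order optimality condition therefore reads $\nabla \overline{L}_{t-1} ( \vec ( \rho_t ) ) + \mu \nabla \overline{V}_{t-1} ( \vec ( \rho_t ) ) = c \, \vec ( I )$ for a Lagrange multiplier $c \in \mathbb{R}$. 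Substituting $\nabla \overline{L}_t = \nabla \overline{L}_{t-1} + \nabla \overline{f}_t$ yields the clean identity $G_t = g_t + c \, \vec ( I )$.

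Finally I would apply the self-concordance lower bound (Lemma~\ref{lem_scBounds} with $M = 1$) to $\underline{P}_t$ at the pair $(\rho_t, \rho_{t+1})$, writing $r \coloneqq \norm{ \vec ( \rho_{t+1} ) - \vec ( \rho_t ) }_{H_t}$:
\[
\underline{P}_t ( \rho_{t+1} ) \ge \underline{P}_t ( \rho_t ) + \braket{ G_t, \vec ( \rho_{t+1} ) - \vec ( \rho_t ) } + \omega ( r ) .
\]
Since $\rho_t, \rho_{t+1} \in \mathcal{D}_d$ both have unit trace, the difference $\vec ( \rho_{t+1} ) - \vec ( \rho_t )$ is trace-zero, so the term $\braket{ c \, \vec ( I ), \cdot }$ vanishes and $\braket{ G_t, \cdot } = \braket{ g_t, \cdot }$. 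Rearranging and using Cauchy--Schwarz, $\underline{P}_t ( \rho_t ) - \underline{P}_t ( \rho_{t+1} ) \le \norm{ g_t }_{H_t^{-1}} r - \omega ( r ) \le \sup_{s \ge 0} ( \norm{ g_t }_{H_t^{-1}} s - \omega ( s ) ) = \omega_* ( \norm{ g_t }_{H_t^{-1}} )$, the last step being the Fenchel conjugate identity of Lemma~\ref{lem_omega}. Invoking the bound $\omega_*(z) \le z^2 / ( 2 ( 1 - z ) ) \le z^2$ valid for $z \le 1/2$ (Lemma~\ref{lem_omega}) finishes the proof. I expect the main obstacle to be exactly this bookkeeping at $\rho_t$: one must justify using the unconstrained self-concordance inequality while checking that the trace constraint annihilates the multiplier direction $\vec ( I )$, so that the dual-norm bound is governed by $g_t$ alone rather than by the full gradient $G_t$.
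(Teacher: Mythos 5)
Your proposal is correct and follows essentially the same route as the paper's own proof: both establish that $\underline{P}_t$ (after vectorization) is $1$-self-concordant via Lemmas~\ref{lem_scLoss}, \ref{lem_scRegularizer}, and \ref{lem_sum_sc}, apply the lower bound of Lemma~\ref{lem_scBounds} at the pair $(\rho_t,\rho_{t+1})$, pass to the dual norm $\norm{\cdot}_{H_t^{-1}}$ by Cauchy--Schwarz, and finish with the Fenchel-conjugate step $zr-\omega(r)\le\omega_*(z)$ and the bound $\omega_*(z)\le z^2/(2(1-z))\le z^2$ for $z\le 1/2$ from Lemma~\ref{lem_omega}. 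The single point of divergence is how the optimality of $\rho_t$ enters. The paper invokes the variational inequality $\braket{\nabla\overline{P}_{t-1}(\vec(\rho_t)),\vec(\rho_{t+1})-\vec(\rho_t)}\ge 0$, which yields the one-sided bound $\braket{G_t,\vec(\rho_t)-\vec(\rho_{t+1})}\le\braket{g_t,\vec(\rho_t)-\vec(\rho_{t+1})}$ in your notation; you instead use the stationarity equation $\nabla\overline{P}_{t-1}(\vec(\rho_t))=c\,\vec(I)$ and annihilate the multiplier against the trace-zero direction $\vec(\rho_{t+1})-\vec(\rho_t)$, obtaining an equality. Your version carries one extra proof obligation that you assert rather than prove: the KKT equation with only the trace constraint active requires $\rho_t$ to be positive definite. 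This is true, but deserves a sentence --- e.g., $P_{t-1}\to+\infty$ as $\rho$ approaches the boundary of the positive semi-definite cone because $-\lambda\log\det\rho$ blows up while the remaining terms are bounded below on $\mathcal{D}_d$ (indeed $\nabla^2\overline{L}_{t-1}\ge\lambda\nabla^2\overline{R}\ge\lambda I$ there), so the minimizer is interior to the cone. The paper's variational-inequality form requires only that the gradient exists at $\rho_t$ and no identification of active constraints, so it is marginally more economical, but both arguments are sound and lead to the identical final computation.
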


\begin{proof}
By Lemma \ref{lem_scLoss}, Lemma \ref{lem_scRegularizer}, and Lemma \ref{lem_sum_sc}, the function $\underline{P}_t$ after vectorization is $1$-self-concordant. 
By Lemma \ref{lem_scBounds}, we write 
\begin{equation}
\underline{P}_t ( \rho_t ) - \underline{P}_t ( \rho_{t + 1} ) \leq \braket{ \nabla \overline{\underline{P}}_t, \vec ( \rho_t ) - \vec ( \rho_{t + 1} ) } - \omega \left( \norm{ \vec ( \rho_t ) - \vec ( \rho_{t + 1} )  }_{H_t} \right) . \label{eq_firstIneq}
\end{equation}
By the optimality condition for $\rho_{t}$, we have
\begin{align*}
& \braket{ \nabla \overline{P}_{t - 1} ( \rho_{t} ), \vec ( \rho_{t + 1} ) - \vec ( \rho_{t} ) } \\
& \quad = \braket{ \nabla \overline{L}_{t - 1} ( \vec ( \rho_{t} ) ) + \mu \nabla \overline{V}_{t - 1} ( \vec ( \rho_{t} ) ), \vec ( \rho_{t + 1} ) - \vec ( \rho_{t} ) } \\
& \quad \geq 0, \quad \forall \rho \in \mathcal{D}_d . 
\end{align*}
Then, we write
\begin{align}
& \braket{ \nabla \overline{\underline{P}}_t, \vec ( \rho_t ) - \vec ( \rho_{t + 1} ) } \notag \\
& \quad = \braket{ \nabla \overline{L}_t ( \vec ( \rho ) ) + \mu \nabla \overline{V}_t ( \vec ( \rho_t ) ), \vec ( \rho_t ) - \vec ( \rho_{t + 1} ) } \notag \\
& \quad \leq \braket{ \nabla \overline{f}_t ( \vec ( \rho_t ) ) + \mu \nabla \overline{V}_t ( \vec ( \rho_t ) ) - \mu \nabla \overline{V}_{t - 1} ( \rho_t ), \vec ( \rho_t ) - \vec ( \rho_{t + 1} ) } \notag \\
& \quad \leq \norm{ \nabla \overline{f}_t ( \vec ( \rho_t ) ) + \mu \nabla \overline{V}_t ( \vec ( \rho_t ) ) - \mu \nabla \overline{V}_{t - 1} ( \rho_t ) }_{H_t^{-1}} \norm{ \vec ( \rho_t ) - \vec ( \rho_{t + 1} ) }_{H_t} \label{eq_secondIneq}.
\end{align}
\added[id=Wei-Fu, comment= {This is the first appearance of the dual-norm in our paper. We did not provide backgrounds on it. I think it's necessary to add this sentence.}]{where the last inequality holds because $ \Vert \cdot \Vert_{H_t} $ and $ \Vert \cdot \Vert_{H_t^{-1}} $ are dual norms. }
Combining the two inequalities \eqref{eq_firstIneq} and \eqref{eq_secondIneq}, we obtain
\begin{align*}
& \underline{P}_t ( \rho_t ) - \underline{P}_t ( \rho_{t + 1} ) \\
& \quad \leq \norm{ \nabla \overline{f}_t ( \vec ( \rho_t ) ) + \mu \nabla \overline{V}_t ( \vec ( \rho_t ) ) - \mu \nabla \overline{V}_{t - 1} ( \rho_t ) }_{H_t^{-1}} \norm{ \vec ( \rho_t ) - \vec ( \rho_{t + 1} ) }_{H_t} \\
& \quad \quad  - \omega \left( \norm{ \vec ( \rho_t ) - \vec ( \rho_{t + 1} )  }_{H_t} \right) . 
\end{align*}
By the definition of the Fenchel conjugate, we have 
\[
\underline{P}_t ( \rho_t ) - \underline{P}_t ( \rho_{t + 1} ) \leq \omega_* \left( \norm{ \nabla \overline{f}_t ( \vec ( \rho_t ) ) + \mu \nabla \overline{V}_t ( \vec ( \rho_t ) ) - \mu \nabla \overline{V}_{t - 1} ( \rho_t ) }_{H_t^{-1}} \right) . 
\]
The lemma follows from Lemma \ref{lem_omega}. 
\end{proof}

%

\begin{lemma} \label{lem_omega_dual_norm}
It holds that $\norm{ w_t }_{H_t^{-1}} \leq 2 \pi_t$. 
\end{lemma}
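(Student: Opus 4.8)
The plan is to bound the dual local norm $\norm{ w_t }_{H_t^{-1}}$ by reducing it to a quantity already controlled by $\pi_t$. Recall that $w_t = \nabla \overline{\varphi}_t ( \vec ( \rho_t ) ) / \tr ( A_t \rho_t )^2$, where $\varphi_t ( \rho ) = \norm{ \vec ( A_t ) }_{H_t ( \rho )^{-1}}^2 = \braket{ \vec ( A_t ), H_t ( \rho )^{-1} \vec ( A_t ) }$. The first step is to differentiate $\varphi_t$ explicitly. Using the standard identity $\nabla ( H^{-1} ) = - H^{-1} ( \nabla H ) H^{-1}$ together with the chain rule (Lemma \ref{lem_chainRule}), the directional derivative of $\overline{\varphi}_t$ in direction $u = \vec ( U )$ is
\[
D \overline{\varphi}_t ( \vec ( \rho_t ) ) [ u ] = - \braket{ \vec ( A_t ), H_t^{-1} \left( D^3 \overline{L}_t ( \vec ( \rho_t ) ) [ u ] \right) H_t^{-1} \vec ( A_t ) } ,
\]
since $H_t ( \rho ) = \nabla^2 \overline{L}_t ( \vec ( \rho ) )$ so that $D H_t ( \rho_t ) [ u ] = D^3 \overline{L}_t ( \vec ( \rho_t ) ) [ u ]$. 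This expresses $\nabla \overline{\varphi}_t$, and hence $w_t$, as a linear functional in $u$ built from the third derivative of $\overline{L}_t$ contracted against $H_t^{-1} \vec ( A_t )$ on both sides.

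The second step is to bound the $H_t^{-1}$-norm of this gradient by invoking self-concordance. By Lemma \ref{lem_scLoss}, Lemma \ref{lem_scRegularizer}, and Lemma \ref{lem_sum_sc}, the function $\overline{L}_t$ is $1$-self-concordant, so by Lemma \ref{lem_sc_equivalentDef} the trilinear form $D^3 \overline{L}_t ( \vec ( \rho_t ) ) [ a, b, c ]$ is bounded by $2 \norm{ a }_{H_t} \norm{ b }_{H_t} \norm{ c }_{H_t}$. The dual norm $\norm{ w_t }_{H_t^{-1}}$ equals the supremum of $\braket{ u, w_t }$ over all $u$ with $\norm{ u }_{H_t} \leq 1$; plugging in the expression for $w_t$ and applying the self-concordance bound with two of the three slots filled by $H_t^{-1} \vec ( A_t )$ yields a factor involving $\norm{ H_t^{-1} \vec ( A_t ) }_{H_t} = \norm{ \vec ( A_t ) }_{H_t^{-1}}$ squared. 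Since $\nabla \overline{f}_t ( \vec ( \rho_t ) ) = \vec ( A_t ) / \tr ( A_t \rho_t )$ by \eqref{eq_gradGrad}, the normalization by $\tr ( A_t \rho_t )^2$ converts $\norm{ \vec ( A_t ) }_{H_t^{-1}}^2 / \tr ( A_t \rho_t )^2$ exactly into $\norm{ \nabla \overline{f}_t ( \vec ( \rho_t ) ) }_{H_t^{-1}}^2 = \pi_t$, producing the target bound $\norm{ w_t }_{H_t^{-1}} \leq 2 \pi_t$.

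I would therefore structure the argument as: first compute $\nabla \overline{\varphi}_t$ via the inverse-Hessian derivative formula; then write $\norm{ w_t }_{H_t^{-1}}$ as a variational quantity over unit-$H_t$-norm directions; and finally apply the self-concordance estimate of Lemma \ref{lem_sc_equivalentDef}, carefully tracking that the two $\vec ( A_t )$ factors each contribute $\norm{ \vec ( A_t ) }_{H_t^{-1}}$ and that the scalar denominators assemble into $\pi_t$. The main obstacle I anticipate is the bookkeeping in the first step: correctly differentiating the inverse Hessian and keeping straight which slots of the third-derivative form are saturated by the test direction $u$ versus by the fixed vector $H_t^{-1} \vec ( A_t )$, so that the $H_t$-norms land in the right places to match the self-concordance inequality. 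The constant $2$ comes directly from the self-concordance constant, so one must verify no extra factor is lost when the product rule generates the symmetric pair of terms in differentiating $H_t^{-1} \cdots H_t^{-1}$, and that these combine rather than double the bound.
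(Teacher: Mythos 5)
Your proposal is correct and follows essentially the same route as the paper's proof: differentiate the inverse Hessian inside $\varphi_t$ via the chain rule to express $\braket{w_t, u}$ as $-D^3 \overline{L}_t ( \vec ( \rho_t ) ) [ u, H_t^{-1} \vec ( A_t ), H_t^{-1} \vec ( A_t ) ] / \tr ( A_t \rho_t )^2$, absorb the two factors of $\tr ( A_t \rho_t )$ into the slots to produce $H_t^{-1} \nabla \overline{f}_t ( \vec ( \rho_t ) )$, and then apply the $1$-self-concordance of $\overline{L}_t$ through Lemma \ref{lem_sc_equivalentDef} and the dual-norm characterization (the paper phrases this last step as $w_t w_t^\hermit \leq 4 \pi_t^2 H_t$ plus Lemma \ref{lem_boundDualLocalNorm}, which is equivalent to your supremum argument). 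The only blemishes are cosmetic: the gradient is $\nabla \overline{f}_t ( \vec ( \rho_t ) ) = - \vec ( A_t ) / \tr ( A_t \rho_t )$ (your missing sign is immaterial since the two slots cancel it), and that formula is implicit in rather than stated by \eqref{eq_gradGrad}.
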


\begin{proof}
For any $u \in \mathcal{H}^{d}$, by Lemma \ref{lem_D-logdet} \added[id=Wei-Fu]{ and the chain rule (Lemma \ref{lem_chainRule}) } , we write 
\begin{align*}
\abs{ \braket{ w_t, u } } & = \left\vert \frac{ D \varphi ( \rho_t ) [ u ] }{ \tr ( A_t \rho_t ) ^ 2 } \right\vert \\
& = \left\vert \frac{ - \tr ( H_t^{-1} \vec ( A_t ) \vec ( A_t )^* H_t^{-1} D H_t ( \rho_t ) [ u ] ) }{ \tr ( A_t \rho_t ) ^ 2 }  \right\vert \\
& = \left\vert \frac{ - D^3 \overline{L}_t ( \vec ( \rho_t ) ) [ u, H_t^{-1} \vec ( A_t ), H_t^{-1} \vec ( A_t ) ] }{ \tr ( A_t \rho_t ) ^ 2 }  \right\vert . 
\end{align*}
where the function $\varphi_t$ is given in Section \ref{sec_miss} \eqref{eq_phi}. 
Given that 
\[
\nabla \overline{f}_t ( \vec ( \rho_t ) ) = \frac{ - \vec ( A_t ) }{ \tr ( A_t \rho_t ) } , 
\]
we obtain
\[
\abs{ \braket{ w_t, u } } = \left\vert - D^3 \overline{L}_t \left( \vec ( \rho_t ) \right) \left[ u, H_t^{-1} \nabla \overline{f}_t ( \vec ( \rho_t ) ), H_t^{-1} \nabla \overline{f}_t ( \vec ( \rho_t ) ) \right] \right\vert . 
\]
Lemma \ref{lem_scLoss} and Lemma \ref{lem_scRegularizer} have established that the vectorized loss functions $\overline{f}_t$ and vectorized regularizer $\overline{R}$ are all $1$-self-concordant. 
Hence, the function $\overline{L}_t$ is also $1$-self-concordant by Lemma \ref{lem_sum_sc}. 
By Lemma \ref{lem_sc_equivalentDef}, we write 
\[
\abs{ \braket{ w_t, u } } \leq 2 \norm{ u }_{H_t} \norm{ H_t^{-1} \nabla \overline{f}_t ( \vec ( \rho_t ) ) }_{H_t}^2 = 2 \pi_t \norm{ u }_{H_t} , 
\]
which implies that 
\[
w_t w_t^* \leq 4 \pi_t^2 H_t . 
\]
It remains to apply Lemma \ref{lem_boundDualLocalNorm}. 
\end{proof}


\section{Implementation of VB-FTRL \added{for LL-OLQS} in \added{Section} \ref{sec_alg}} \label{app_algRunningTime}
\comment[id=Wei-Fu]{The changes package does not support adding too much equations simultaneously. I found that adding all equations is inefficient. Please review all the contents in this section.}

We use the same notation as in Section \ref{sec_alg}. 
To implement VB-FTRL \added{for LL-OLQS}, we need to solve the following optimization problem at the $t+1$-th round, 
\begin{equation}
    \rho_{t+1} \in \argmin_{\rho \in \mathcal{D}_d} P_{t}(\rho). \label{eq_VB_FTRL} 
\end{equation} 
where  
\begin{align*}
& P_t(\rho) = L_t ( \rho ) + \mu V_t ( \rho ) , \\
& L_t ( \rho ) = \sum_{\tau = 1}^t f_{\tau}(\rho) + \lambda R(\rho) , \\
& \added{V_t ( \rho ) \coloneqq \frac{1}{2} \log \det \nabla^2 \overline{L}_t ( \vec ( \rho ) ) ,} \\
& f_{\tau}(\rho) = - \log \tr ( A_\tau \rho )\\
& R( \rho) = - \log \det ( \rho ). 
\end{align*} 

\replaced{It is easily checked that the function $L_t$ is convex.}{The function $L_t(\rho)$ is convex. }
By \deleted{the} Corollary \ref{cor_main} and the VB-convexity of $\overline{L}_t$ (Corollary \ref{cor_vbc_Vt}), the volumetric barrier $V_t(\rho)$ is convex\added{, so the optimization problem \eqref{eq_VB_FTRL} is convex}. 
This allows us to apply standard algorithms, such as cutting plane methods, to solve the convex optimization problem \eqref{eq_VB_FTRL}.
In particular, we \replaced[id=YHL]{summarize}{introduce} the ellipsoid method and \deleted[id=YHL]{provide} its convergence guarantee in Theorem \ref{thm_ellipsoid}. 
\added[id=YHL]{We then specialize the ellipsoid method for implementing VB-FTRL and analyze the per-round computational complexity in Theorem \ref{thm_runningTime}.}

\subsection{Review of Ellipsoid Method}

\comment[id=YHL]{I rewrote the following.}
Let $\mathcal{H}$ be an $d'$-dimensional real Hilbert space. 
Consider the problem of minimizing a convex differentiable function $f$ over a non-empty closed convex set $\mathcal{X} \subset \mathcal{X}$. 
The ellipsoid method proceeds as follows. 
\begin{itemize}
\item Let the initial ellipsoid be 
\[
E_1 = \set{ x \in \mathcal{H} | \braket{ x - x_1, H_1^{-1} ( x - x_1 ) } \leq 1 } , 
\]
for some $x_1 \in \mathcal{H}$ and positive definite matrix $H_1$ such that $\mathcal{X} \subseteq E_1$. 
\item For every $k \geq 1$, compute 
\begin{align*}
& x_{k + 1} = x_k - \frac{1}{d' + 1} \frac{H_k g_k}{\sqrt{ g_k^\hermit H_k g_k }}  , \\
& H_{k + 1} = \frac{d'^2}{d'^2 - 1} \left( H_k - \frac{2}{d' + 1} \frac{H_k g_k g_k^\hermit H_k}{g_k^\hermit H_k g_k} \right) , 
\end{align*}
for some $g_k$ satisfying
\[
\braket{ g_k, x - x_k } \leq 0 , \quad \forall x \in \argmin_{x \in \mathcal{X}} f ( x ) . 
\]
\end{itemize}

\begin{theorem}[{\citet[Lemma 3.3]{Lee2025}}] \label{thm_ellipsoid}
Define 
\[
E_k \coloneqq \set{ x \in \mathcal{H} | \braket{ x - x_k, H_k^{-1} ( x - x_k ) } \leq 1 } . 
\]
The ellipsoid method achieves the following: 
\begin{itemize}
\item The ellipsoids $E_k$ contain the minimizer $x_\star$. 
\item The volume of $E_k$ decays at a linear rate: 
\[
\operatorname{vol} ( E_{K + 1} ) \leq \eu^{- K / ( 2 d' + 2 )} \operatorname{vol} ( E_1 ) . 
\]
\end{itemize}
\end{theorem}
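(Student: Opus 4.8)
The plan is to establish the two bullet points separately: the containment of the minimizer by induction on $k$, and the volume decay by a single determinant computation that is then telescoped.

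For containment, I would induct on $k$. The base case is immediate: the hypothesis $\mathcal{X} \subseteq E_1$ forces every minimizer in $\mathcal{X}$ to lie in $E_1$. For the step, assume some minimizer $x_\star$ lies in $E_k$. The defining property of the cut, $\braket{g_k, x - x_k} \leq 0$ for all $x \in \argmin_{x \in \mathcal{X}} f$, then places $x_\star$ in the half-ellipsoid $E_k \cap \set{x \in \mathcal{H} | \braket{g_k, x - x_k} \leq 0}$. It thus suffices to show that the update produces an ellipsoid $E_{k+1}$ containing that entire half-ellipsoid. I would verify this through the whitening change of variables $y = H_k^{-1/2}(x - x_k)$, under which $E_k$ becomes the unit ball and the cut becomes a halfspace through the origin with unit normal $\hat g_k = H_k^{1/2} g_k / \sqrt{g_k^\hermit H_k g_k}$; in these coordinates the stated formulas for $x_{k+1}$ and $H_{k+1}$ are precisely the classical description of the minimum-volume ellipsoid covering the corresponding half-ball, so containment reduces to an explicit scalar inequality along the $\hat g_k$ direction.

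For the volume decay, I would use that $\operatorname{vol}(E_k)$ is proportional to $\sqrt{\det H_k}$, so the per-step ratio equals $\sqrt{\det H_{k+1}/\det H_k}$. Rewriting the update as $H_{k+1} = \frac{d'^2}{d'^2 - 1} H_k^{1/2}\left(I - \frac{2}{d'+1}\hat g_k \hat g_k^\hermit\right) H_k^{1/2}$ and using that $\hat g_k \hat g_k^\hermit$ is a rank-one orthogonal projection, the determinant ratio factorizes cleanly as $\det H_{k+1}/\det H_k = \left(\frac{d'^2}{d'^2-1}\right)^{d'} \frac{d'-1}{d'+1}$. Taking logarithms and applying elementary bounds on $\log(1 \pm 1/d')$ yields a per-step contraction of at most $\eu^{-1/(2d'+2)}$; telescoping $\operatorname{vol}(E_{k+1}) \leq \eu^{-1/(2d'+2)} \operatorname{vol}(E_k)$ over $k = 1, \dots, K$ gives the claimed bound.

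I expect the containment step to be the main obstacle, not the volume estimate. Once the rank-one structure is exploited, the determinant ratio is a clean closed form and the final logarithmic inequality is routine. The containment verification, by contrast, is where the specific coefficients $1/(d'+1)$ and $2/(d'+1)$ are forced: one must confirm that the updated ellipsoid covers every point of the half-ball and touches its boundary exactly in the critical direction, which is the delicate algebraic check underlying the whole method.
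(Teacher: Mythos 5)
Your proposal is correct: the induction-plus-whitening argument for containment of the half-ellipsoid and the rank-one determinant computation $\det H_{k+1}/\det H_k = \bigl(\tfrac{d'^2}{d'^2-1}\bigr)^{d'}\tfrac{d'-1}{d'+1}$, followed by $1+x\le \eu^x$ bounds and telescoping, does yield exactly the stated per-step factor $\eu^{-1/(2d'+2)}$. Note that the paper itself gives no proof of this statement—it is imported verbatim from the cited reference \citep[Lemma 3.3]{Lee2025}—and your argument is precisely the standard one found there, so the two approaches coincide.
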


\subsection{Specializing Ellipsoid Method for VB-FTRL}

\comment[id=YHL]{I rewrote the following.}
We now detail the implementation of the ellipsoid method for solving the optimization problem \eqref{eq_VB_FTRL}. 
In our case, $\mathcal{H}$ is the space of vectorized $d \times d$ Hermitian matrices \added[id=Wei-Fu]{equipped with the standard complex inner product}; 
the set $\mathcal{X}$ corresponds to the set of vectorized \added[id=Wei-Fu]{$d \times d$} density matrices $\mathcal{D}$; 
and the function $f$ to be minimized is $\overline{P}_t$. 
Since the set of extreme points of $\mathcal{X}$ coincides with the set of rank-one projection matrices and has unit radius, we have $\operatorname{vol} ( E_1 ) = 1$. 
Thus, Theorem \ref{thm_ellipsoid} implies that the ellipsoid algorithm identifies an ellipsoid of volume $\varepsilon$ containing the minimizer in $O ( d^2 \log ( 1 / \varepsilon ) )$ iterations. 


Then, we discuss how to set the vectors $g_k$ in the ellipsoid method. 
\begin{itemize}
\item If $\vec^{-1} (x_k)$ is positive definite, then the optimality condition ensures that setting $g_k = \nabla \overline{P}_t ( x_k )$ suffices.
\item Otherwise, $\nabla \overline{P}_t ( x_k )$ is not well defined. 
Nevertheless, it is easily checked that setting $g_k = - \vec ( v v^\hermit )$, where $v$ is an eigenvector of $\vec^{-1} ( x_k )$ corresponding to a negative eigenvalue, suffices. 
\end{itemize}

Indeed, for the first case, it is unnecessary to compute $\nabla \overline{P}_t ( \vec^{-1} ( x_k ) )$. 
Let $H_k = \sum_j \lambda_j u_j u_j^\hermit$ be the \replaced{eigendecomposition}{eigenvalue decomposition} of $H_k$.
Define $d_j \coloneqq D \overline{P}_t ( x_k ) ( u_j )$. 
We can write the iteration rule equivalently as
\begin{align*}
&x_{k + 1} = x_k - \sum_{j=1}^{d^2} \left( \frac{1}{d^2+1} \frac{ \lambda_j d_j  }{ \sqrt{ \sum_{j=1}^{d^2} \lambda_j d_j^2 } } u_j \right) , \\
&H_{k + 1} = \frac{d^4}{d^4 - 1} \left( H_k - \frac{2}{d^2+1} \frac{ \sum_{j=1}^{d^2}  ( \lambda_j^2 d_j^2 u_j u_j^\hermit ) }{  \sum_{j=1}^{d^2} \lambda_j d_j^2 } \right) , 
\end{align*}
showing it suffices to compute the directional derivatives of $\overline{P}_t$. 
Below we provide the explicit formula of the directional derivative. 


\begin{lemma} \label{lem_DerivativeP_t}
    For any Hermitian positive semi-definite matrix $\rho \in \mathbb{C}^{d \times d}$ and $U \in \mathbb{C}^{d \times d}$, the directional derivative of 
    $\overline{P_t}$ 
    is given by: 
    \begin{align*}
        & D \overline{P_t}( \vec( \rho ) ) [\vec(U)] \\
        & \quad = - \sum_{\tau = 1}^t \frac{ \tr(A_{\tau} U)}{ \tr ( A_\tau \rho ) }  + \lambda \tr(\rho^{-1}U) + 2\mu  \sum_{\tau=1}^t \frac{ \tr( A_\tau U )}{ \left( \tr ( A_\tau \rho ) \right)^2} \Vert \vec( A_\tau ) \Vert_{\nabla^{-2} \overline{L}_t ( \vec ( \rho ) )} \\
        & \quad \quad \, \, - \frac{\lambda\mu}{2} \tr \left( \nabla^{-2} \overline{L}_t ( \vec ( \rho ) )  \left( (\rho^{-1} U \rho^{-1} )^\transp  \otimes \rho^{-1} + (\rho^{-1} )^\transp  \otimes (\rho^{-1} U \rho^{-1} ) \right) \right) , 
    \end{align*}
    where 
    \begin{equation}
    \nabla^2 \overline{L}_t ( \vec ( \rho ) ) = \sum_{\tau = 1}^t \frac{ 1}{ \left( \tr ( A_\tau \rho ) \right)^2} \vec(A_\tau)\vec(A_\tau)^*  + \lambda \left[ \left( \rho^{-1} \right)^\transp \otimes \rho^{-1} \right]. \label{eq_D2_Lt}
    \end{equation}
\end{lemma}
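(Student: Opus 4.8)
The plan is to use linearity of the G\^ateaux derivative and differentiate the three groups of terms in $\overline{P}_t = \sum_{\tau=1}^t \overline{f}_\tau + \lambda \overline{R} + \mu \overline{V}_t$ separately. The first two groups are immediate from results already in the excerpt. For each loss term, the chain rule (Lemma \ref{lem_chainRule}) applied to $-\log$ gives $D\overline{f}_\tau(\vec(\rho))[\vec(U)] = -\tr(A_\tau U)/\tr(A_\tau \rho)$, and for the regularizer Lemma \ref{lem_D-logdet} gives $DR(\rho)[U] = -\tr(\rho^{-1}U)$. Summing these accounts for the first two terms of the claimed formula, so the entire difficulty is concentrated in differentiating the volumetric barrier $V_t(\rho) = \tfrac12 \log\det \nabla^2 \overline{L}_t(\vec(\rho))$.

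Writing $H_t(\rho) \coloneqq \nabla^2 \overline{L}_t(\vec(\rho))$, I would first combine the chain rule with the standard identity $D(\log\det M)[\,\cdot\,] = \tr(M^{-1} D M[\,\cdot\,])$ to obtain
\[
D V_t(\rho)[U] = \tfrac12 \tr\left( H_t(\rho)^{-1}\, D H_t(\rho)[U]\right),
\]
which reduces the problem to computing the directional derivative of the matrix-valued map $\rho \mapsto H_t(\rho)$, i.e.\ the third-order object $D^3 \overline{L}_t$. Using the explicit expression \eqref{eq_D2_Lt} for $H_t(\rho)$, I would split this derivative into the loss contribution and the log-determinant contribution.

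For the loss contribution, the matrix $\vec(A_\tau)\vec(A_\tau)^*$ is constant in $\rho$, so only the scalar prefactor $1/(\tr(A_\tau \rho))^2$ needs to be differentiated; this yields a term proportional to $\tr(A_\tau U)\,\vec(A_\tau)\vec(A_\tau)^*/(\tr(A_\tau\rho))^3$, and taking the trace against $H_t(\rho)^{-1}$ collapses the rank-one factor into the local norm $\|\vec(A_\tau)\|^2_{\nabla^{-2}\overline{L}_t(\vec(\rho))}$. For the log-determinant contribution, differentiating $\lambda (\rho^{-1})^\transp \otimes \rho^{-1}$ is exactly the content of Lemma \ref{lem_third_order_derivative_R}, which supplies $-(\rho^{-1}U\rho^{-1})^\transp \otimes \rho^{-1} - (\rho^{-1})^\transp \otimes (\rho^{-1}U\rho^{-1})$; substituting this into the trace against $H_t(\rho)^{-1}$ produces the final Kronecker-product term of the claimed formula.

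I expect the main obstacle to be the bookkeeping in this last step: one must track the interaction between the vectorization, the Kronecker products, and the cyclic-trace manipulations, and correctly invoke $D(\rho^{-1})[U] = -\rho^{-1}U\rho^{-1}$ (as recorded in the proof of Lemma \ref{lem_D-logdet}) to assemble the third-order derivative of $\overline{R}$. No new inequality is needed here; once $D H_t(\rho)[U]$ is built from Lemma \ref{lem_third_order_derivative_R} and the elementary scalar differentiation above, the result follows by linearity and the trace and vectorization identities for the Kronecker product collected in Lemma \ref{lem_kronecker}.
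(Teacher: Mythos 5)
Your proposal follows essentially the same route as the paper's proof: split $\overline{P}_t$ by linearity, compute $D\overline{L}_t$ directly, reduce $D\overline{V}_t$ via the chain rule and the $\log\det$ identity to $\tfrac12\tr\bigl(\nabla^{-2}\overline{L}_t\, D^3\overline{L}_t[\vec(U)]\bigr)$, and assemble $D^3\overline{L}_t$ from the scalar-prefactor differentiation of the rank-one loss terms together with Lemma \ref{lem_third_order_derivative_R}. As a minor remark, your bookkeeping (denominator $(\tr(A_\tau\rho))^3$ and the \emph{squared} local norm $\|\vec(A_\tau)\|^2_{\nabla^{-2}\overline{L}_t(\vec(\rho))}$) is actually the correct outcome of this computation, whereas the paper's displayed formula drops the square and shifts the exponent in passing from the first to the second line of its derivation.
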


\begin{proof}
\added{Recall that $\overline{P}_t = \overline{L}_t + \mu \overline{V}_t$. 
We derive the directional derivatives of $\overline{L}_t$ and $\overline{V}_t$ separately.}
A direct calculation gives 
\begin{equation}
\nabla \overline{L_t}(\vec( \rho )) [\vec(U)] = - \sum_{\tau = 1}^t \frac{ \tr(A_{\tau} U)}{ \tr ( A_\tau \rho ) }  + \lambda \tr(\rho^{-1}U) . \label{eq_D_Lt}
\end{equation}
\replaced{Applying}{Apply} the chain rule (Lemma \ref{lem_chainRule}), we \replaced{obtain}{have}
\begin{equation}
    D \overline{V_t}(\vec(\rho)) [\vec(U)] = \frac{1}{2} \tr \left( \nabla^{-2} \overline{L}_t ( \vec ( \rho ) ) D^3 \overline{L}_t \left( \vec ( \rho ) \right)[\vec(U)] \right) . 
    \label{eq_D_Volt}
\end{equation}
A computation similar to the one in \added{the proof of} Lemma \ref{lem_relativeSmoothness} gives \added{the formula \eqref{eq_D2_Lt}}. 
By Lemma \ref{lem_third_order_derivative_R}, we have 
\begin{align}
    & D^3 \overline{L}_t \left( \vec ( \rho ) \right)[\vec(U)] \notag \\ 
    & \quad = \sum_{\tau = 1}^t -2 \frac{ \tr( A_\tau U )}{ \left( \tr ( A_\tau \rho ) \right)^3} \vec(A_\tau)\vec(A_\tau)^*  + \lambda D^3 \overline{R} \left( \vec ( \rho ) \right)[\vec(U)] \notag \\
    & \quad = \sum_{\tau = 1}^t 2 \frac{  \tr( A_\tau U )}{ \left( \tr ( A_\tau \rho ) \right)^2} \vec(A_\tau)\vec(A_\tau)^*+ \lambda
    \left( -(\rho^{-1} U \rho^{-1} )^\transp  \otimes \rho^{-1} - (\rho^{-1} )^\transp  \otimes (\rho^{-1} U \rho^{-1} ) \right) . 
    \label{eq_D3_Lt}
\end{align}
The \replaced{lemma}{theorem} follows by combining the equations \eqref{eq_D_Lt}, \eqref{eq_D_Volt}, and \eqref{eq_D3_Lt}. 


\end{proof}

\added{Finally, we analyze the per-round computational complexity of VB-FTRL for LL-OLQS as described in Section \ref{sec_alg}.}

\begin{theorem}\label{thm_runningTime}
VB-FTRL for LL-OLQS can be implemented with a per-round computational complexity of $O ( T d^8 \log ( 1 / \varepsilon ) + d^{10} \log ( 1 / \varepsilon ) )$. 
\end{theorem}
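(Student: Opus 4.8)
The plan is to analyze the per-round cost by accounting for the ellipsoid method's iteration count and the cost of each iteration. From the preceding discussion, Theorem~\ref{thm_ellipsoid} together with $\operatorname{vol}(E_1) = 1$ shows that the ellipsoid method needs $O(d^2 \log(1/\varepsilon))$ iterations to localize the minimizer $\rho_{t+1}$ within an ellipsoid of volume $\varepsilon$, since the ambient real dimension is $d' = d^2$. Thus the total per-round cost is $O(d^2 \log(1/\varepsilon))$ times the cost of a single ellipsoid iteration, and the crux of the proof is to bound this single-iteration cost.

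First I would identify what must be computed in each ellipsoid iteration. By the eigendecomposition-based reformulation given above, the iterate and shape-matrix updates reduce to computing the directional derivatives $d_j = D\overline{P}_t(x_k)[u_j]$ along the $d^2$ eigenvectors $u_j$ of $H_k$, plus an eigendecomposition of $H_k$ (a $d^2 \times d^2$ matrix) and the rank-one and scalar updates, which cost $O((d^2)^3) = O(d^6)$. The dominant cost therefore comes from evaluating the $d^2$ directional derivatives using the explicit formula in Lemma~\ref{lem_DerivativeP_t}. I would also handle the separation-oracle branch where $\vec^{-1}(x_k)$ is not positive definite: there one only needs an eigenvector for a negative eigenvalue, costing $O(d^3)$, which is dominated by the positive-definite branch.

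The key step is to bound the cost of evaluating $D\overline{P}_t(\vec(\rho))[\vec(U)]$ via Lemma~\ref{lem_DerivativeP_t}. I would first note the one-time setup per ellipsoid iteration: forming the $d^2 \times d^2$ Hessian $\nabla^2 \overline{L}_t(\vec(\rho))$ from the formula \eqref{eq_D2_Lt} costs $O(T d^4)$ (summing $T$ rank-one $d^2 \times d^2$ outer products), and inverting it to obtain $\nabla^{-2}\overline{L}_t(\vec(\rho))$ costs $O(d^6)$. Given the inverse Hessian, each directional derivative evaluation requires the trace terms, the $\lambda$-term, the sum over $\tau$ of $T$ scalar contributions weighted by $\Vert \vec(A_\tau)\Vert_{\nabla^{-2}\overline{L}_t}$, and the final trace against a Kronecker-structured $d^2 \times d^2$ matrix; the dominant piece is the last trace, costing $O(d^4)$ per evaluation, so computing all $d^2$ directional derivatives costs $O(d^6)$. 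Combining, one ellipsoid iteration costs $O(T d^4 + d^6)$ for setup. The apparent obstacle is that naively recomputing derivatives seems expensive, but the setup is shared across the $d^2$ directions within an iteration.

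Putting the pieces together, I would multiply the per-iteration cost by the iteration count: a single ellipsoid iteration costs $O(Td^4 + d^6)$, and there are $O(d^2 \log(1/\varepsilon))$ iterations, yielding a per-round complexity of $O\bigl((Td^4 + d^6)\, d^2 \log(1/\varepsilon)\bigr) = O(T d^6 \log(1/\varepsilon) + d^8 \log(1/\varepsilon))$. To reach the stated bound $O(T d^8 \log(1/\varepsilon) + d^{10}\log(1/\varepsilon))$, I expect the honest accounting to be more conservative — in particular, recomputing $\nabla^{-2}\overline{L}_t(\vec(\rho))$ at each \emph{point} $x_k$ rather than once, and the $O(d^4)$-per-direction traces summed over $d^2$ directions and over the $\tau$-sum of length $T$, giving an extra factor of $d^2$ in each term. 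The main obstacle is therefore the careful bookkeeping: tracking which quantities depend on the current iterate $x_k$ (forcing recomputation of the Hessian inverse and the Kronecker traces) versus which are fixed, and confirming that the dominant term is indeed the repeated evaluation of the volumetric-barrier gradient through the inverse Hessian and the third-order derivative tensor. I would present these cost estimates as a short sequence of displayed bounds, then conclude by substituting $d' = d^2$ into the iteration count and collecting terms.
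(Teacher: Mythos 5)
Your proposal is correct and follows the same skeleton as the paper's proof: ambient real dimension $d' = d^2$, hence $O(d^2 \log(1/\varepsilon))$ ellipsoid iterations by Theorem~\ref{thm_ellipsoid} with $\operatorname{vol}(E_1) = 1$, multiplied by the cost of evaluating the directional derivatives $d_j$ through Lemma~\ref{lem_DerivativeP_t}. The one genuine difference is in the bookkeeping. The paper charges the full cost $O(t d^4 + d^6)$ to \emph{each} of the $d^2$ directional derivatives --- i.e., it counts the $O(d^6)$ inversion of $\nabla^2 \overline{L}_t(\vec(\rho))$ and the length-$t$ sums as if recomputed per direction --- giving $O(t d^6 + d^8)$ per iteration and the stated $O(Td^8 \log(1/\varepsilon) + d^{10}\log(1/\varepsilon))$ overall. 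You instead observe that the inverse Hessian, the Hessian assembly, and the norms $\norm{\vec(A_\tau)}_{\nabla^{-2}\overline{L}_t(\vec(\rho))}$ depend only on the current point $x_k$ and not on the direction $u_j$, so they can be computed once per ellipsoid iteration and shared across all $d^2$ directions; this amortization is legitimate and yields a per-iteration cost of $O(Td^4 + d^6)$ and hence a per-round complexity of $O(Td^6\log(1/\varepsilon) + d^8\log(1/\varepsilon))$, which is a factor of $d^2$ \emph{stronger} than the theorem and in particular implies it. Your closing remark correctly reconciles the two: dropping the amortization recovers the paper's figures exactly. The only imprecision is your claim that within a single directional-derivative evaluation the final Kronecker trace ($O(d^4)$) dominates the $\tau$-sum; when $T \gg d^2$ the $O(Td^2)$ traces $\tr(A_\tau U)$ dominate instead, but this does not affect your per-iteration total $O(Td^4 + d^6)$ or the conclusion.
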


\begin{proof}
The computational complexity of computing the trace of the product of two $d^2 \times d^2$ matrices is $O ( d^4 )$. The computational complexity of computing the inverse of a $d^2 \times d^2$ matrix is $O ( d^6 )$. 
Therefore, computing each $d_j$ at the $t$-th round takes $O ( t d^4 + d^6 )$ time, and one iteration of the ellipsoid method at the $t$-th round takes $O ( t d^6 + d^8 )$ time. 
Recall that the iteration complexity of the ellipsoid method is $O ( d^2 \log ( 1 / \varepsilon ) )$. 
Thus, the per-round computational complexity of VB-FTRL for LL-OLQS is $O ( T d^8 \log ( 1 / \varepsilon ) + d^{10} \log ( 1 / \varepsilon ) )$. 
\end{proof}

\end{document}